\numberwithin{equation}{section}
\tikzstyle{circ}=[circle,draw,inner sep=1pt]
\tikzstyle{hexagon}=[draw, regular polygon,regular polygon sides=6,inner sep=2pt]
\tikzset{>=stealth}
\newenvironment{claim}{  \begin{mdframed}[linecolor=black!0,backgroundcolor=black!10]\noindent\itshape\ignorespaces}{\end{mdframed}}
\renewenvironment{figure}[1][]{
  \begin{originalfigure}[#1]
    \begin{mdframed}[linecolor=black!0,backgroundcolor=white]
}{
    \end{mdframed}
  \end{originalfigure}
}
\def\bR{\mathbb{R}}
\def\bZ{\mathbb{Z}}
\def\cA{\mathcal{A}}
\def\Rep{\mathop{\mathrm{Rep}}}
\def\Com{\mathop{\mathrm{Com}}}
\def\cl{\mathop{\mathrm{cl}}}
\theoremstyle{plain}
\newcounter{them}
\numberwithin{them}{section}
\newtheorem{defn}[them]{Definition}
\newtheorem{notn}[them]{Notation}
\newtheorem{prop}[them]{Proposition}
\newtheorem{thm}[them]{Theorem}
\theoremstyle{remark}
\newtheorem{rem}[them]{Remark}
\newtheorem{ex}[them]{Example}
\def\ZZ{\mathbb{Z}}
\newcommand{\bea}{\begin{eqnarray}}
\newcommand{\eea}{\end{eqnarray}}
\def\no{\nonumber}
\def\bC{\mathbb{C}}
\def\eps{\epsilon}
\def\half{{1\over 2}}
\def\cN{\mathcal{N}}
\def\Ab{\mathop{\mathrm{Ab}}}
\def\Gr{\mathop{\mathrm{Gr}}}
\def\Inv{\mathop{\mathrm{Inv}}}
\def\Conj{\mathop{\mathrm{Conj}}}
\begin{document}

\begin{titlepage}

\begin{flushright}
\end{flushright}

\vskip 3cm

\begin{center}

{\Large \bfseries On a class of selection rules without group actions\\[1em]
in field theory and string theory}

\vskip 2cm

Justin Kaidi$^1$,
Yuji Tachikawa$^2$, and
Hao Y. Zhang$^2$
\vskip 1cm

\begin{tabular}{ll}

1 & Department of Physics, University of Washington,
 Seattle, WA, 98195, USA \\
2 & Kavli Institute for the Physics and Mathematics of the Universe (WPI), \\
& University of Tokyo, Kashiwa, Chiba 277-8583, Japan \\

\end{tabular}

\vskip 2cm

\end{center}

\noindent

We discuss a class of selection rules which 
i) do not come from group actions on fields,
ii) are exact at tree level in perturbation theory,
iii) are increasingly violated as  the loop order is raised, and 
iv) eventually reduce to selection rules associated with an ordinary group symmetry.
We start from basic field-theoretical examples in which fields are labeled by conjugacy classes rather than representations of a group, and discuss generalizations using fusion algebras or hypergroups. 
We also discuss how such selection rules arise naturally in string theory, such as for non-Abelian orbifolds or other cases with non-invertible worldsheet symmetries.

\end{titlepage}

\setcounter{tocdepth}{2}
\tableofcontents

\section{Introduction and Summary}
\label{sec:introduction}
Selection rules are the most elementary manifestation of a group-like symmetry in any quantum system, including in quantum field theory. 
Indeed, there are results in algebraic quantum field theory which effectively say that any conservation law which holds non-perturbatively in four or more spacetime dimensions  comes from group actions on fields \cite{Doplicher:1990pn}.
There are, however, many known ``selection rules'' which are not directly associated with group-like symmetries. These are obeyed at low orders in perturbation theory, but are  violated at higher loop order.\footnote{%
Readers might recall the restrictions on allowed helicities in gluon scattering \cite{Parke:1986gb,Witten:2003nn} or the two-loop vanishing of the electron dipole moment in the Standard Model \cite{Shabalin:1978rs,Donoghue:1977bw,Czarnecki:1996rx}.}

In this paper, we introduce a general class of selection rules in quantum field theories which do not come from group actions on fields. 
These selection rules arose from the authors' study of the spacetime manifestation of non-invertible symmetries on the worldsheet in perturbative string theory.
Without further ado, let us describe our selection rules.

To discuss ordinary selection rules for a symmetry described by a group $G$, each field $\phi_i$ is labelled by a particular representation $R_i$ of $G$,
such that $(\phi_i)^*$ belongs to $\overline{R_i}$.
Then a process involving incoming fields $\phi_{1,2,\ldots,n}$ and outgoing fields $\phi_{n+1,n+2,\ldots,m}$ is non-vanishing only when \begin{equation}
\mathrm{id} \subset \overline{R_1}\overline{ R_2} \cdots \overline{R_n} R_{n+1} R_{n+2}\cdots R_m 
\end{equation}
where $\mathrm{id}$ is the identity representation.

The simplest examples of our more general selection rules still involve a group $G$, 
but now each field $\phi_i$ is labeled by a \textit{conjugacy class} $[g_i]$ of $G$,
such that $(\phi_i)^*$ belongs to $\overline{[g_i]}=[g_i^{-1}]$.
We will then be interested in theories for which every interaction in the Lagrangian $\phi_1 \dots \phi_n \subset \mathcal{L}$ satifies $\tilde g_1 \dots \tilde g_n = e$ for some $\tilde g_i \in [g_i]$, where $e\in G$ is the identity. As we will see below, this constraint on the Lagrangian extends to a constraint on \emph{tree-level} processes. In particular, a tree-level process involving incoming fields $\phi_{1,2,\ldots,n}$ and outgoing fields $\phi_{n+1,n+2,\ldots,m}$ is non-vanishing only when we have  \begin{equation}
\label{eq:conjclassexample}
\tilde g_1^{-1} \tilde g_2^{-1} \cdots \tilde g_n^{-1} \tilde g_{n+1} \tilde g_{n+2} \cdots \tilde g_{m} =e
\end{equation}
for some suitably chosen $\tilde g_i \in [g_i]$.
This selection rule holds for arbitrary number of external fields at tree-level.
However, it is violated at loop order (see (\ref{eq:claim1})), and at sufficiently high loop order it reduces to standard selection rules coming from a symmetry group $\Ab[G]:=G/[G,G]$, the abelianization of $G$.

More generally, our selection rules will involve a fusion algebra $\cA$ with fusion rules $ab = \sum_{c} N^{c}_{ab} c$. 
We label our fields $\phi_i$ by elements $a_i$ of this algebra, and demand that all terms appearing in the Lagrangian are consistent with the fusion rules, namely that $e \prec a_1 \dots a_n$, meaning that the coefficient of the identity $e$ in the decomposition of the right-hand side is non-zero.
As before, this constraint extends to a constraint on \emph{tree-level} processes. Indeed, labelling the incoming fields by $a_1, \dots, a_n$ and the outgoing fields by $a_{n+1}, \dots, a_m$, the non-vanishing tree-level processes will be seen to satisfy \begin{equation}
e \prec \overline{a_1} \overline{a_2} \cdots \overline{a_n} a_{n+1} a_{n+2} \cdots a_m ~.
\end{equation}
This selection rule holds at tree-level for arbitrary number of external fields, but is increasingly violated at higher loop order (see (\ref{eq:claim2})), 
eventually stabilizing to the selection rules coming from a certain Abelian group, to be specified below.
This class of examples includes as a subset the ones described in the previous paragraph, 
where we make use of the subalgebra of the group algebra generated by conjugacy classes. 

Some string theorists will immediately recognize our first example (\ref{eq:conjclassexample}) as the selection rules for an orbifold by a non-Abelian group $G$; in the modern language, the worldsheet theory of a non-Abelian orbifold by $G$ has a non-invertible symmetry given by $\Rep(G)$.\footnote{Moreover, the fact that such selection rules can get modified beyond tree level was already mentioned in a footnote of one of the original papers about interactions on orbifolds \cite{Hamidi:1986vh}.}
Our general case corresponds to the spacetime selection rules coming from a more general non-invertible symmetry on the worldsheet.

The rest of this paper is organized as follows.
In Section~\ref{basics},  
we introduce and discuss the details of our selection rules,
first by studying selection rules dictated by conjugacy classes of groups,
and then by generalizing to the case of more general fusion algebras.
In Section~\ref{sec:CaseStudies},
we present various concrete examples of our selection rules in the context of perturbative string theory, including non-Abelian orbifolds, worldsheet theories having Ising symmetry, strings propagating on $S^1/\ZZ_2$, and worldsheet theories having TY$(\ZZ_3)$ symmetry.

We also provide three appendices. 
In Appendix \ref{app:hypergroups} we provide a review of the theory of finite hypergroups, which slightly generalize fusion algebras. This material is not new, but is included for the reader's convenience.
In Appendix \ref{app:WZW}, we give a detailed discussion of the selection rules for fields whose fusion rules are dictated by a WZW fusion category.
Finally, in Appendix \ref{app:lowrank} we give examples of low-rank fusion algebras whose selection rules are (partially) preserved at one-loop, and are completely broken only at higher-loop order.
\newline

\noindent
\textbf{Note added:} While this paper was nearing completion, we learned of another work \cite{upcoming} that will explore the spacetime interpretation of non-invertible symmetries on the string worldsheet.

\section{Basics}
\label{basics}
\subsection{From conjugacy classes of groups}
\label{sec:conj}
\paragraph{Assumptions:}
Let us begin by studying selection rules based on conjugacy classes of groups.
Fix a finite group $G$, and
 consider a perturbative quantum field theory where each field $\phi_i$ is labeled by a conjugacy class $[g_i]$, a particular element of which is $g_i\in G$. 
We  assume that the conjugate field $(\phi_i)^*$ is labeled by the class $[g_i^{-1}]$.
We use this operation to regard every incoming line labeled by $[g_i]$ as an outgoing line labeled by $[g_i^{-1}]$.

Our fundamental assumption is  that every bare interaction term in the Lagrangian, \begin{equation}
O = \phi_1 \phi_2 \cdots \phi_n~,
\end{equation} satisfies the condition that there exists $\tilde g_i \in [ g_i]$ such that 
\begin{equation}
\label{eq:grouplikemaincond}
\tilde g_1\cdots \tilde g_n = e~.
\end{equation}
Here and below, we omit Lorentz indices or spacetime derivatives, which play no role in this paper.
We also note that the above condition does not depend on the ordering of fields $\phi_1$, \ldots, $\phi_n$ within $O$, since \begin{equation}
\tilde g_i \tilde g_{i+1}
= \tilde g_{i+1} (\tilde g_{i+1}^{-1} \tilde g_i \tilde g_{i+1})
= \tilde g_{i+1} \hat g_i 
\end{equation} where once again $\hat g_i\in [g_i]$.

\paragraph{When $G$ is Abelian:}
Note that when $G$ is Abelian, each conjugacy class contains a single element,
and therefore labeling fields by conjugacy classes is the same as labeling fields by group elements.
Therefore, in that case, we simply have a theory whose symmetry is $G$ (or if one is more mathematically inclined, the Pontryagin dual $\hat G$ of $G$).
Such selection rules are well-known to be preserved at all-loop order.
This means that our main interest lies in the case when $G$ is non-Abelian.

\paragraph{Tree-level properties:}
We begin by showing that any nonzero tree diagram generated from these bare interaction terms must satisfy the same condition (\ref{eq:grouplikemaincond}), where each $\tilde g_i$ is associated to an external leg of the diagram. 
We  prove this by means of mathematical induction on the number of vertices. 

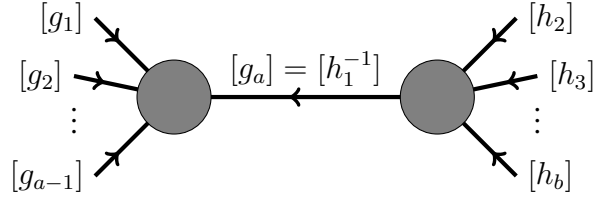
\begin{figure}[!tbp]
\begin{center}
\begin{tikzpicture}[baseline=0,scale = 0.7, baseline=-20]
\begin{scope}[decoration={
    markings,
    mark=at position 0.8 with {\arrow{<}}}]
\draw[ultra thick,postaction={decorate}] (0,0) -- (3,0) ;
\draw[ultra thick] (3,0) -- (5,0) ;
\draw[ultra thick,postaction={decorate}] (0,0) -- (-1.5,1.5);
\draw[ultra thick,postaction={decorate}] (0,0) -- (-1.9,0.4);
\draw[ultra thick,postaction={decorate}] (0,0) -- (-1.5,-1.5);
\draw[ultra thick,postaction={decorate}] (5,0) -- (6.5,1.5);
\draw[ultra thick,postaction={decorate}] (5,0) -- (6.9,0.4);
\draw[ultra thick,postaction={decorate}] (5,0) -- (6.5,-1.5);
\end{scope}

\draw[fill = gray] (0,0) circle (20pt);
\draw[fill = gray] (5,0) circle (20pt);

\node[left] at (-1.5,1.5) {$[g_1]$};
\node[left] at (-1.9,0.4) {$[g_2]$};
\node[below] at (-1.9,0.3) {$\vdots$};
\node[left] at (-1.5,-1.5) {$[g_{a-1}]$};
\node[right] at (6.5,1.5) {$[h_2]$};
\node[right] at (6.9,0.4) {$[h_3]$};
\node[below] at (6.9,0.3) {$\vdots$};
\node[right] at (6.5,-1.5) {$[h_{b}]$};
\node[above] at (2.5,0) {$[g_a] = [h_1^{-1}]$};

\end{tikzpicture}
\caption{The $(k+1)$-vertex $(a+b-2)$-point diagram appearing in our inductive argument.}
\label{eq:diag}
\end{center}
\end{figure}

When there is only one vertex, there is nothing to prove.
So let us assume that we have shown the property to $k$ vertices.
Now, any tree diagram with $k+1$ vertices can be cut into two tree diagrams with less than $k+1$ vertices. 
Let us say that one part of the diagram has external lines labeled by $[g_1],\ldots, [g_a]$
and the second part of the diagram has those labeled by $[h_1],\ldots,[h_b]$.
By the inductive hypothesis, we have \begin{equation}
\tilde g_1 \tilde g_2 \cdots \tilde g_a = e~,\qquad
\tilde h_1\cdots \tilde h_b = e~,
\end{equation} where $\tilde g_i \in [g_i]$ and $ \tilde h_i \in[ h_i]$.
Let us say that  the two external lines labeled by $[g_a]$ and $[h_1]$ arose from the cut, so that $g_a^{-1} \in [h_1]$, c.f. Figure \ref{eq:diag}.
Then $\tilde g_a^{-1} \in [\tilde h_1]$,
and so  there is an $x\in G$ such that $\tilde g_a^{-1} = x \tilde h_1 x^{-1}$.
Now define $\hat h_i := x\tilde h_i x^{-1} \sim h_i \in [h_i]$, which satisfy \begin{equation}
\hat h_1\cdots \hat h_b = e~.
\end{equation}
We then have \begin{equation}
\tilde g_1 \cdots \tilde g_{a-1}
\hat h_2 \cdots \hat h_b
=
(\tilde g_1 \cdots \tilde g_{a-1}\tilde g_a)(
\hat h_1 \hat h_2 \cdots \hat h_b)
= e~,
\end{equation} which is what we wanted to prove.

\paragraph{Loop order:}

\begin{figure}[!tbp]
\begin{center}
\begin{tikzpicture}[baseline=0,scale = 0.6, baseline=-20]
\begin{scope}[decoration={
    markings,
    mark=at position 0.8 with {\arrow{<}}}]

\draw[ultra thick,postaction={decorate}] (0,0) -- (-1.5,1.5);
\draw[ultra thick,postaction={decorate}] (0,0) -- (-1.5,-1.5);
\draw[ultra thick,postaction={decorate}] (5,0) -- (6.5,1.5);
\draw[ultra thick,postaction={decorate}] (5,0) -- (6.5,-1.5);
\end{scope}

\begin{scope}[decoration={
    markings,
    mark=at position 1 with {\arrow{<}}}]

\draw[ultra thick,postaction={decorate} ] (0,0) to [out = -50, in = 180] (2.5,-1.3);
\draw[ultra thick,postaction={decorate} ] (2.5,-1.3) to [out = 0, in = -140] (5,0);
\end{scope}

\draw[blue,ultra thick] (0,0) --  (5,0) ;
\draw[blue, ultra thick ] (0,0) to [out = 50, in = 180] (2.5,1.3);
\draw[blue,ultra thick ] (2.5,1.3) to [out = 0, in = 140] (5,0);
\draw[ultra thick, red] (2.5,1.6)--(2.5,1);
\draw[ultra thick, red] (2.5,0.3)--(2.5,-0.3);

\draw[fill = gray] (0,0) circle (20pt);
\draw[fill = gray] (5,0) circle (20pt);

\node[left] at (-1.5,1.5) {$[g_1]$};
\node[left] at (-1.5,-1.5) {$[g_{n}]$};
\node[right] at (6.5,1.5) {$[g_{a}]$};
\node[right] at (6.5,-1.5) {$[g_{n+1}]$};

\def \n {20}
    \def \radius {2.4}
    \draw 
          foreach\s in{0,...,2}{
              ({-360/\n*(\s-1)}:-\radius)circle(.4pt)circle(.8pt)circle(1.2pt)
              node[anchor={-360/\n*(\s-1)}]{$$}
          };
          
        \begin{scope}[xshift=2in]  \def \n {20}
    \def \radius {2.4}
    \draw 
          foreach\s in{10,...,12}{
              ({-360/\n*(\s-1)}:-\radius)circle(.4pt)circle(.8pt)circle(1.2pt)
              node[anchor={-360/\n*(\s-1)}]{$$}
          };
          \end{scope}
          
\end{tikzpicture}%
\hspace{0.4 in}
\begin{tikzpicture}[baseline=0,scale = 0.6, baseline=-20]
\begin{scope}[decoration={
    markings,
    mark=at position 0.8 with {\arrow{<}}}]
\draw[ultra thick,postaction={decorate}] (0,0) -- (2.6,0) ;
\draw[ultra thick] (2.6,0) -- (4,0) ;
\draw[ultra thick,postaction={decorate}] (0,0) -- (-1.8,0);
\draw[ultra thick,postaction={decorate}] (0,0) -- (-1.5,-1.5);
\draw[ultra thick,postaction={decorate}] (4,0) -- (5.8,0);
\draw[ultra thick,postaction={decorate}] (4,0) -- (5.5,-1.5);

\draw[blue, ultra thick,postaction={decorate}] (0,0) -- (-1.5,1.5);
\draw[blue, ultra thick,postaction={decorate}] (0,0) -- (0,2);

\draw[blue, ultra thick,postaction={decorate}] (4,0) -- (5.5,1.5);
\draw[blue, ultra thick,postaction={decorate}] (4,0) -- (4,2);
\end{scope}

\draw[fill = gray] (0,0) circle (20pt);
\draw[fill = gray] (4,0) circle (20pt);

\node[left] at (-1.6,0) {$[g_1]$};
\node[left] at (-1.3,1.8) {$[h_1]$};
\node[above] at (0,2) {$[h_2]$};
\node[below] at (-2.2,0) {$\vdots$};
\node[left] at (-1.4,-1.7) {$[g_{n}]$};
\node[right] at (5.6,0) {$[g_{a}]$};
\node[below] at (6.2,0) {$\vdots$};
\node[right] at (5.4,-1.7) {$[g_{n+1}]$};
\node[right] at (5.3,1.8) {$[h_1^{-1}]$};
\node[above] at (4,2) {$[h_2^{-1}]$};

\end{tikzpicture}
\caption{An example of a two loop diagram with external legs labelled by $[g_1], \dots, [g_a]$ (left). Upon cutting the diagram in two places (red), we get a connected tree diagram with four additional external legs labelled by $[h_i]$ and $[h_i^{-1}]$ for $i=1,2$ (right). }
\label{eq:loopdiag}
\end{center}
\end{figure}
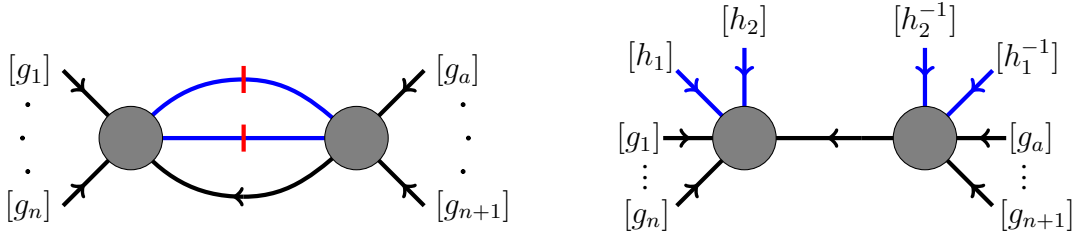

When we have loops in the diagram, the selection rule no longer holds in the form given above. 
Consider an $L$-loop diagram with $a$ external lines labeled by $[g_{1}], \dots, [g_a]$.
We can cut it in $L$ places to obtain a connected tree diagram with $2L$ additional external lines,
labeled by $[h_1]$, $[h_1^{-1}]$, \ldots, $[h_L]$, $[h_L^{-1}]$; for a concrete example, see Figure \ref{eq:loopdiag}.
We have already shown that there exist $\tilde g_i\in [g_i]$, $\tilde h_j \in [h_j]$, and $\tilde k_j^{-1} \in [h_j^{-1}]$ such that \begin{equation}
(\tilde g_1 \cdots \tilde g_a)( \tilde h_1 \tilde k_1^{-1}) \cdots (\tilde h_L \tilde k_L^{-1}) = e~,
\end{equation} or equivalently, \begin{equation}
\tilde g_1 \cdots \tilde g_a = (\tilde k_L \tilde h_L^{-1})\cdots (\tilde k_1 \tilde h_1^{-1})~.
\label{conjL}
\end{equation}
We now note that there exists an $x_j \in G$ such that $\tilde k_j = x_j \tilde h_j x_j^{-1}$. 
Then $\tilde k_j \tilde h_j^{-1} =  x_j \tilde h_j x_j^{-1} \tilde h_j^{-1} = [x_j,\tilde h_j]$ is the commutator of two elements $x_j$ and $\tilde h_j$.
Denoting by $\Com(G)$  the set of all commutators of $G$
and by $\Com(G)^L$ the set of products of $L$ commutators, we thus find that for an $L$-loop diagram, \begin{equation}
\tilde g_1 \cdots \tilde g_a \in \Com(G)^L~.
\end{equation}
In other words, we have the following result, 
\begin{claim}
The $L$-loop scattering amplitude with $n$ legs labeled by $[g_1],\ldots, [g_n]$ is non-zero only when \begin{equation}
\label{eq:claim1}
\tilde g_1 \cdots \tilde g_n \in \Com(G)^L~\ \ \text{for some} \ \ \tilde g_i \in [g_i]~. 
\end{equation}
\end{claim}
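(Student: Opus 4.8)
The plan is to reduce the $L$-loop statement to the tree-level selection rule just established, exploiting the fact that any connected $L$-loop diagram becomes a tree after cutting $L$ of its internal propagators. First I would invoke the standard graph-theoretic identity $L = I - V + 1$ for a connected Feynman graph with $I$ internal lines and $V$ vertices: this guarantees a set of $L$ internal edges whose severing leaves a spanning tree. Cutting each such edge replaces the propagator by two amputated external legs, and because the two half-edges belong to a single propagator they carry conjugate labels $[h_j]$ and $[h_j^{-1}]$. Thus each of the $L$ cuts contributes exactly one conjugate pair to the external data of the resulting tree, matching the picture in Figure~\ref{eq:loopdiag}.

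Next I would apply the tree-level result to the cut diagram, which now carries $n+2L$ external legs: the original ones labelled $[g_1],\dots,[g_n]$ together with the $L$ conjugate pairs $[h_j],[h_j^{-1}]$. The tree lemma supplies representatives $\tilde g_i\in[g_i]$, $\tilde h_j\in[h_j]$, and $\tilde k_j\in[h_j]$ (the last because its cut partner lies in $[h_j^{-1}]$) for which the ordered product of all external elements equals $e$, schematically $(\tilde g_1\cdots\tilde g_n)(\tilde h_1\tilde k_1^{-1})\cdots(\tilde h_L\tilde k_L^{-1})=e$. Solving this for the product of the $g$-factors yields an identity of the form (\ref{conjL}), in which each cut now contributes a single factor $\tilde k_j\tilde h_j^{-1}$.

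The final step is to recognise the algebraic content of these factors. Since $\tilde h_j$ and $\tilde k_j$ lie in a common conjugacy class, there is $x_j\in G$ with $\tilde k_j=x_j\tilde h_j x_j^{-1}$, so that $\tilde k_j\tilde h_j^{-1}=[x_j,\tilde h_j]\in\Com(G)$. Substituting, $\tilde g_1\cdots\tilde g_n$ is exhibited as an ordered product of exactly $L$ commutators, i.e. an element of $\Com(G)^L$; since this holds for every contributing $L$-loop diagram, the summed amplitude can be nonzero only when representatives with $\tilde g_1\cdots\tilde g_n\in\Com(G)^L$ exist, which is the claim.

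I expect the main obstacle to be bookkeeping of conjugacy representatives rather than any deep input. The tree lemma asserts only the existence of some simultaneously valid choice of $\tilde g_i,\tilde h_j,\tilde k_j$, and these are not canonical, so the delicate point is that the very representatives used to set the full product to $e$ must be reused when rearranging and when solving for the conjugating elements $x_j$. One must also check that neither the choice of which $L$ edges to cut nor the ordering of the resulting pairs affects the conclusion; this is robust because $\Com(G)^L$ is invariant under conjugation, so membership is a class-level statement that does not depend on these auxiliary choices.
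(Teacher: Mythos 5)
Your proposal is correct and follows essentially the same route as the paper: cut $L$ internal propagators to reduce to a spanning tree with $2L$ extra conjugate-pair legs, apply the tree-level lemma to get a simultaneous choice of representatives, and rewrite each $\tilde k_j\tilde h_j^{-1}$ as a commutator $[x_j,\tilde h_j]$ to conclude $\tilde g_1\cdots\tilde g_n\in\Com(G)^L$. The extra remarks on the identity $L=I-V+1$ and on conjugation-invariance of $\Com(G)^L$ are sound but not needed beyond what the paper already does.
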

This is clearly less restrictive than the tree-level selection rule, which required that $\tilde g_1 \cdots \tilde g_n = e$ for some $\tilde g_i \in [g_i]$. 

For large enough $L$, $\Com(G)^L$ stabilizes and becomes the commutator subgroup $[G,G]\subset G$.
As a result, for sufficiently large $L$, the condition above reduces to \begin{equation}
\tilde g_1 \cdots \tilde g_a \in [G,G]~,
\end{equation} or equivalently, \begin{equation}
\underline{g_1} \cdots \underline{g_a} = \underline{e} \in \Ab[G]:=G/[G,G]~,
\end{equation} where $\Ab[G]=G/[G,G]$ is the abelianization of $G$
and $\underline{g}$ is the projection of $g\in G$ to $\Ab[G]=G/[G,G]$.
This means that, upon accounting for arbitrarily high loop orders, 
the selection rules obtained are equivalent to those coming from the abelian group $\Ab[G]=G/[G,G]$.\footnote{A very similar result in a slightly different context can be found in \cite{McNamara:2021cuo}.}

\paragraph{Commutator length, i.e. at which loop order do our selection rules reduce to ordinary ones:}
A natural question which arises is the following:
given a non-Abelian finite group $G$, at which loop order $L$ do our  selection rules based on conjugacy classes of $G$ reduce to ordinary  selection rules based on the Abelian group $\Ab[G]=G/[G,G]$?
Mathematically, this is equivalent to asking for the smallest $L$ for which $\Com(G)^L = [G,G]$.
Such an $L$ is known as the \emph{commutator length} of the finite group $G$.\footnote{%
Commutator lengths of individual elements of $[G,G]$ can be similarly defined.
The asymptotic behavior of the commutator lengths of elements of infinite discrete groups is an active area of mathematical research, see e.g.~\cite{Calegari}.
The authors thank the tweet \url{https://twitter.com/skein_relation/status/1717954828301996053} for this information.
}

Let us denote the commutator length of a group by $\cl(G)$. 
Note that $\cl(G)=1$ is equivalent to $[G,G]=\Com(G)$, i.e.~the set of commutators forms a subgroup.
By scanning over non-Abelian finite groups with increasingly large $|G|$, it has been found that the smallest groups with $\cl(G)=2$ are two non-isomorphic groups with $|G|=96$, see \cite{KappeMorse}.
That simple non-Abelian finite groups $G$ always have $\cl(G)=1$ is the famous conjecture of Ore,\footnote{%
Apparently the conjecture does not really go back to Ore, see \url{https://mathoverflow.net/questions/77398/}.
} which was proven using the classification of finite simple groups \cite{Ore}.
In general, it is known that $\Bigl|[G,G]\Bigr| \ge (\cl(G)+1)!(\cl(G)-1)!$,  as shown in \cite{Bonner}.
Furthermore, for any integer $L$, it is known that there is a $G$ for which $\cl(G)\ge L$.\footnote{%
This math stackexchange answer \url{https://math.stackexchange.com/a/7885} by Derek Holt constructs an explicit example as follows. 
Fix a prime $p$, and consider a group $G$ generated by $a_i$ for $1\le i \le n$ and $b_{ij}$ for $1\le i < j \le n$, all of order $p$, with the relation $[a_i,a_j]=b_{ij}$ where $b_{ij}$ are all central.
The group $G$  satisfies $|G|=p^{n(n+1)/2}$ and $\Bigl|[G,G]\Bigr|=p^{n(n-1)/2}$.
Now note that $[ax,by]=[a,b]$ for $x,y$ in the center of $G$.
Therefore, $\Com(G)$  contains at most $(p^n)^2$ elements, and so  $\cl(G) \ge (n-1)/4$.
}

\subsection{String theory realization: non-Abelian orbifolds}
\label{subsec:nao}

Let us recall that the structure described above, coming from the conjugacy classes of a finite group $G$, naturally arises when we consider a non-Abelian orbifold by $G$ in string theory \cite{Hamidi:1986vh}. The point is that in perturbative string theory on a non-Abelian orbifold $M/G$,
each twisted sector is labeled by a conjugacy class $[g]$ for $g\in G$.
A tree-level amplitude with $n$ insertions is then nonzero only when there is a consistent worldsheet configuration on the sphere with insertions of $[g_1]$, \ldots, $[g_n]$.
This requires us to have elements $\tilde g_i \sim g_i$ such that \begin{equation}
\tilde g_1 \tilde g_2 \cdots \tilde g_n = e~,\label{conjcond}
\end{equation} which is exactly the condition encountered above.

At $L$ loop order, the worldsheet is a genus-$L$ surface, and the holonomy on the worldsheet is required to satisfy \begin{equation}
\tilde g_1 \tilde g_2 \cdots \tilde g_n = [x_1,y_1] [x_2,y_2] \cdots [x_L,y_L]
\end{equation}
where $x_i$, $y_i$ are the holonomies around the $i$-th A-cycle and $i$-th B-cycle, respectively.
Again, this is exactly the structure we found in \eqref{conjL}. In Section \ref{sec:CaseStudies} we will illustrate these results more explicitly by means of some concrete examples of non-Abelian orbifolds. 

\subsection{Using fusion algebras}
\label{subsec:fa}
We now generalize the construction in the last subsection to more general fusion algebras. 

\subsubsection{The setting}
\paragraph{Assumptions:}
Consider an algebra $\cA$ with a finite set $A=\{e, x,y,\ldots\}$ of basis elements,
with an  associative multiplication law \begin{equation}
x y = \sum_{z\in A} N_{xy}^z z
\end{equation} with $e$ as the unit. Each element $a\in \cA$ can be expanded in terms of basis elements $x\in A$ as 
\begin{equation}
a = \sum_{x\in A} a_x x~,
\end{equation}
and we use the notation $x \prec a$ to indicate that the expansion coefficient $a_x$ is positive.
More generally, we define $0 \prec a$ to mean  that $0 \le a_x$ for all $x$,
and $b \prec a$ to mean $0\prec a-b$.

We will assume the existence of an involution $A\to A$, which we denote by $a\mapsto \overline{a}$,
with the condition that, for simple elements $x, y$, we have $e \prec xy$ (i.e. $N_{xy}^e\neq 0$)  if and only if $y=\overline{x}$. To be slightly more general, we may demand that only $x$, but not $b$, is a simple element, in which case we have the property that $e\prec xb$ if and only if $\overline{x} \prec b$.

In explicit examples, the structure constants $N_{xy}^z$ are often non-negative integers,
motivating one to call such a structure a  \emph{fusion algebra}.
For the developments in this paper though, we will not actually need the condition $N_{xy}^z \in \bZ_{\ge 0}$, but rather only that $N_{xy}^z \geq 0$.
In these cases the structure in question is more often called a \emph{hypergroup}, as originally introduced in \cite{dunkl1973measure}.%
\footnote{
The term \emph{hypergroup} can refer to a weaker structure depending on the literature,
where one considers the multiplication  as a map $m: A\times A \to \{S \mid S\subset A\}$ satisfying a certain associativity condition.
A hypergroup in this weaker sense can be obtained from the one above by taking $m(x,y) = \{z \mid N^{z}_{xy}\neq 0\}$.
This weaker notion of hypergroups does not play any role in this paper,
but keeping these two definitions in mind could help the reader looking through the literature.
}
For the reader's convenience, we have included a short Appendix~\ref{app:hypergroups} providing an overview of the basic theory of finite hypergroups.
A simple consequence of $N^{z}_{xy} \ge 0$ is that if $a\prec b$ and $c\prec d$, we have $ac \prec bd$.

We now consider a perturbative quantum field theory whose fields $\phi_i$ are labeled by basis elements $x_i \in A$.
We assume that the conjugate fields $(\phi_i)^*$ are labeled by $\overline{x_i}$.
We further assume that the bare interaction terms \begin{equation}
O = \phi_1 \phi_2 \cdots \phi_n \label{inter}
\end{equation} satisfy the condition \begin{equation}
e\prec x_1 x_2 \cdots x_n ~.\label{cond}
\end{equation}
Here, the ordering of the fields within an interaction in \eqref{inter} 
is not usually meaningful.
As such, we assume that the fusion algebra $A$ controlling the condition \eqref{cond} is commutative in the rest of the paper.\footnote{%
\label{footnote:largeN} In a large-$N$ field theory, the fields within an interaction such as \eqref{inter} are naturally cyclically ordered, since they are effectively inside a trace.
In such cases it would be meaningful to consider selection rules dictated by fusion algebras
which are not commutative. 
}

\paragraph{Conjugacy classes as fusion algebras:}
Before proceeding, let us pause to note that the construction here generalizes the one discussed above based on the conjugacy classes of a finite group $G$.
To see this, let $A=\Conj(G)$ be the set of conjugacy classes of $G$.
We now introduce a product structure on $A$ by embedding it into the group algebra $\bC [G]$ via \begin{equation}
\Conj(G) \ni [g] \mapsto  \sum_{g'\sim g} g' \in \bC[G]~.
\label{conjalg}
\end{equation}
Its images are well-known to form a basis of the center and therefore a subalgebra of $\bC[G]$.
The conjugation is given by  $\overline{[g]}:=[g^{-1}]$, with \begin{equation}
N_{[g],[h]}^{[e]} = \begin{cases}
\# [g] & \text{if $[h]=[g^{-1}]$,}\\
0 & \text{otherwise.}
\end{cases}
\label{conjalgN}
\end{equation}
The properties we derive below for general fusion algebras  reproduce the results discussed in Sec.~\ref{sec:conj} for the conjugacy classes of $G$,
as we will explicitly demonstrate in Sec.~\ref{sec:comments} below.

\subsubsection{Selection rules}

\paragraph{Tree-level properties:} Repeating our previous analysis for conjugacy classes at the level of fusion algebras, it is easy to prove that a tree diagram with fields labeled by elements $x_1, \dots, x_n$ is a valid scattering process if and only if it satisfies the same condition as for a single vertex.
Indeed, we may again proceed by induction. As before, the case of a tree diagram with one vertex is trivial. We next assume that we have proven the statement for $k$ vertices. This means that the external fields $\phi_i$ with $i = 1, \dots, n+1$ coming into the $k$ vertices are labelled by elements $x_1, x_2, \dots, x_{n+1}$ of the fusion algebra satisfying 
\begin{equation}
\label{eq:kconsist}
    e\prec x_1 x_2 \dots x_{n+1} ~.
\end{equation}
We now consider a diagram with $k+1$ vertices. We may split such a diagram into a subdiagram with $k$ vertices and an additional subdiagram with one vertex, connected by the leg $x_{n+1}$, say. We call the external legs of the former $x_i$ as before, and the external legs of the latter $y_1, y_2, \dots, y_m$. Consistency of the final vertex tells us that
\begin{equation}
     e\prec y_1 y_2 \dots y_m \overline{x_{n+1}} ~.
\end{equation}

We now use the property that $x_{n+1}$ is a single element, so that the product of the remaining external legs $y_1, \dots y_m$ must contain the conjugate of the internal leg $x_{n+1}$, that is
\begin{equation}
  x_{n+1} \prec y_1 y_2 \dots y_m~.
\end{equation}
On the other hand, from (\ref{eq:kconsist}) we also have 
\begin{equation}
 \overline{x}_{n+1} \prec  x_1 x_2 \dots x_n~.
 \end{equation}
Combining these two equations, we see that 
\begin{equation}
e\prec x_{n+1}  \overline{x_{n+1} } 
\prec x_1 x_2 \dots x_n  y_1 y_2 \dots y_m~,
\end{equation}
and hence the property is proven for $k+1$ vertices.
By induction, our statement holds for tree diagrams with any number of vertices.

\paragraph{Loop order:}

We now consider the fate of our selection rules at non-zero loop order. For a diagram with $L$ loops and with external fields labeled by 
\begin{equation}
x_1,~ x_2,~ \dots, x_n~,
\end{equation}
the first step is again to make $L$ cuts such that we reduce to a connected tree diagram with $2L$ new external fields:
\begin{equation}
    y_1,~ \overline{y}_1,~ y_2,~ \overline{y}_2,~ \dots, y_L,~ \overline{y}_L~.
\end{equation}
Then from the consistency of the tree diagram we find that
\begin{equation}
   e\prec  x_1 x_2 \dots x_n (y_1 \overline{y}_1) (y_2 \overline{y}_2) \dots (y_L \overline{y}_L) ~.\end{equation}
This means in particular that we can pick a $\overline{z_i } \prec y_i \overline{y}_i$ such that \begin{equation}
e\prec x_1 x_2\dots x_n \overline{z_1} \cdots \overline{z_L}~,
\end{equation}
which further implies that we have an element  $w\prec z_L z_{L-1} \cdots z_1$ such that \begin{equation}
w\prec x_1x_2\dots x_n~.
\end{equation}

We can then summarize the condition for a non-zero scattering amplitude at $L$ loops in the following manner. Let \begin{equation}
\label{eq:ComA}
\Com(A) := \{ z \mid z \prec y \overline{y} \ \text{for some}\ y\in A \}
\end{equation} and \begin{equation}
\label{eq:ComAL}
\Com(A)^L:= \{ w \mid w \prec z_1 \cdots z_L \ \text{for some}\ z_1,\ldots, z_L \in \Com(A)\}~.
\end{equation}
We then have the following result,
\begin{claim}
The $L$-loop scattering amplitude with $n$ legs labeled by $x_1,\ldots, x_n$ is non-zero only when \begin{equation}
\label{eq:claim2}
w \prec x_1 \ldots x_n \quad \text{for some}\ w\in \Com(A)^L~.
\end{equation}
\end{claim}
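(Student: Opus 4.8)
The plan is to reduce the $L$-loop statement to the tree-level selection rule already established, and then to extract the commutator structure by a purely algebraic manipulation. First I would invoke the standard fact that a connected diagram whose loop number (first Betti number) is $L$ becomes a connected tree after cutting $L$ of its internal propagators; cutting a propagator carrying the label $y$ replaces it by a pair of external legs carrying $y$ and $\overline{y}$, since reversing the orientation of a line conjugates its label. Applying this to the given diagram yields a tree whose external legs are $x_1,\ldots,x_n$ together with the $2L$ new legs $y_1,\overline{y}_1,\ldots,y_L,\overline{y}_L$, and the original amplitude can be nonzero only if this tree is itself a valid process.

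By the tree-level result, validity of the tree requires
\begin{equation}
e \prec x_1 x_2 \cdots x_n\,(y_1\overline{y}_1)(y_2\overline{y}_2)\cdots(y_L\overline{y}_L)~.
\end{equation}
The remaining task is to convert this relation between algebra \emph{elements} into the desired statement about a single \emph{basis} element $w\in\Com(A)^L$ with $w\prec x_1\cdots x_n$. For this I would rely on two facts recorded earlier: monotonicity of $\prec$ under multiplication, and the single-element conjugation property that $e\prec x b$ if and only if $\overline{x}\prec b$ whenever $x$ is simple.

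The extraction then proceeds in two moves. Expanding each factor $y_i\overline{y}_i$ in the basis and using non-negativity of all the $N^z_{xy}$, the appearance of $e$ on the left forces a single surviving basis element in each factor, say $\overline{z_i}\prec y_i\overline{y}_i$ with $z_i\in\Com(A)$ (using that $y_i\overline{y}_i$ is conjugation-invariant), such that $e\prec x_1\cdots x_n\,\overline{z_1}\cdots\overline{z_L}$. Applying non-negativity once more selects a basis element $w\prec x_1\cdots x_n$ with $e\prec w\,\overline{z_1}\cdots\overline{z_L}$; since $w$ is simple, the conjugation property turns this into $\overline{w}\prec\overline{z_1}\cdots\overline{z_L}$, equivalently $w\prec z_L\cdots z_1$, so that $w\in\Com(A)^L$. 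This is exactly the claimed condition.

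I expect the main obstacle to be precisely this bookkeeping with $\prec$: a relation between algebra elements does not by itself allow one to ``read off'' a basis element, so at each stage the non-negativity of the structure constants must be invoked to guarantee that a single surviving basis element can be chosen from the relevant product, and the conjugation involution must be applied only to simple elements, where the biconditional $e\prec x b\iff\overline{x}\prec b$ holds. Once these points are handled carefully, the geometric cutting step and the inductive tree-level input are routine.
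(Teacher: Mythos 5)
Your proof is correct and follows essentially the same route as the paper: cut $L$ internal propagators to reduce to a connected tree with extra legs $y_i,\overline{y}_i$, apply the tree-level selection rule to get $e \prec x_1\cdots x_n (y_1\overline{y}_1)\cdots(y_L\overline{y}_L)$, and then use non-negativity of the structure constants to extract $\overline{z_i}\prec y_i\overline{y}_i$ and finally a basis element $w\prec x_1\cdots x_n$ with $w\prec z_L\cdots z_1$. Your extra care in noting the conjugation-invariance of $y\overline{y}$ and in applying the biconditional $e\prec xb \iff \overline{x}\prec b$ only to simple elements makes explicit two points the paper leaves implicit, but the argument is the same.
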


\begin{figure}[!tbp]
\begin{center}
\begin{tikzpicture}[baseline=0,scale = 0.6, baseline=-20]
\begin{scope}[decoration={
    markings,
    mark=at position 0.8 with {\arrow{<}}}]
\draw[ultra thick,postaction={decorate}] (0,0) -- (2,0) ;
\draw[ultra thick,postaction={decorate}] (0,0) -- (-1.8,0);
\draw[ultra thick,postaction={decorate}] (0,0) -- (-1.5,-1.5);

\draw[ ultra thick,postaction={decorate}] (0,0) -- (-1.5,1.5);
\draw[ ultra thick,postaction={decorate}] (0,0) -- (0,2);
\draw[ ultra thick,postaction={decorate}] (0,0) -- (0,-2) ;
\draw[ ultra thick,postaction={decorate}] (0,0) -- (1.5,-1.5) ;
\draw[ ultra thick,postaction={decorate}] (0,0) -- (1.5,1.5) ;

\end{scope}

\draw[fill = gray] (0,0) circle (20pt);

\node[left] at (-1.7,0) {$x_2$};
\node[left] at (-1.3,1.8) {$x_1$};
\node[above] at (0,2) {$x_n$};
\node[right] at (1.3,1.8) {$x_{n-1}$};
\node[right] at (2,0) {$x_{n-2}$};

\def \n {20}
    \def \radius {2.4}
    \draw 
          foreach\s in{-6,...,-2}{
              ({-360/\n*(\s-1)}:-\radius)circle(.4pt)circle(.8pt)circle(1.2pt)
              node[anchor={-360/\n*(\s-1)}]{$$}
          };

\end{tikzpicture}%
\hspace{1 in}
\begin{tikzpicture}[baseline=0,scale = 0.6, baseline=-20]
\begin{scope}[decoration={
    markings,
    mark=at position 0.7 with {\arrow{<}}}]

\draw[ultra thick,postaction={decorate}] (0,0) -- (-1.5,-1.5);
\draw[ ultra thick,postaction={decorate}] (0,0) -- (-1.5,1.5);
\draw[ ultra thick,postaction={decorate}] (0,0) -- (2,0);

\end{scope}

\begin{scope}[decoration={
    markings,
    mark=at position 0.5 with {\arrow{<}}}]
\draw[ultra thick,postaction={decorate} ] (2,0) to [out = 45, in = -45, distance = 4cm] (2,0);

\end{scope}

\draw[fill = gray] (0,0) circle (20pt);

\node[left] at (-1.5,-1.8) {$x_n$};
\node[left] at (-1.3,1.8) {$x_1$};
\node[above] at (1.4,0.1) {$w$};
\node[right] at (4.3,0) {$y$};

\def \n {20}
    \def \radius {2.3}
    \draw 
          foreach\s in{0,...,2}{
              ({-360/\n*(\s-1)}:-\radius)circle(.4pt)circle(.8pt)circle(1.2pt)
              node[anchor={-360/\n*(\s-1)}]{$$}
          };

\end{tikzpicture}

\caption{At tree-level (left), scattering amongst $x_1$, \dots, $x_n$ is allowed only when the product $x_1 \dots x_n$ contains the identity $e$, as would be the case for a single vertex. At one-loop (right) on the other hand, the process is allowed as long as the product $x_1 \dots x_n$ contains an element $w$ that is contained in $y \bar y$  for some $y$. As far as our selection rules are concerned, we may always use the associativity of the hypergroup to draw each loop with a single segment and a single outgoing leg, as done above.}
\label{fig:intuition}
\end{center}
\end{figure}
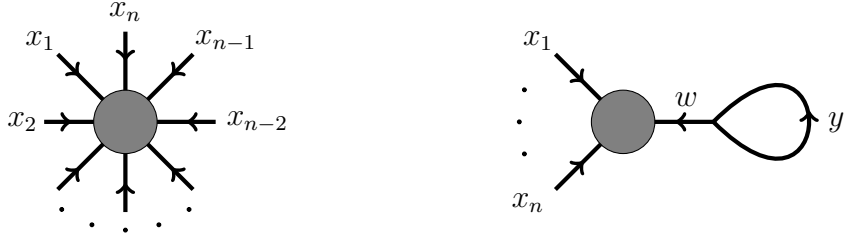

This result can be understood intuitively as follows. First, note that insofar as the selection rules are concerned, we may use the associativity of the hypergroup to rearrange loop diagrams such that each loop is composed of a single segment, and has a single outgoing leg. The result of such a rearrangement is shown for one-loop in the right panel of Figure \ref{fig:intuition}. In this presentation, it is clear that, whereas at tree-level the process is only allowed when the product $x_1 \dots x_n$ contains the identity $e$, at one-loop level the process is allowed as long as the product $x_1 \dots x_n$ contains an element $w$ that is contained in $y \bar y$  for some $y$. Similar statements apply for higher loop orders.

\paragraph{The selection rule at large enough loop order:}
As $\Com(A)\subset \Com(A)^2 \subset \cdots $, we see that the conditions for nonzero scattering amplitudes become increasingly more relaxed at higher loop orders.
Since $A$ is a finite set, this chain eventually stabilizes to \begin{equation}
\Com(A)^\infty := \{ w \mid w \prec z_1 \cdots z_k \ \text{for some}\ z_1,\ldots, z_k \in \Com(A)\ \text{for some}\ k\}~,
\end{equation} and the all-loop order condition  for a non-zero scattering amplitude is given by \begin{equation}
w\prec x_1x_2\cdots x_n \quad\text{for some}\ w\in \Com(A)^\infty~. \label{foo}
\end{equation}
We call the smallest $L$ such that $\Com(A)^L=\Com(A)^\infty$ 
the \emph{conjugate pair length} $\cl(A)$, in analogy with the commutator length of a group.

As we explain in more detail in Appendix~\ref{app:hypergroups}, the condition \eqref{foo} can be drastically simplified.
Let us say that $x \sim y$ for $x,y\in A$ if and only if there exists $w\in\Com(A)^\infty$ such that $x \prec wy$. 
Roughly, this means that the particle of type $x$ can be transformed into a particle of type $y$ if we allow loop processes at arbitrary loop order.
We can show that this relation $\sim$ is an equivalence relation.
We then define $\Gr[A] := A/\sim$, and denote its elements by $[x]\in \Gr[A]$.
It is known that we can introduce an honest product structure on $\Gr[A]$ via \begin{equation}
[x][y]=[z] \qquad\text{if and only if} \ N_{xy}^z \neq 0~,\label{groupization}
\end{equation}
which can be shown to make $\Gr[A]$ into a group. 
We will call $\Gr[A]$ the ``groupification'' of the hypergroup (or fusion algebra) $A$.
As reviewed in Appendix~\ref{app:hypergroups},
$\Gr[A]$ is the universal maximal group for which the relation \eqref{groupization} holds.

In our case, as we start from a commutative $A$, the resulting group $\Gr[A]$ is an Abelian group.  
Then the condition \eqref{foo} is equivalent to \begin{equation}
\label{eq:simplifiedequation}
[e] = [x_1][x_2]\cdots [x_n]~,
\end{equation}
i.e.~the selection rules at arbitrary loop order reduce to ones coming from the finite Abelian group $\Gr[A]$.

\subsubsection{Comments}
\label{sec:comments}

\paragraph{Conjugacy classes as fusion algebras:}

Let us first check that our general formulation using fusion algebras reproduce the results derived in Sec.~\ref{sec:conj}.
For this, we use $A=\Conj(G)$ introduced in \eqref{conjalg} and \eqref{conjalgN}
as the fusion algebra.
At tree level, using the fusion algebra, the allowed processes are those such that \begin{equation}
[e] \prec [g_1] [g_2] \cdots [g_n] ~.\label{ggg}
\end{equation}
Using \eqref{conjalg} to embed the fusion algebra inside the group algebra $\bR[G]$, the right-hand side is an element \begin{equation}
(\sum_{g_1' \sim g_1} g_1')
(\sum_{g_2' \sim g_2} g_2')
\cdots
(\sum_{g_n' \sim g_n} g_n') ~.
\end{equation} 
Then the condition \eqref{ggg} is seen to be equivalent to the existence of group elements
$g_1'\sim g_1$, $g_2'\sim g_2$, \ldots , $g_n'\sim g_n$ 
such that \begin{equation}
e=g_1' g_2' \cdots g_n'~,
\end{equation} which is exactly the condition \eqref{conjcond} that we found in Sec.~\ref{sec:conj}.

To see the agreement of the selection rules at the loop order, it suffices to show that $\Com(\Conj(G))$ contains exactly the conjugacy classes of commutators.
This follows from the fact that $[g]\in \Com(\Conj(G))$ means that 
there is an $[h]$ such that $[g] \prec [h] [h^{-1}]$.
Expanding the definitions, again using \eqref{conjalg},
we find that this means that 
\begin{equation}
g = ( k^{-1} h  k) (\ell h^{-1} \ell^{-1}) 
\end{equation}
for some $k$ and $\ell$. This in turn means that 
\begin{equation}
g = h' \ell k (h')^{-1} k^{-1} \ell^{-1} =[ h',  \ell k]~,
\end{equation} where $h' := k^{-1} h k $.
This process can also be reversed to show that the conjugacy class of any $[g,h]$ is in $\Com(\Conj(G))$.

Mathematically, what we demonstrated here is that the Abelianization of $G$ 
and the groupification of the fusion algebra $\Conj(G)$ agree, i.e.~\begin{equation}
\Gr[\Conj(G)] = \Ab[G]~.
\end{equation}

\paragraph{Example for conjugate pair length:} Beyond the conjugacy class example just discussed, one may ask the following question: for any given fusion algebra, at which loop order does the symmetry get reduced to that at arbitrary order? Since we are unaware of any math literature discussing this topic, we give some concrete examples 
in Appendices \ref{app:WZW} and \ref{app:lowrank}. 
The former appendix discusses some familiar WZW fusion algebras, while the latter focuses on low-rank fusion algebras with conjugate pair length greater than one.

\paragraph{Representations as fusion algebras:}

Before proceeding, let us note that irreducible representations $R_i$ of a group $G$ also form a fusion algebra $\Rep(G)$ under the tensor product: \begin{equation}
R_i \otimes R_j = \bigoplus_{k} N_{ij}^k R_k~.
\end{equation}
Our analysis is therefore also applicable to field theories with ordinary $G$ symmetry.
But it is important to note that we use only very crude information coming from the $G$ symmetry in this formalism---indeed, we 
  use only the decomposition of the tensor product into irreducible summands,
and not the Clebsch-Gordan coefficients, which dictate how the individual basis vectors within representations combine.

For example,  we can apply the analysis of the selection rules at large enough loop order to the case of $A=\Rep(G)$.
This results in an all-loop selection rule based on the Abelian group 
\begin{equation}
\Gr[A]=A/{\sim} = \Rep(Z_G)~,
\end{equation} where $Z_G$ is the center of $G$.
This is not wrong, in that we definitely have the symmetry $Z_G\subset G$ at  all-loop order.
But we also have the even larger $G$ symmetry at all-loop order,
meaning that the selection rules at tree level are  actually not reduced, even at  all-loop order.

\subsection{String theory realization: worldsheet non-invertible symmetries}
The structure identified in the previous section arises naturally in perturbative string theory.
Say, for example, that the internal  worldsheet theory has a part described by a rational chiral algebra, whose irreducible representations we denote by $V_1$, $V_2$, \ldots.
Here we only need to assume the existence of the chiral algebra on the left-moving side of the worldsheet, say.
It is well-known that these irreducible representations satisfy a fusion algebra of the form \begin{equation}
V_i \hat\otimes V_j = \bigoplus_{k} N_{ij}^k V_k~,
\end{equation}
where $\hat \otimes$ denotes the fusion product between modules. 
One-particle states of the theory are then classified by $V_i$,
and the tree-level amplitudes with $n$ external particles labeled by $V_{i_1}$, \ldots, $V_{i_n}$ are nonzero only when $V_{i_1}\hat\otimes \cdots \hat\otimes V_{i_n}$ contains the identity representation.
This is indeed the structure we have found above. 

In the previous paragraph we assumed the existence of a rational chiral algebra on the worldsheet,
but this is not actually necessary.
All we need is a non-invertible symmetry in the worldsheet theory $T$, described by a fusion category $C$.\footnote{Other results on the topic of worldsheet non-invertible symmetries can be found in \cite{Cordova:2023qei,upcoming}. }
Then the states of the worldsheet theory $T$ on $S^1$ can be decomposed in terms of the Drinfeld double $Z(C)$ of $C$, see e.g.~\cite{Lin:2022dhv}.
The simple objects of $Z(C)$ describe not only states in the untwisted sector, but also states in the sector twisted by various simple objects of $ C$.
As perturbative string theory with a fixed worldsheet theory $T$ in the usual sense only uses the untwisted sector of $T$,
we do not have to use all of the objects of $Z(C)$;
we only need to use the simple objects of $Z(C)$ which appear in the decomposition of the untwisted sector.

For example, let us say that the worldsheet theory $T$ has a fusion category symmetry $C$  which is actually a finite group $G$.
For ease of description, let us assume that its worldsheet anomaly is zero.
Then $Z(C)$ is the Drinfeld double of $G$, and the only part which appears in the decomposition of the untwisted sector is the subcategory $\mathrm{Rep}(G)$,
which controls the selection rules of the scattering amplitudes. 
This is just an extremely pretentious way of saying that when the worldsheet theory has a finite group symmetry $G$, 
the one-particle states of the spacetime theory carry an action of $G$ and therefore are decomposed into representations of $G$.

Now let us instead take $T'=T/G$, the non-Abelian orbifold by $G$, as the worldsheet theory.
This theory $T'$ has $\mathrm{Rep}(G)$ as the fusion category describing its non-invertible symmetry \cite{Bhardwaj:2017xup}.
The Drinfeld double is $Z(\mathrm{Rep}(G))=Z(G)$,
but now the simple objects appearing in the untwisted sector of $T'=T/G$ are
actually the $G$-twisted sectors of $T$, and are labeled by $\Conj(G)$.
This indeed reproduces the fact we saw in Sec.~\ref{subsec:nao} that 
the scattering amplitudes of a non-Abelian orbifold theory are controlled by conjugacy classes of $G$.

\section{Case studies}
\label{sec:CaseStudies}

In the previous section we described the general structure of our selection rules in an abstract manner.
In the current section we explore various explicit examples illustrating our results.

\subsection{ADE orbifolds}
We begin our study of concrete selection rules by analyzing some familiar non-Abelian orbifolds in string theory, 
namely those of the form $\mathbb{C}^2/\Gamma$ where $\Gamma$ is a finite subgroup of $SU(2)$.
Such $\Gamma$'s are well-known to have an ADE classification,
and are double covers of finite subgroups of $SO(3)$,
known as binary polyhedral groups.
For more background information, see e.g.~\cite[Appendix A]{Tachikawa:2007tt}.
As the A cases are abelian, we will concentrate on the D and E cases here.

\paragraph{The $D_{n+2}$ case: binary dihedral groups}

In this case the group has $4n$ elements, and is generated by $a$, $b$, $c=ab$, 
satisfying $a^n=b^2=c^2=:z$ and $z^2=e$.
This is the double cover of the dihedral group $D_{2n}$ of $2n$ elements, obtained by setting $z=e$.
Somewhat confusingly, this corresponds to $D_{n+2}$ in the ADE classification.
There are $n+3$ conjugacy classes, $[e]$, $[z]$, $[a]$, \ldots, $[a^{n-1}]$, $[b]$, and $[c]$.
We denote the image of each generator in the abelianization by the corresponding uppercase letter. Then the abelianization
is given by $Z=E$, $A^2=E$, $C=AB$, and $B^2=E$  or $B^2=A$ depending on whether $n$ is even or odd.
Therefore the result is $\bZ_2\times\bZ_2$ when $n$ is even and $\bZ_4$ when $n$ is odd.
The commutator length is 1, and therefore starting at one-loop order the preserved symmetry is either $\bZ_2\times \bZ_2$ or $\bZ_4$.

\paragraph{The $E_6$ case: binary tetrahedral group}

In this case the group has 24 elements, and is generated by $a$, $b$, $c=ab$, 
satisfying $a^3=b^3=c^2=:z$, $z^2=e$.
The elements $a$, $b$, $c$ correspond to rotations around the center of a face, around a vertex, and around  the center of an edge, of a tetrahedron.
The seven conjugacy classes are $[e]$, $[z]$, $[a]$, $[a^2]$, $[b]$, $[b^2]$, and $[c]$.
The product table is as follows: \begin{equation}
\let\text\relax
\left(
\begin{array}{ccccccc}
 \text{e} & \text{z} & \text{a} & a^2 & \text{b} & b^2 & \text{c} \\
 \text{z} & \text{e} & b^2 & \text{b} & a^2 & \text{a} & \text{c} \\
 \text{a} & b^2 & a^2\text{+3b} & \text{4z+2c} & \text{4e+2c} & 3a^2\text{+b} &
   \text{3a+3}b^2 \\
 a^2 & \text{b} & \text{4z+2c} & \text{3a+}b^2 & \text{a+3}b^2 & \text{4e+2c} &
   3a^2\text{+3b} \\
 \text{b} & a^2 & \text{4e+2c} & \text{a+3}b^2 & \text{3a+}b^2 & \text{4z+2c} &
   3a^2\text{+3b} \\
 b^2 & \text{a} & 3a^2\text{+b} & \text{4e+2c} & \text{4z+2c} & a^2\text{+3b} &
   \text{3a+3}b^2 \\
 \text{c} & \text{c} & \text{3a+3}b^2 & 3a^2\text{+3b} & 3a^2\text{+3b} & \text{3a+3}b^2
   & \text{6e+6z+4c} \\
\end{array}
\right)~.\label{tetra}
\end{equation}

The commutator subgroup is of order 8, and is isomorphic to the quaternion group.
The abelianization is $\bZ_3$, where $C=Z=E$ and $B=A^{-1}$.
The commutator length is 1, and therefore starting at one-loop order the preserved symmetry is $\bZ_3$.

We can explicitly see that the multiplication rule in \eqref{tetra} respects the $\bZ_3$ symmetry which assigns charge $+1$ to $a$, charge $-1$ to $b$, and charge $0$ to $e$, $z$, $c$.
At the same time, we see that this multiplication table contains more detailed information about the tree-level scattering processes.

\paragraph{The $E_7$ case: binary octahedral group}

In this case the group has 48 elements, and is generated by $a$, $b$, $c=ab$,
satisfying $a^4=b^3=c^2=:z$, $z^2=e$.
The elements $a$, $b$, $c$ correspond to rotations around the center of a face, around a vertex, and around  the center of an edge, of an octahedron.
The eight conjugacy classes are $[e]$, $[z]$, $[a]$, $[a^2]$, $[a^3]$, $[b]$, $[b^2]$, and $[c]$,
where the multiplication table is omitted as it is not too enlightening.
\if0
with the multiplication table  {\tiny\begin{equation}
\let\text\relax 
\left(
\begin{array}{cccccccc}
 \text{e} & \text{z} & \text{a} & a^2 & a^3 & \text{b} & b^2 & \text{c} \\
 \text{z} & \text{e} & a^3 & a^2 & \text{a} & b^2 & \text{b} & \text{c} \\
 \text{a} & a^3 & \text{6e+}a^2\text{+3b} & \text{a+}a^3\text{+2c} & \text{6z+}a^2+3b^2
   & \text{4a+2c} & 4a^3\text{+2c} & 4a^2\text{+3b+3}b^2 \\
 a^2 & a^2 & \text{a+}a^3\text{+2c} & \text{6e+6z+4}a^2 & \text{a+}a^3\text{+2c} &
   \text{3b+3}b^2 & \text{3b+3}b^2 & \text{4a+4}a^3\text{+2c} \\
 a^3 & \text{a} & \text{6z+}a^2+3b^2 & \text{a+}a^3\text{+2c} & \text{6e+}a^2\text{+3b}
   & 4a^3\text{+2c} & \text{4a+2c} & 4a^2\text{+3b+3}b^2 \\
 \text{b} & b^2 & \text{4a+2c} & \text{3b+3}b^2 & 4a^3\text{+2c} &
   \text{8e+4}a^2\text{+3b+}b^2 & \text{8z+4}a^2\text{+b+3}b^2 &
   \text{4a+4}a^3\text{+4c} \\
 b^2 & \text{b} & 4a^3\text{+2c} & \text{3b+3}b^2 & \text{4a+2c} &
   \text{8z+4}a^2\text{+b+3}b^2 & \text{8e+4}a^2\text{+3b+}b^2 &
   \text{4a+4}a^3\text{+4c} \\
 \text{c} & \text{c} & 4a^2\text{+3b+3}b^2 & \text{4a+4}a^3\text{+2c} &
   4a^2\text{+3b+3}b^2 & \text{4a+4}a^3\text{+4c} & \text{4a+4}a^3\text{+4c} &
   \text{12e+12z+4}a^2\text{+6b+6}b^2 \\
\end{array}
\right)~.
\end{equation}}\fi
The commutator subgroup is of order 24, and is isomorphic to the binary tetrahedral group.
The abelianization is $\bZ_2$, where $B=Z=E$, $C=A$, and $A^2=E$.
The commutator length is 1, and therefore starting at one-loop order the preserved symmetry is $\bZ_2$.

\paragraph{The $E_8$ case: binary icosahedral group}

In this case the group has 120 elements, and is generated by $a$, $b$, $c=ab$, and $z=c^2$ 
satisfying $a^5=b^3=c^2=:z$, $z^2=e$.
The elements $a$, $b$, $c$ correspond to rotations around the center of a face, around a vertex, and around  the center of an edge, of an icosahedron.
The nine conjugacy classes are $[e]$, $[z]$, $[a]$, $[a^2]$, $[a^3]$, $[b]$, $[b^2]$, and $[c]$.
\if0 The product table is {\tiny \begin{equation}
\let\text\relax
\left(
\begin{array}{ccccccccc}
 \text{e} & \text{z} & \text{a} & a^2 & a^3 & a^4 & \text{b} & b^2 & \text{c} \\
 \text{z} & \text{e} & a^4 & a^3 & a^2 & \text{a} & b^2 & \text{b} & \text{c} \\
 \text{a} & a^4 & \text{12e+5a+}a^2\text{+3b} & \text{a+}a^3\text{+3b+2c} &
   a^2+a^4+3b^2\text{+2c} & \text{12z+}a^3+5a^4+3b^2 & \text{5a+5}a^2\text{+3b+2c} &
   5a^3+5a^4+3b^2\text{+2c} & 5a^2+5a^3\text{+3b+3}b^2\text{+4c} \\
 a^2 & a^3 & \text{a+}a^3\text{+3b+2c} & \text{12e+5}a^3+a^4\text{+3b} &
   \text{12z+a+5}a^2+3b^2 & a^2+a^4+3b^2\text{+2c} & \text{5a+5}a^2+3b^2\text{+2c} &
   5a^3+5a^4\text{+3b+2c} & \text{5a+5}a^4\text{+3b+3}b^2\text{+4c} \\
 a^3 & a^2 & a^2+a^4+3b^2\text{+2c} & \text{12z+a+5}a^2+3b^2 &
   \text{12e+5}a^3+a^4\text{+3b} & \text{a+}a^3\text{+3b+2c} & 5a^3+5a^4\text{+3b+2c} &
   \text{5a+5}a^2+3b^2\text{+2c} & \text{5a+5}a^4\text{+3b+3}b^2\text{+4c} \\
 a^4 & \text{a} & \text{12z+}a^3+5a^4+3b^2 & a^2+a^4+3b^2\text{+2c} &
   \text{a+}a^3\text{+3b+2c} & \text{12e+5a+}a^2\text{+3b} & 5a^3+5a^4+3b^2\text{+2c} &
   \text{5a+5}a^2\text{+3b+2c} & 5a^2+5a^3\text{+3b+3}b^2\text{+4c} \\
 \text{b} & b^2 & \text{5a+5}a^2\text{+3b+2c} & \text{5a+5}a^2+3b^2\text{+2c} &
   5a^3+5a^4\text{+3b+2c} & 5a^3+5a^4+3b^2\text{+2c} &
   \text{20e+5a+5}a^3\text{+6b+}b^2\text{+4c} &
   \text{20z+5}a^2+5a^4\text{+b+6}b^2\text{+4c} &
   \text{5a+5}a^2+5a^3+5a^4\text{+6b+6}b^2\text{+4c} \\
 b^2 & \text{b} & 5a^3+5a^4+3b^2\text{+2c} & 5a^3+5a^4\text{+3b+2c} &
   \text{5a+5}a^2+3b^2\text{+2c} & \text{5a+5}a^2\text{+3b+2c} &
   \text{20z+5}a^2+5a^4\text{+b+6}b^2\text{+4c} &
   \text{20e+5a+5}a^3\text{+6b+}b^2\text{+4c} &
   \text{5a+5}a^2+5a^3+5a^4\text{+6b+6}b^2\text{+4c} \\
 \text{c} & \text{c} & 5a^2+5a^3\text{+3b+3}b^2\text{+4c} &
   \text{5a+5}a^4\text{+3b+3}b^2\text{+4c} & \text{5a+5}a^4\text{+3b+3}b^2\text{+4c} &
   5a^2+5a^3\text{+3b+3}b^2\text{+4c} &
   \text{5a+5}a^2+5a^3+5a^4\text{+6b+6}b^2\text{+4c} &
   \text{5a+5}a^2+5a^3+5a^4\text{+6b+6}b^2\text{+4c} &
   \text{30e+30z+10a+10}a^2+10a^3+10a^4\text{+6b+6}b^2\text{+4c} \\
\end{array}
\right)~.
\end{equation}} \fi
The commutator subgroup is equal to the original group,
and the abelianization is trivial.
The commutator length is 1, and therefore starting at one-loop order there is no preserved symmetry.

\paragraph{Summary}
In all cases the commutator length is 1, and hence the selection rule stabilizes already at one loop.
Furthermore, the abelianization, which is always a rather small abelian group, matches with the center of the corresponding ADE group,
which is a manifestation of the McKay correspondence.\footnote{If we consider the compactification of IIB on $\mathbb{C}^2/\Gamma$, then the resulting 6D (2,0) SCFT has a 2-form \emph{defect group} that is also given by $\text{Ab}[\Gamma]$ \cite{DelZotto:2015isa}, whose charged objects are non-dynamical surface defects coming from D3-branes wrapping relative cycles in the resolved $\mathbb{C}^2/\Gamma$. In that context, the twisted sectors for strings on $\mathbb{C}^2/\Gamma$ were shown in \cite{Douglas:1996sw} to correspond to exceptional divisors of $\mathbb{C}^2/\Gamma$, and the fact that their scattering admits an $\text{Ab}[\Gamma]$ charge conservation rule is required for the entire scattering process (at any loop order) to admit a consistent charge pairing with the non-dynamical surface defects.}
Nevertheless, there are many conjugacy classes which form a complicated fusion algebra,
leading to nontrivial selection rules at tree level.

\subsection{The Ising theory}
\label{sec:ising}

Next, let us consider an example of selection rules coming from a fusion algebra that does not come from a group.
Let us suppose that the internal worldsheet theory has a symmetry given by the Ising fusion category. 
For simplicity of presentation, we assume that the internal worldsheet theory is actually a direct product of  the $c=1/2$ Ising  model and the rest,
but the analysis below is applicable to more general cases as well, such as $Spin(7)$ compactifications of the heterotic string \cite{Shatashvili:1994zw}.

Recall that the Ising theory has Virasoro primaries $1$, $\sigma$, and $\epsilon$,
whose dimensions are given by $(h,\bar h)=(0,0)$, $(1/16,1/16)$, and $(1/2,1/2)$, respectively.
Correspondingly,
the spacetime fields can be decomposed into three sectors $e$, $\sigma$, and $\epsilon$,
whose fusion algebra $A$ has the fusion rules $\epsilon^2=e$, $\epsilon\sigma=\sigma$, and $\sigma^2=e+\epsilon$.
From this, we easily conclude that at tree level, interaction terms of the form $O_n :=\epsilon^n$ are allowed only when $n$ is even.

\begin{figure}[!tbp]
\begin{center}
\begin{tikzpicture}[baseline=0,scale = 0.8, baseline=-20]
\begin{scope}[decoration={
    markings,
    mark=at position 0.5 with {\arrow{<}}}]
\draw[ultra thick,postaction={decorate}] (0,0) -- (-1.5,1.5) ;
\draw[ultra thick,dashed] (0,0) -- (1.8,0);
\draw[ultra thick,postaction={decorate}] (0,0) -- (-1.5,-1.5);
\end{scope}
          
   \node[left] at  (-1.5,1.5) {${\eps}$};
    \node[left] at  (-1.5,-1.5) {${\eps}$};
     \node[right] at  (1.8,0) {${e}$};
\end{tikzpicture}%
\hspace{1 in}
\begin{tikzpicture}[baseline=0,scale = 0.8, baseline=-20]

\begin{scope}[decoration={
    markings,
    mark=at position 0.5 with {\arrow{<}}}]
\draw[ultra thick,postaction={decorate}] (-0.6,0.6) -- (-1.5,1.5) ;
\draw[ultra thick,postaction={decorate}] (0.8,0) -- (1.8,0);
\draw[ultra thick,postaction={decorate}] (-0.6,-0.6) -- (-1.5,-1.5);
\draw[ultra thick,postaction={decorate}] (-0.6,0.6) -- (-0.6,-0.6) ;
\draw[ultra thick,postaction={decorate}] (0.8,0)--(-0.6,0.6) ;
\draw[ultra thick,postaction={decorate}]  (-0.6,-0.6)--(0.8,0) ;
\end{scope}

             \node[left] at  (-1.5,1.5) {${\eps}$};
    \node[left] at  (-1.5,-1.5) {${\eps}$};
     \node[right] at  (1.8,0) {${\eps}$};
     
     \node[left ] at  (-0.7,0) {$\sigma$};
      \node[above ] at  (0.2,0.4) {$\sigma$};
       \node[below ] at  (0.2,-0.3) {$\sigma$};
\end{tikzpicture}

\caption{At tree-level, the fusion algebra requires that diagrams with only external $e$- and $\eps$-sector particles must have an even number of the latter. On the other hand, at one loop configurations with an odd number are allowed, due to the existence of the $\eps\sigma \sigma$ vertex. }
\label{fig:dualityintuition}
\end{center}
\end{figure}
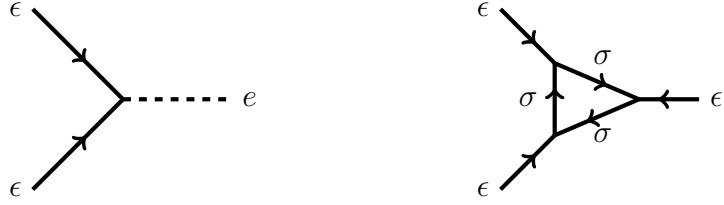

What happens at loop order?
Applying the analysis of Sec.~\ref{subsec:fa}, we see that 
$\Com(A)=\Com(A)^\infty=\{e,\epsilon\}$,
and therefore the all-loop selection rules are already realized at one loop, and are those for the group $\Gr[A]= \{[e],[\sigma]\}$ with $[\sigma]^2 = [e]$, i.e. those for a $\ZZ_2$ symmetry.

Intuitively, the breakdown of the selection rules at loop level may be understood as in Figure \ref{fig:dualityintuition}. Consider a tree-level diagram with external particles in the $e$ or $\eps$ sectors. For any such diagram, the fusion algebra gives a selection rule requiring that there be an even number of external $\eps$ particles. But at one loop, the allowed $\eps \sigma \sigma$ vertex may be used to write diagrams violating this selection rule. On the other hand, the fusion algebra does not allow us to draw diagrams with an odd number of external $\sigma$ particles at any loop order, which corresponds to the all-loop $\Gr[A]=\ZZ_2$ selection rule.

Thus far, we have obtained our selection rules by assigning elements of the fusion algebra to spacetime sectors. 
An alternative way to obtain selection rules is to study the action of the Ising category on the worldsheet fields. 
Recall that each primary of the Ising model has a corresponding Verlinde line operator, which we denote by  $L_e$, $L_\sigma$, and $L_\epsilon$,
whose  fusion algebra  is given by $L_\epsilon^2=L_e$, $L_\epsilon L_\sigma = L_\sigma $, and $L_\sigma^2=L_e+L_\epsilon$.
The action of these line operators on the primaries is given in the following table: \begin{equation}
\begin{array}{c|ccc}
& e & \sigma & \epsilon \\
\hline
L_e & +1& +1 & +1 \\
L_\sigma & \sqrt{2} & 0  &  -\sqrt{2} \\
L_\epsilon  & +1 & -1  & +1
\end{array}~~,
\end{equation}
as readily follows from the $S$ matrix of this modular tensor category.
In this point of view, the all-loop $\bZ_2$ symmetry we found before can be understood as being the selection rule coming from the $\bZ_2$ symmetry generated by $L_\epsilon$.

We should emphasize the distinction between the two approaches above. In the first approach, we label the spacetime sectors by representations of the categorical symmetry $C$, i.e. by elements of a fusion algebra $A \subset Z(C)$, which gives constraints on the spacetime action in the way discussed in Sec.~\ref{basics}. In the second approach, we use the action of $C$ on the worldsheet fields to obtain constraints on scattering amplitudes, which are in turn translated to constraints on the spacetime action. 
In general, the constraints obtained via these two approaches are not completely equivalent, as already follows from the discussion of ordinary $G$ symmetry at the end of Sec.~\ref{subsec:fa}. In the Ising case studied above, the all-loop results following from the first approach were seen to be equivalent to the selection rules coming from the invertible subgroup of $C$ (generated by $L_e$) in the second approach.

A similar equality holds whenever the worldsheet theory has a modular tensor category $\mathcal{M}$ as the symmetry. 
Indeed, the first approach leads to spacetime sectors controlled by the fusion algebra  $M$ of $\mathcal{M}$,
which gives rise to  tree-level selection rules as in Sec.~\ref{subsec:fa}.
The all-loop selection rule is then given by the groupification $ M\to \Gr[M]= M/{\sim}$.
On the other hand, in the second approach we consider the group $\mathrm{Inv}(\mathcal{M})$ formed by invertible line operators  of $\mathcal{M}$.
As $\mathcal{M}$ is braided, $\mathrm{Inv}(\mathcal{M})$ is an Abelian group,
and the spacetime sectors can be labeled by its Pontryagin dual $\widehat{\mathrm{Inv}(\mathcal{M})}$.
It is known \cite[Corollary 8.22.8]{EGNO} that \begin{equation}
\widehat{\Inv(\mathcal{M})} = \Gr[M]~,
\label{thminv}
\end{equation}
i.e.~the all-loop selection rule arising from the fusion algebra of the sectors
is simply the selection rule coming from the invertible subgroup of the full non-invertible symmetry.

\subsection{Selection rules for \texorpdfstring{$S^1/\mathbb{Z}_2$}{S1/Z2}}

We now discuss  bosonic string theory on $S^1_R/\bZ_2$, where the subscript $R$ specifies the radius of the circle,
and the $\bZ_2$ acts by flipping the coordinate.  This turns out to be a nice illustration 
of both the selection rules for non-Abelian orbifolds, as well as those for the Ising category. 

\subsubsection{As Abelian orbifolds}
\label{sec:S1/Z2abel}

We first discuss the standard presentation of $S^1_R/\bZ_2$ as an 
Abelian orbifold by $\bZ_2=\{e,g\}$.
The closed string sector is decomposed into the untwisted sector (corresponding to $[e]$)
and the twisted sector (corresponding to $[g]$). 
The interactions preserve the $\bZ_2$ symmetry under which $[e]$ is even and $[g]$ is odd.

The description above is very crude, in that the strings stuck at different fixed points of $S^1/\bZ_2$  are both assigned to the class $[g]$. 
These can be distinguished in our framework using the equality
\begin{equation}
S^1_R/\bZ_2 = S^1_{2R}/(\bZ_2\times \bZ_2)~, \label{22}
\end{equation}
where the first $\bZ_2=\{e,s\}$ acts on $S^1_{2R}$ by a half-shift
and the second $\bZ_2=\{e,g\}$ acts by flipping the coordinate. 
This is still an Abelian orbifold, and there are four sectors labeled by $[e]$, $[s]$, $[g]$, and $[gs]$.
The interactions preserve the $\bZ_2\times\bZ_2$ symmetry acting on these sectors. 
In the usual terminology, the states in  the $[e]$- and $[s]$-twisted sectors are
the untwisted strings with even and odd winding numbers,
while the states in the $[g]$- and $[gs]$-twisted sectors 
are the twisted sectors localized at the two fixed points. 

This implies, for example, that the merger of two twisted sector states at the same fixed point produces untwisted sector states of even winding number, i.e. $[g][g] = [gs][gs]=[e]$.
Similarly, the merger of two twisted sector states at two different fixed points produces untwisted sector states of odd winding number, i.e. $[g][gs]=[s]$.

\subsubsection{As non-Abelian orbifolds}
\label{sec:S1/Z2nonabel}

By generalizing the trick used above, it is possible to regard $S^1_R/\bZ_2$ as a non-Abelian orbifold and  hence to derive more detailed selection rules, albeit this time only applicable at  tree level.
Indeed, we have \begin{equation}
S^1_R/\bZ_2 = S^2_{2nR}/D_{4n}
\end{equation}
where $D_{4n}:=\bZ_{2n}\rtimes \bZ_2$ is the dihedral group of order $4n$.
We use the notation $\bZ_2=\{e,g\}$ and $\bZ_{2n}=\{e,s,\ldots,s^{2n-1}\}$,
such that $gsg=s^{-1}$.
The conjugacy classes are \begin{equation}
[e], \ 
[s]=\{s,s^{-1}\},\ \ldots,\ 
 [s^{n-1}]=\{s^{n-1},s^{1-n}\},\
  [s^n]
\end{equation} and \begin{equation}
[g]=\{g,gs^2,\cdots, gs^{2n-2}\}~,\quad
[gs]=\{gs,gs^3,\cdots, gs^{2n-1}\}~.
\end{equation}
In the usual terminology, the $[s^k]$-twisted sector states are those whose winding numbers are $\pm k$ mod $2n$,
and the $[g]$- and $[gs]$-twisted sector states are those which are stuck at each of the two fixed points on $S^1/\bZ_2$.
The fusion rules such as \begin{equation}
[s] [s^2] = [s] + [s^3]
\end{equation} then constrain the scattering processes. But since the group is now non-Abelian, these selection rules are applicable only at tree level.
As the commutator length of dihedral groups is one, the selection rules reduce to the all-loop ones already at one loop.
Since the abelianization map gives \begin{equation}
D_{4n} = \bZ_{2n}\rtimes \bZ_2 \to \bZ_2 \times \bZ_2~,
\end{equation}
the resulting selection rule is the one we already studied using \eqref{22} above.

\subsubsection{The $D_8$ symmetry}
\label{sec:trueD8}
Let us now study the symmetries of the worldsheet theory for $S^1/\bZ_2$ in more detail.
We denote by $\Phi_{m,w}$ the momentum $m$, winding $w$ local operator in the theory before orbifolding, 
where $m,w\in\bZ$. 
Then the linear combinations 
\bea
\widehat \Phi_{m,w} : = {1\over \sqrt{2} } \left(\Phi_{m,w}  + \Phi_{-m,-w}  \right) ~
\eea 
survive the orbifold; their scaling dimensions are given by 
\bea
(h, \bar h ) = \left(\half\left({m\over R} + {w R \over 2}\right)^2 \,,\,  \half\left({m\over R} - {w R \over 2}\right)^2 \right) ~. \label{hh}
\eea

Denote by $\tau_1$ and $\tau_2 $ the worldsheet twist fields of dimension $({1\over 16}, {1\over 16})$ associated to the two fixed points. 
These twist fields satisfy OPEs schematically of the form \begin{align}
\tau_1\cdot \tau_1 &\sim \sum_{i,j} C_{2i,2j} \widehat\Phi_{2i, 2j }  + \sum_{i,j}C_{2i+1,2j} \widehat\Phi_{2i+1,2j}~, \\
\tau_2\cdot \tau_2 &\sim \sum_{i,j} C_{2i,2j} \widehat\Phi_{2i, 2j }  -  \sum_{i,j} C_{2i+1,2j} \widehat\Phi_{2i+1,2j}~, \\
\tau_1\cdot \tau_2 &\sim \sum_{i,j} C_{2i,2j+1} \widehat \Phi_{2i,2j+1}~,
\end{align} 
as was detailed e.g.~in \cite{Dijkgraaf:1987vp}.

In the notation of Sec.~\ref{sec:S1/Z2abel},
i.e. when we view this system as the orbifold $S^1_{2R}/(\bZ_{2}\times \bZ_2)$, the fields
$\widehat\Phi_{i,2j}$ are in the sector $[e]$ and the fields
$\widehat\Phi_{i,2j+1}$ are in the sector $[s]$, while
$\tau_1$ is in the sector $[g]$ and
$\tau_2$ is in the sector $[gs]$.
The OPE given above satisfies the selection rules discussed there. 
In \cite{Dijkgraaf:1987vp}, it was noticed that the worldsheet actually has a $D_8$ symmetry\footnote{This worldsheet symmetry has been studied from a spacetime perspective in \cite{Kobayashi:2004ya,Kobayashi:2006wq}. More recent accounts can be found in \cite{Chang:2020imq,Thorngren:2019iar,Thorngren:2021yso}. }
 generated by two order-2 operations \begin{align}
b:\quad (\tau_1,\tau_2,\widehat\Phi_{m,w}) &\mapsto (-\tau_1,\tau_2,(-1)^w\widehat\Phi_{m,w}) ~, \label{c}\\
c:\quad (\tau_1,\tau_2,\widehat\Phi_{m,w}) &\mapsto (\tau_2,\tau_1,(-1)^m\widehat\Phi_{m,w})~. \label{b}
\end{align} 
Note that the operation \begin{equation}
a=bc :\quad  (\tau_1,\tau_2,\widehat\Phi_{m,w}) \mapsto (-\tau_2,\tau_1,(-1)^{m+w}\widehat\Phi_{m,w}) \label{a}
\end{equation} is of order 4.
Furthermore, the operation  \begin{equation}
a^2 :\quad   (\tau_1,\tau_2,\widehat\Phi_{m,w}) \mapsto (-\tau_1,-\tau_2,\widehat\Phi_{m,w}) ~,\label{aa}
\end{equation}
together with $b$ forms a $\bZ_2\times \bZ_2\subset D_8$ subgroup 
distinguishing the four sectors $[e]$, $[s]$, $[g]$, and $[gs]$ above.
We emphasize that this $D_8$ symmetry is distinct from the $D_8$ used when we regarded the system as $S^1_{4R}/D_8$ in Sec.~\ref{sec:S1/Z2nonabel}.

\subsubsection{Non-invertible symmetries at generic radius}

The $S^1/\bZ_2$ model has also a continuum of non-invertible symmetries,
which we review following \cite{Chang:2020imq,Thorngren:2019iar,Thorngren:2021yso}.
For every $U(1)_m \times U(1)_w$ rotation $U_{(\theta, \phi)}$ in the original $S^1$ theory, we obtain a non-invertible symmetry in the $S^1/\ZZ_2$ gauge theory, generated by 
\bea
\label{eq:noninvsymms}
\hat U_{(\theta, \phi)} = U_{(\theta, \phi)} + U_{(- \theta, - \phi)}~,
\eea
all of which have quantum dimension 2. The fusion rules are \begin{equation}
\hat U_{(\theta,\phi)} 
\hat U_{(\theta',\phi')} 
=\hat U_{(\theta+\theta',\phi+\phi')}  +\hat U_{(\theta-\theta',\phi-\phi')} ~.
\end{equation}
These symmetries act on the operators $\widehat \Phi_{m,w}$ as 
\bea
\hat U_{(\theta, \phi)}: \hspace{0.1 in} \widehat \Phi_{m,w} \rightarrow 2 \cos (m \theta + w \phi) \, \widehat \Phi_{m,w}~.
\eea

When $\theta,\phi \in \{0,\pi\}$, the operators $U_{(\theta,\phi)}$ are themselves invariant under the $\bZ_2$ used in the orbifolding. 
In such cases, we do not have to form a sum as in \eqref{eq:noninvsymms}.
Instead, to fully specify the line operator in the orbifolded theory we must pick a charge  $\pm1$ under this $\bZ_2$,
giving eight operators \begin{equation}
U_{(0,0)}^\pm~, \quad
U_{(0,\pi)}^\pm~, \quad
U_{(\pi,0)}^\pm~, \quad
U_{(\pi,\pi)}^\pm~. \label{pm}
\end{equation} 
Here, 
$U_{(0,0)}^-$ is the generator acting by $-1$ on the twisted sector fields
and can be identified with $a^2$ given in \eqref{aa}, i.e.~\begin{equation}
U_{(0,0)}^+=e~,\qquad
U_{(0,0)}^-= a^2~. \label{00}
\end{equation}
We then have \begin{equation}
U_{(\theta,\phi)}^\mp =  a^2 U_{(\theta,\phi)}^\pm 
\end{equation} 
in general for $\theta,\phi\in \{0,\pi\}$. 

Other elements in \eqref{pm} can similarly be identified with the $D_8$ elements introduced in  \eqref{c}, \eqref{b}, and \eqref{a}.
As $U_{(\theta,\phi)}$ for $\theta,\phi\in \{0,\pi\}$  acts  on the untwisted sector fields as \begin{equation}
\widehat\Phi_{m,w} \mapsto e^{im\theta + iw\theta} \widehat\Phi_{m,w}\ ,
\end{equation}we see that they correspond to, for example,
 \begin{align}
\{U_{(\pi,0)}^+,U_{(\pi,0)}^- \} &=  \{ab, a^3b\}~,  &
\{U_{(0,\pi)}^+ ,U_{(0,\pi)}^- \} &= \{b,  a^2b\}~, &
\{U_{(\pi,\pi)}^+,  U_{(\pi,\pi)}^-\} &= \{ a, a^3\}~, \label{D8U}
\end{align} 
where the assignments within each pair are not unique due to the nontriviality of the extension by $\bZ_2$.
We also note that  the equality \eqref{eq:noninvsymms} when $\theta,\phi\in \{0,\pi\}$  
should be interpreted as follows:\footnote{%
This is because the $\bZ_2$ orbifold exchanges the two summands of \eqref{eq:noninvsymms} as in $\begin{pmatrix}
0 & 1 \\
1 &0
\end{pmatrix}$,
which can be diagonalized to give $\begin{pmatrix}
+1 & 0\\
0 &-1 
\end{pmatrix}$ instead, when $\theta,\phi\in \{0,\pi\}$.
}%
 \begin{equation}
\hat U_{(\theta,\phi)}  = U_{(\theta,\phi)}^+ + U_{(\theta,\phi)}^-~.
\end{equation} 

We now note that the operators
\begin{equation}
U^\pm_{(0,0)}~,\quad  U^\pm_{(0,\pi)}~,\quad
\text{and}\quad \hat U_{(0,k\pi/n)}\ \text{\,for\, $k=1,\ldots, n-1$}
\label{RepD4n}
\end{equation}
have the same fusion rules as $\Rep(D_{4n})$.
In fact they generate precisely the  $\Rep(D_{4n})$ symmetry arising
when we regard $S^1_R/\bZ_2$ as $S^1_{2nR}/D_{4n}$ as in Sec.~\ref{sec:S1/Z2nonabel}.
To see this, we first regard $S^1_R=S^1_{2nR}/\bZ_{2n}$.
The $\Rep(\bZ_{2n})$ symmetry of the left-hand side should measure the winding number of $S^1_R$ modulo $2n$,
which means that it is given by $U_{(0,2\pi k/2n)}$, for $k=0,\ldots,2n-1$.
We then form $S^1_R/\bZ_2=(S^1_{2nR}/\bZ_{2n})/\bZ_2$.
This affects the symmetry generators as we described above,
so the simple objects of the natural $\Rep(D_{4n})$ symmetry are indeed 
the ones listed in \eqref{RepD4n}.

Let us take $n=2$ here and introduce the new notation 
\begin{equation}
\cN:=\hat U_{0,\pi/2}~. \label{N}
\end{equation}
This $\cN$ extends the $\bZ_2\times \bZ_2$ generated by 
$a^2$ and $b$, with the fusion rules 
\begin{equation}
\cN ^2 = 1 + a^2 + b + a^2b\label{NN}
\end{equation} and \begin{equation}
a^2 \cN = b\cN= \cN~.
\end{equation} 
In contrast,  the $D_8$ symmetry we discussed above in Sec.~\ref{sec:trueD8} 
extends the same $\bZ_2\times \bZ_2$ symmetry by $a$.
This shows that $a$, $b$, and $\cN$ together generate an interesting mixture of $D_8$ and $\Rep(D_8)$.
By studying the action on $\widehat\Phi_{m,w}$, 
we see that $c=ba$ and $\cN$ satisfy $c \cN=\cN c$.

\subsubsection{Ising symmetry at a specific radius}

At certain special rational points, the worldsheet non-invertible symmetry can be enhanced even further. As an example, we consider the case of $R=1$.\footnote{%
We can equally consider the T-dual radius $R=2$, but $R=1$ matches our previous discussions better.
}
At this radius, the operators $\widehat \Phi_{0,2}$ and $\widehat \Phi_{1,0}$ have scaling dimensions $(h,\bar h)=(1/2,1/2)$.
In fact, at this radius, the theory is known to be equivalent to two copies of the $c=1/2$ Ising theory.

The mapping between the operators of $S^1_R/\bZ_2$ and those of two copies of the Ising theory was given e.g.~in \cite{Dijkgraaf:1987vp}.
We already saw in Sec.~\ref{sec:ising} that the Ising theory has sectors $e$, $\sigma$, and $\epsilon$,
with corresponding Verlinde line operators $L_e$, $L_\sigma$, and $L_\epsilon$.
We continue to use the same symbols here, with an additional subscript $1,2$ to distinguish the two copies. 
The end result is that the
two Ising primaries $\sigma_{1,2}$ of dimension $(1/16,1/16)$ can be identified with our twisted sector fields $\tau_{1,2}$, 
and that
the two Ising primaries $\epsilon_{1,2}$ of dimension $(1/2,1/2)$ can be identified as follows,
\bea
\eps_1 := \widehat \Phi_{0,2} + \widehat \Phi_{1,0}~, \hspace{0.5 in} \eps_2 : = \widehat \Phi_{0,2} - \widehat \Phi_{1,0}~.
\eea
Comparing our $D_8$ transformation rules \eqref{c} -- \eqref{a}, we find that the  $\bZ_2\times \bZ_2$ subgroup generated by $a^2$ and $b$ are related to the two $\bZ_2$ symmetries of the two copies of the Ising theory via \begin{equation}
(L_\epsilon)_1 = b ~,\qquad
(L_\epsilon)_2 = a^2b ~,
\label{LL}
\end{equation} and the exchange of the two Ising factors is the operation $c$ of $D_8$.

We can also see that the non-invertible line $\cN$ introduced in \eqref{N}
has the identification \begin{equation}
(L_\sigma)_1
(L_\sigma)_2 = c\cN~.
\end{equation}
We see that the fusion rule \eqref{NN}
is reproduced from $(L_\sigma)^2 = 1+L_\epsilon$
and  \eqref{LL}.

As we saw in Sec.~\ref{sec:ising}, the Ising fusion algebra leads to the selection rule  that the interaction $\epsilon^n$ is allowed at tree level only when $n$ is even.
This results in the following somewhat interesting spacetime selection rule.
First, note that the Virasoro primary in $\eps_1 \eps_2$ has dimension $(1,1)$ and captures the change in the radius of the $S^1$. In spacetime, it corresponds to a massless radion field. 
Then, at tree level, the interaction term $(\eps_1 \eps_2)^n$ is allowed only when $n$ is even. On the other hand, at one loop this selection rule can be violated.
If we used only the invertible symmetry, this conclusion would be hard to come by,
since $\epsilon_1\epsilon_2$ is invariant under the entire $D_8$ symmetry---the generator $b$ leaves both $\eps_1$ and $\eps_2$ invariant, while the generator $a$ exchanges $\eps_1$ with $\eps_2$, so that their product $\eps_1 \eps_2$ remains invariant.\footnote{%
In fact, for the specific operator $O := \partial X\bar\partial X$ of dimension $(1,1)$,
the tree-level interaction term $O^n$ vanishes for odd $n$  for arbitrary values of the radius $R$.
This is due to the fact that the subalgebra generated by $O$ within the full operator algebra of the theory is common to all $R$, 
and therefore the selection rule derived at a given value of $R$ is applicable at any other value of $R$.
This argument does not apply to generic operators in the sector $\epsilon_1\epsilon_2$ of two copies of the Ising fusion algebra,
and we expect that the interaction would be nonzero for odd $n$ in the generic case.
}

\subsection{Tambara-Yamagami of $\ZZ_3$}

Thus far, all of our explicit examples have involved worldsheet non-invertible symmetries forming a modular tensor category. Such cases are particularly simple, since the spacetime sectors have fusion rules identical to the symmetries themselves. On the other hand, when the worldsheet symmetries do \textit{not} form a modular tensor category, we must instead label the spacetime sectors by representations of the non-invertible symmetry, i.e. by elements of the Drinfeld double of the original fusion category. 

In this section we analyze arguably the simplest example of a non-modular non-invertible symmetry, namely the Tambara-Yamagami category TY$(\ZZ_3)$. This fusion category has rank-4 with fusion rules given by
\bea
\eta^3 = e~, \hspace{0.5 in} \eta \times \cN=  \cN~, \hspace{0.5 in} \cN \times \cN = e + \eta + \eta^2 ~.
\eea 
It arises as a symmetry of the 3-states Potts model and the $\ZZ_3$ parafermion theory; in the context of the string worldsheet, it is relevant for certain Gepner models $(p_1, \dots p_n)$ with $p_i$ being 1. 

The Drinfeld double for TY$(\ZZ_N)$ for generic $N$ was studied in the mathematics literature in \cite{izumi2001structure,gelaki2009centers}; a more physically motivated discussion can be found in \cite{Kaidi:2022cpf}. For the case of $N=3$, the result is as follows. First, $Z(\mathrm{TY}(\ZZ_3))$ has a total of 15 objects, 
\begin{itemize}
\item $6$ invertible objects $X_{g,i}$ with $g \in \ZZ_3$ and $i \in \ZZ_2$
\item $3$ objects of quantum dimension 2 denoted by $Y_{[0,1]}$, $Y_{[0,2]}$, $Y_{[1,2]}$ (symmetric in the subscripts)
\item $6$ objects of quantum dimension $\sqrt{3}$ denoted by $Z_{g,i}$ with $g \in \ZZ_3$ and $i \in \ZZ_2$.
\end{itemize}
The full set of fusion rules amongst these objects can be found in  \cite{Kaidi:2022cpf}, and are as follows, 
\bea
X_{g,i} \times X_{h, j} &=& X_{g+h, i+j} 
\no\\
X_{g,i} \times Y_{[g',h']} &=& Y_{[g+g',g+h']}
\no\\
X_{g,i} \times Z_{h,j} &=& Z_{2 g + h, i + j + \delta_{h,0} + \delta_{2g+h,0}}
\no\\
Y_{[g,h]} \times Z_{g',i} &=& Z_{g+h+g',0}+ Z_{g+h+g',1}
\no\\
Y_{[g,h]}\times Y_{[g',h']} &=&\left\{ 
\begin{matrix} 
X_{g + g', 0} + X_{g + g', 1} + Y_{[h+g', g + h']}&& g + g' = h + h'
\\
X_{h + g', 0} + X_{h + g', 1} + Y_{[g+g', h + h']}&& \text{otherwise}
\end{matrix} \right. 
\no\\
Z_{g,i} \times Z_{h,j} &=& X_{- g - h, i + j +\delta_{g,0}+\delta_{h,0} } + Y_{[- g - h + 1, - g - h - 1]}
\eea
We now label the spacetime sectors by the elements above. In particular, untwisted sectors must be closed under fusion, and thus must be labelled by a closed subalgebra. An example of such a subalgebra is the set of invertible lines. Besides that, the largest proper subalgebra contains five objects $X_{0,0}$, $X_{0,1}$, $Y_{[1,2]}$, $Z_{0,0}$, and $Z_{0,1}$. Using the abbreviated notation
\bea
X_{0,0} \rightarrow e~, \quad X_{0,1} \rightarrow X~,\quad Y_{[1,2]} \rightarrow Y~,\quad Z_{0,0} \rightarrow Z~, \quad Z_{0,1} \rightarrow XZ~, 
\eea
the fusion rules amongst them are found to be, 
\bea
X^2 = e~, \hspace{0.2 in }X \times Y = Y~, \hspace{0.2 in}Y^2 = e + X + Y ~, \hspace{0.2 in} Y \times Z = Z + X Z ~, \hspace{0.2in } Z^2 = e + Y ~.~
\eea
These are none other than the fusion rules of $SU(2)_4$. In the presence of a TY$(\ZZ_3)$ symmetry on the worldsheet, we may then label the spacetime sectors by elements of a non-trivial subring of the $SU(2)_4$ fusion ring, of which there are three: $\ZZ_2 = \langle X\rangle$, $\mathrm{Rep}(D_3) = \langle X,Y \rangle$, and  $SU(2)_4$ itself. In all cases the tree-level fusions are dictated by those given above. Assuming that it is the maximum subalgebra that is realized, we have
\bea
\Com(A) = \Com(A)^\infty = \{e , X, Y\}~, 
\eea
from which we see that these selection rules would be broken at one loop and beyond to those for the group $\Gr[A]= \{[e], [Z]\}= \ZZ_2$. This would mean that the equality in (\ref{thminv}) is no longer satisfied.

\section*{Acknowledgments}
YT thanks discussions with Satoshi Shirai,
and HYZ thanks discussions with Liang Kong and Adar Sharon.
JK thanks Jan Albert and Julio Parra-Martinez for useful conversations. 
YT and HYZ are supported in part  
by WPI Initiative, MEXT, Japan at Kavli IPMU, the University of Tokyo. JK is supported in part by the Department of Energy. 

\appendix

\section{Basics of finite hypergroups}
\label{app:hypergroups}

Here we summarize the bare minimum on the theory of finite hypergroups, 
since it is not easy to find a self-contained reference where only finite ones are discussed.
The discussions here are based on \cite{SunderWildberger}, \cite{BloomHeyer}, \cite[Sec.~3]{EGNO}, and \cite[Sec.~2]{Bischoff};
the authors claim no originality in presentation.

\begin{defn}
\label{maindef}
A finite hypergroup $G$ is a finite set  
with an involution $G\ni x \mapsto \overline{x} \in G$
such that $\bR G$ is an associative algebra  with the product \begin{equation}
x y = \sum_{z\in G }N_{xy}^z z~,\qquad N_{xy}^z \in \bR_{\ge 0}
\end{equation}   where 
%\begin{itemize}
%\item 
i) the unit is given by an element $e\in G$,
%\item 
ii) $\overline{xy}=\bar y \bar x$ where the involution is extended to $\bR G$ by linearity,
%\item 
 and
iii) $N_{xy}^e \neq 0$ if and only if $x=\overline{y}$.
%\end{itemize}
\end{defn}

\begin{defn}
\label{rescaling}
Given a hypergroup $G$,
let $\tilde G$ be a copy of $G$ with an element $\tilde x\in \tilde G$ for each $x \in G$.
We choose  a set of non-negative numbers $c_x$, and formally define $\tilde x = x / c_x$.
Then $\tilde G$ becomes a hypergroup 
with the structure constants
\begin{equation}
\tilde N_{\tilde x\tilde y}^{\tilde z} := \frac{c_z}{c_x c_y} N_{xy}^z~.
\end{equation} 
The hypergroups $G$ and $\tilde G$ can be considered as essentially the same.
We say  that $\tilde G$ is obtained from $G$ by a rescaling.
\end{defn}

\begin{rem}
In the hypergroup literature, the standard normalization is to take $\sum_{\tilde z} \tilde N_{\tilde x\tilde y}^{\tilde z} =1$.
Another convention common in the literature is to take $\tilde N_{\tilde x \overline{\tilde x}}^{\tilde e}=1$.
Such rescalings can  always be performed, but the proof of this fact (Props.~\ref{rescaling1}, \ref{rescaling2}) is not immediate.
\end{rem}

\begin{defn}
We call a hypergroup $G$ ``strict'' if $N_{x\overline x}^e =1$ for all $x\in G$.
\end{defn}

\begin{rem}
This usage of the adjective \emph{strict} is not common in the literature.
It is introduced here simply for the sake of exposition.
\end{rem}

\begin{defn}
A finite hypergroup $G$ for which $N_{xy}^z$ are all integers is called a fusion algebra.
\end{defn}

\begin{rem}
In the literature, strict fusion algebras in the sense above are often simply called fusion algebras.
They are also called based rings, e.g.~in \cite{EGNO}.
\end{rem}

\begin{ex}
A finite group $G$ determines a strict fusion algebra.
\end{ex}

\begin{ex}
Given a finite group $G$, the elements $c([g]) = \sum_{h\sim g} h \in \bR[G]$ generate a commutative subalgebra of $\bR[G]$.
Since the structure constants are integers, we see that 
the set $\Conj(G)$ of conjugacy classes forms a commutative fusion algebra.
The algebra is not always strict in this normalization though, since $N_{[g][g^{-1}]}^{[e]} = \# [g]$ which can be larger than $1$ when $G$ is non-Abelian.
We can rescale the hypergroup by $c_g = (\#[g])^{1/2}$ to make it into a strict hypergroup, but then it is not necessarily a fusion algebra,
since the structure constants are not necessarily integers.
\end{ex}

\begin{ex}
Given a finite group $G$, the set $\Rep(G)$ of irreducible representations is also a fusion algebra,
where the product is given by the tensor product. 
It is automatically strict, since $R\otimes \bar R$ contains the identity representation exactly once.
\end{ex}

\begin{ex}
More generally, isomorphism classes of simple objects of a fusion category also form a finite fusion algebra. 
It is automatically strict.
\end{ex}

\begin{ex}
A finite strict hypergroup structure on a two-element set $\{e,x\}$ is specified by $x^2=e+ nx$,
where $n$ is a non-negative number.
It is a fusion algebra when $n$ is a non-negative integer.
When $n=0$ or $n=1$, it can be realized by a fusion category.
In fact there are two distinct fusion categories for each $n=0$ and $n=1$ \cite{Moore:1988qv}.
It is also known that there are no fusion categories realizing fusion algebras given by $x^2=e+nx$ with $n\ge 2$ \cite{Ostrik}.
Therefore one needs to be careful about the distinction between
hypergroups, fusion algebras, and fusion categories.
\end{ex}

\begin{notn}
We define $w_x := N_{x\bar x}^e$.
For $a=\sum_{x\in G} a_x x\in \bR G$ with $a_x\in \bR$, we write $u(a)=a_e$.
We also define $x\prec a$ to mean $a_x\neq 0$.
\end{notn}

\begin{prop}
We have $N_{xy}^z=u(xy\bar z)/w_z$ and $N^{\bar x}_{y\bar z}= u(xy\bar z)/w_{x}$.
In particular, $z\prec xy$ is equivalent to $\bar x \prec y\bar z$.
\end{prop}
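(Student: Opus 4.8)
The plan is to extract both identities from a single scalar, $u(xy\bar z)$, by exploiting the associativity of $\bR G$ together with axiom (iii) of Definition~\ref{maindef}, which pins down exactly when the unit $e$ appears in a product of two basis elements. The whole statement is then a short bookkeeping computation; the only substantive input beyond associativity is the involution identity $\overline{\bar z}=z$ and the positivity of the $N$'s.

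First I would compute $u(xy\bar z)$ by grouping it as $(xy)\bar z$. Writing $xy=\sum_{w} N_{xy}^{w}\, w$ and then reading off the coefficient of $e$ in each product $w\bar z$, axiom (iii) tells us that $N_{w\bar z}^{e}\neq 0$ only when $w=\overline{\bar z}=z$, in which case $N_{z\bar z}^{e}=w_z$ by definition of $w_z$. Hence every term but one drops out, so $u(xy\bar z)=N_{xy}^{z}\,w_z$, which gives the first formula $N_{xy}^z=u(xy\bar z)/w_z$ upon dividing by $w_z$.

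Next I would compute the very same quantity by the other grouping, $x(y\bar z)$. Expanding $y\bar z=\sum_{v} N_{y\bar z}^{v}\, v$ and reading off the coefficient of $e$ in each $xv$, axiom (iii) now forces $v=\bar x$, with $N_{x\bar x}^{e}=w_x$, so that $u(xy\bar z)=N_{y\bar z}^{\bar x}\,w_x$. Because associativity of $\bR G$ guarantees the two groupings evaluate to the same number, this yields the second formula $N^{\bar x}_{y\bar z}=u(xy\bar z)/w_x$.

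Finally, for the ``in particular'' clause I would note that $w_x=N_{x\bar x}^{e}$ and $w_z=N_{z\bar z}^{e}$ are both strictly positive: axiom (iii) gives $N_{z\bar z}^{e}\neq 0$ since $z=\overline{\bar z}$, and all structure constants are $\ge 0$. Dividing by these nonzero quantities is therefore legitimate, and the two formulas give $N_{xy}^z\neq 0 \iff u(xy\bar z)\neq 0 \iff N^{\bar x}_{y\bar z}\neq 0$, i.e.\ $z\prec xy$ iff $\bar x\prec y\bar z$. I do not expect any real obstacle here; the only point requiring care is the repeated use of $\overline{\bar z}=z$ when invoking axiom (iii), and confirming $w_z>0$ so that the division defining the coefficients is well posed.
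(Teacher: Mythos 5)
Your proof is correct and is exactly the argument the paper has in mind — the paper's own proof is just the word ``Immediate,'' and your two-way evaluation of $u(xy\bar z)$ via associativity and axiom (iii), together with the observation that $w_x, w_z > 0$, is the standard way to fill in that gap. No issues.
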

\begin{proof}
Immediate.
\end{proof}

\if0
\begin{rem}
\label{remx}
This proposition allows us to derive $N_{xy}^z=N^{\bar y}_{\bar zx}$, from which it follows that $z \prec xy$ is equivalent to $\bar y \prec \bar z x$, etc.
\end{rem}
\fi

\begin{rem}
There are many other similar formulas, which will not be listed in full here but will be used implicitly below.
\end{rem}

\begin{prop}
\label{prop:a8}
For any $x,y\in G$, we have $xy\neq 0$.
\end{prop}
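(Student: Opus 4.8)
The plan is to derive a contradiction from the hypothesis $xy=0$ by pairing $xy$ with $\bar y$ and exploiting associativity of $\bR G$ together with the non-negativity of all the structure constants. The starting observation is that, by axiom (iii) in Definition~\ref{maindef}, the coefficient $w_y := N_{y\bar y}^e$ is nonzero, hence strictly positive since every $N^z_{y\bar y}\ge 0$. Thus the element $y\bar y\in\bR G$ decomposes as $y\bar y = w_y\, e + \sum_{z\neq e} N^z_{y\bar y}\, z$ with a genuinely positive coefficient in front of the unit $e$.

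Next I would left-multiply by $x$ and use both that $e$ is the unit and that all coefficients are non-negative. Expanding gives $x(y\bar y) = w_y\, x + \sum_{z\neq e} N^z_{y\bar y}\, (xz)$, and since each $xz$ is itself a non-negative combination of basis elements, no cancellation can occur: the coefficient of the basis element $x$ in $x(y\bar y)$ is at least $w_y>0$. In particular $x(y\bar y)\neq 0$. Finally, associativity yields $x(y\bar y) = (xy)\bar y$; were $xy=0$, the right-hand side would vanish identically, contradicting $x(y\bar y)\neq 0$, and so $xy\neq 0$ as claimed. Equivalently, one could phrase the middle step through the earlier identity $N^z_{xy}=u(xy\bar z)/w_z$: the computation above shows $x\prec x(y\bar y)$, i.e.\ $u\big(x(y\bar y)\bar x\big)\neq 0$, forcing some $N^z_{xy}$ to be nonzero.

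The argument is short, and the only point requiring care — which is precisely the ingredient that makes the statement true — is the non-negativity $N^z_{xy}\ge 0$: it is this that prevents the positive contribution $w_y\, x$ from being cancelled by the remaining terms. In an algebra with coefficients of arbitrary sign the analogous product could well vanish, so I expect the \emph{only} subtlety is to state clearly where non-negativity is used, rather than any genuine computational obstacle.
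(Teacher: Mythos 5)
Your proof is correct and follows essentially the same idea as the paper's: the paper multiplies $xy$ by $\bar x$ on the left and $\bar y$ on the right and notes $e\prec \bar x x y\bar y$ using $e\prec\bar x x$, $e\prec y\bar y$, non-negativity, and associativity, whereas you use only the one-sided product $x(y\bar y)=(xy)\bar y$ and track the coefficient of $x$ instead of $e$. Both hinge on exactly the same ingredients (axiom (iii), associativity, and $N^z_{xy}\ge 0$ preventing cancellation), so this is a minor variant rather than a different route.
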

\begin{proof}
From % the condition iii) of 
Definition~\ref{maindef}, we have $e\prec \bar x x$ and $e\prec y \bar y$. 
Therefore $e\prec \bar x x y \bar y$.
Therefore $xy\neq 0$.
\end{proof}

\begin{prop}
\label{propx}
For any $x,y\in G$, there is a $z\in G$ and $z'\in G$ such that $y\prec zx$ and $y\prec xz'$.
\end{prop}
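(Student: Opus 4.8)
The plan is to reduce each of the two desired containments to the mere \emph{non-vanishing} of a product of two basis elements, and then to invoke Proposition~\ref{prop:a8}, which guarantees that no such product is zero. The bridge is the family of Frobenius-type reciprocity identities recorded just above, of which the explicitly stated instance is that $z\prec xy$ is equivalent to $\bar x\prec y\bar z$. The point is that these identities let me trade a containment of the form ``$y$ appears in a product'' for an equivalent containment ``$\bar z$ (or $\bar{z'}$) appears in a product that is guaranteed nonzero,'' so that the existence of a suitable summand is automatic.

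For the claim $y\prec zx$, I would relabel the stated equivalence, replacing the dummy symbols $x,y,z$ by $z,x,y$ respectively; this turns ``$z\prec xy$ iff $\bar x\prec y\bar z$'' into the statement that $y\prec zx$ is equivalent to $\bar z\prec x\bar y$. By Proposition~\ref{prop:a8} we have $x\bar y\neq 0$, so its support is nonempty and there exists some $w\in G$ with $w\prec x\bar y$. Setting $z:=\bar w$ and using $\bar{\bar w}=w$ gives $\bar z\prec x\bar y$, hence $y\prec zx$, as desired.

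For the claim $y\prec xz'$, I would instead use the companion reciprocity identity $z\prec xy$ iff $\bar y\prec\bar z x$, which comes from the relation $N_{xy}^z=N^{\bar y}_{\bar z x}$ — one of the ``many other similar formulas'' flagged in the remark following the reciprocity proposition. Relabeling as before yields that $y\prec xz'$ is equivalent to $\bar{z'}\prec\bar y x$; since $\bar y x\neq 0$ by Proposition~\ref{prop:a8}, I can choose $w'\prec\bar y x$ and set $z':=\bar{w'}$, which gives $\bar{z'}\prec\bar y x$ and hence $y\prec xz'$.

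The only genuine subtlety here is bookkeeping: picking the correct reciprocity identity (and the correct substitution of dummy variables) so that the target containment matches an automatically nonzero product. Once that matching is set up, the existence of the required $z$ and $z'$ is immediate from the non-vanishing statement of Proposition~\ref{prop:a8}, and no estimate or induction is needed. I would therefore expect this to be a short argument, with the reciprocity formulas doing all the work.
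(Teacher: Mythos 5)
Your proof is correct and follows essentially the same route as the paper's: both reduce each containment to membership in the support of a product which Proposition~\ref{prop:a8} guarantees is nonzero, via the reciprocity $y\prec zx \iff \bar z \prec x\bar y$ and its companion $y\prec xz' \iff \bar{z'}\prec \bar y x$. One small caveat: the literal coefficient identity $N_{xy}^z=N^{\bar y}_{\bar z x}$ you cite is not obviously exact at this point (it would need $w_y=w_z$ or the cyclicity of $u$, proved only later via $w_x=w_{\bar x}$), but the non-vanishing equivalence you actually use follows by applying the stated reciprocity twice, so the argument stands.
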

\begin{proof}
We know that $y\prec zx$ is equivalent to $\bar z \prec x\bar y$.
Since $ x\bar y$ is nonzero via Prop.~\ref{prop:a8}, there is at least one such $z$.
The other statement can be proved similarly.
\end{proof}

To define the quantum dimension and derive its properties, we use the following:
\begin{thm}[The Perron-Frobenius theorem]
Consider an $N\times N$ matrix $M_{ij}$ with non-negative entries. Then the following statements hold:
\begin{itemize}
\item Its maximal eigenvalue $\lambda(M)$ is non-negative.
\item For this eigenvalue $\lambda(M)$, there is an  eigenvector whose entries are all non-negative.
\item If an eigenvector has all entries positive, its eigenvalue is $\lambda(M)$.
\item If the entries $M_{ij}$ are all strictly positive, $\lambda(M)$ is non-degenerate and is positive.
\end{itemize}
We call such an eigenvalue/eigenvector the Perron-Frobenius eigenvalue/eigenvector of $M$.
\end{thm}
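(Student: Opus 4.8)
The plan is to establish the four bullet points by first producing a non-negative eigenvector for the largest eigenvalue via a fixed-point argument, and then bootstrapping to the sharper statements in the strictly positive case. First I would treat a strictly positive matrix $M$ and consider the continuous self-map of the standard simplex $\Delta=\{x\in\mathbb{R}^N : x_i\ge 0,\ \sum_i x_i=1\}$ given by $x\mapsto Mx/\sum_i (Mx)_i$, which is well defined because $Mx$ has strictly positive entries whenever $x\in\Delta$. Brouwer's fixed-point theorem then yields $x^\ast\in\Delta$ with $Mx^\ast=\lambda x^\ast$ and $\lambda=\sum_i(Mx^\ast)_i>0$, producing at once a positive eigenvector (since $x^\ast=Mx^\ast/\lambda$ and $M$ is positive) and a positive eigenvalue. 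For a general non-negative $M$ I would apply this to the perturbations $M_\epsilon=M+\epsilon J$, with $J$ the all-ones matrix, extract eigenpairs $(\lambda_\epsilon,x_\epsilon)$ with $x_\epsilon\in\Delta$, and pass to a convergent subsequence as $\epsilon\to 0^+$ using compactness of $\Delta$; the limit is a non-negative eigenvector with non-negative eigenvalue. That this eigenvalue is the maximal one, i.e.\ the spectral radius $\lambda(M)$, I would extract from the Collatz--Wielandt characterization $\lambda(M)=\sup_{x\ge 0,\,x\ne 0}\ \min_{i:\,x_i>0}(Mx)_i/x_i$, after checking that the right-hand side bounds the modulus of every eigenvalue.

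The third bullet I would obtain by a duality pairing. Applying the first two parts to the transpose $M^{\mathsf{T}}$, which has the same spectrum and hence the same maximal eigenvalue $\lambda(M)$, produces a non-negative left eigenvector $w\ge 0$, $w\neq 0$, with $w^{\mathsf{T}}M=\lambda(M)\,w^{\mathsf{T}}$. If $v>0$ satisfies $Mv=\mu v$, then
\[
\lambda(M)\,(w^{\mathsf{T}}v)=(w^{\mathsf{T}}M)\,v=w^{\mathsf{T}}(Mv)=\mu\,(w^{\mathsf{T}}v),
\]
and since $v$ is strictly positive while $w$ is non-negative and nonzero we have $w^{\mathsf{T}}v>0$, forcing $\mu=\lambda(M)$.

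For the last bullet, with $M$ strictly positive, positivity of $\lambda(M)$ is already contained in the fixed-point step. The non-degeneracy I would prove in two stages. For any real eigenvector $v$ with $Mv=\lambda(M)\,v$, the triangle inequality gives $\lambda(M)\,|v_i|=|\sum_j M_{ij}v_j|\le\sum_j M_{ij}|v_j|=(M|v|)_i$; were this strict for some $i$, then $z:=M|v|$ (which is strictly positive) would satisfy $Mz>\lambda(M)\,z$ entrywise, so the Collatz--Wielandt ratio of $z$ would exceed $\lambda(M)$, a contradiction. Hence $M|v|=\lambda(M)\,|v|$, so $|v|$ is a strictly positive eigenvector; moreover equality in the triangle inequality with all $M_{ij}>0$ forces the entries of $v$ to share a common sign, so every eigenvector is a real multiple of a strictly positive vector. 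Given two strictly positive eigenvectors $v,v'$, the combination $v-t v'$ with $t=\min_i(v_i/v_i')$ is a non-negative eigenvector with a vanishing entry, hence not strictly positive, hence zero; this shows the eigenspace is one-dimensional. To upgrade geometric to algebraic simplicity I would rule out a generalized eigenvector $u$ with $Mu=\lambda(M)\,u+v$ by pairing against the strictly positive left eigenvector $w$, which yields $w^{\mathsf{T}}v=0$, impossible since $w,v>0$.

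The step I expect to be the main obstacle is the passage from the strictly positive case to the general non-negative case, and in particular pinning down that the eigenvalue produced is genuinely the \emph{maximal} one rather than merely \emph{an} eigenvalue. The subtlety is that the Collatz--Wielandt functional $x\mapsto\min_{i:\,x_i>0}(Mx)_i/x_i$ is only lower semicontinuous on $\Delta$, since it can jump as some $x_i\to 0$ on the boundary, so one must argue with care that its supremum is attained and equals the spectral radius. The strictly positive case avoids this difficulty, which is precisely why I would dispatch it first by the fixed-point argument and only then transfer the conclusion to arbitrary non-negative matrices by the limiting argument above.
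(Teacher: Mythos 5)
The paper does not actually prove this theorem: its stated ``proof'' is a pointer to the literature (EGNO, Sec.~3.2), so any self-contained argument is by construction a different route. Your Brouwer-plus-perturbation scheme is one of the standard ones, and the parts you work out in detail are correct: the fixed-point construction of a positive eigenpair for strictly positive $M$, the duality pairing with a left Perron vector for the third bullet, and the triangle-inequality / one-dimensionality / Jordan-chain argument for non-degeneracy in the fourth bullet are all complete and sound.

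The one step that does not close as written is the one you yourself flag: identifying the limiting eigenvalue $\lambda=\lim_{\epsilon\to 0^+}\lambda_\epsilon$ with the spectral radius for a general non-negative $M$. The Collatz--Wielandt functional $f(x)=\min_{i:\,x_i>0}(Mx)_i/x_i$ satisfies $f(x)\le\rho(M)$ for every $x\ge 0$ (from $Mx\ge cx$ and iteration) and $f(|v|)\ge|\mu|$ for every eigenpair $(\mu,v)$, hence $\sup f=\rho(M)$; but your constructed eigenvector only witnesses $f(x)=\lambda$, which yields $\lambda\le\rho(M)$ --- the wrong direction for what you need. (Incidentally, $f$ is upper, not lower, semicontinuous on the simplex: as entries of $x$ vanish, indices drop out of the min, so the value can only jump up in the limit; attainment of the sup is therefore not the obstacle --- the obstacle is that a maximizer of $f$ need not be an eigenvector.) The clean repair stays entirely inside your own framework: first use your duality pairing to show $\lambda_\epsilon=\rho(M_\epsilon)$ for the strictly positive perturbations, via $|\mu|\,w_\epsilon^{\mathsf T}|v|\le w_\epsilon^{\mathsf T}M_\epsilon|v|=\lambda_\epsilon\,w_\epsilon^{\mathsf T}|v|$ with $w_\epsilon>0$ the left Perron vector of $M_\epsilon$; then either invoke continuity of the spectrum in the matrix entries to get $\lambda=\lim\rho(M_\epsilon)=\rho(M)$, or use monotonicity $\rho(M)\le\rho(M+\epsilon J)$ (from $0\le M^k\le M_\epsilon^k$ entrywise and Gelfand's formula) to get $\lambda\ge\rho(M)$, which combined with $\lambda$ being an eigenvalue gives equality. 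With that one lemma inserted, the proof is complete.
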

\begin{proof}
This is well-known; see e.g.~\cite[Sec.~3.2]{EGNO}.
\end{proof}

\begin{defn}
We denote the left and right multiplications by $x\in G$  by
\begin{equation}
\ell_x\colon a\mapsto xa, \qquad 
r_x \colon a\mapsto ax.
\end{equation}
We denote the left and right multiplication by $\sum_{x\in G} x$ by 
\begin{equation}
\ell_G: a\mapsto (\sum_{x\in G}x) a,\qquad
r_G: a \mapsto a (\sum_{x\in G}x).
\end{equation}
We regard them as linear operators on $\bR G$.
\end{defn}

\begin{prop}
\label{prop:R}
For $R\in \bR G$, the following four conditions are equivalent,
and define $R$ as a vector with all positive entries, uniquely up to a positive scalar multiple:
\begin{enumerate}
\item $R$ is a common Perron-Frobenius eigenvector for all the right multiplications $r_x$.
\item $R$ is a Perron-Frobenius eigenvector for $r_G$.
\item $R$ is a common Perron-Frobenius eigenvector for all the  left  multiplications $\ell_x$.
\item $R$ is a Perron-Frobenius eigenvector for $\ell_G$.
\end{enumerate}
\end{prop}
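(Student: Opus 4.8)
The plan is to reduce everything to the two ``total'' multiplications $r_G$ and $\ell_G$, which --- unlike the individual $r_x$ --- have \emph{strictly} positive matrices, and then to exploit the fact that left and right multiplications commute. First I would record the matrices: in the basis $G$ the operator $r_x$ has $(z,y)$-entry $N_{yx}^z\ge 0$ and $\ell_x$ has $(z,y)$-entry $N_{xy}^z\ge 0$, so all four operators have non-negative matrices and the Perron--Frobenius theorem applies. The crucial improvement is that $r_G$ and $\ell_G$ are strictly positive: the $(z,y)$-entry of $r_G$ is $\sum_x N_{yx}^z$, which is nonzero precisely when $z\prec yx$ for some $x$, and this is guaranteed by Proposition~\ref{propx} (substituting its $x$ by our $y$ and its $y$ by our $z$); the same proposition gives strict positivity of $\ell_G$, whose $(z,y)$-entry is $\sum_x N_{xy}^z$. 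By the last bullet of the Perron--Frobenius theorem each of $r_G$, $\ell_G$ then has a simple positive maximal eigenvalue with a one-dimensional eigenspace spanned by a strictly positive vector, and --- since a positive matrix admits no non-negative eigenvector other than this one --- it is their unique non-negative eigenvector up to scale. This already settles positivity and uniqueness for conditions (2) and (4).

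Next I would use associativity in the form $\ell_x r_G=r_G\ell_x$ and $r_x\ell_G=\ell_G r_x$, which follow from $x(a\Sigma)=(xa)\Sigma$ and $(\Sigma a)x=\Sigma(ax)$ for $\Sigma:=\sum_{y}y$. Let $R$ be the positive eigenvector of $r_G$. Because $\ell_x$ commutes with $r_G$, the vector $\ell_x R=xR$ is again an eigenvector of $r_G$ for its maximal eigenvalue; by simplicity $xR=\nu_x R$. As $R>0$ while $xR$ has non-negative entries and is not all zero (each $xy\neq 0$ by Proposition~\ref{prop:a8}), we get $\nu_x>0$, and then the third bullet of Perron--Frobenius forces $\nu_x=\lambda(\ell_x)$; thus $R$ is a common Perron--Frobenius eigenvector of all the $\ell_x$, giving (2)$\Rightarrow$(3). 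Summing over $x$ yields $\ell_G R=(\sum_x\lambda(\ell_x))R$, so $R$ is a non-negative eigenvector of the strictly positive $\ell_G$ and hence its Perron--Frobenius eigenvector, giving (3)$\Rightarrow$(4). The mirror argument using $r_x\ell_G=\ell_G r_x$ shows that the positive eigenvector of $\ell_G$ is a common Perron--Frobenius eigenvector of all the $r_x$, giving (4)$\Rightarrow$(1).

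Finally I would close the loop with (1)$\Rightarrow$(2): if $R$ is a common Perron--Frobenius eigenvector of all the $r_x$ then, being a Perron--Frobenius eigenvector, it is non-negative, and summing $r_x R=\lambda(r_x)R$ exhibits $R$ as a non-negative eigenvector of $r_G$, which by strict positivity must be the Perron--Frobenius eigenvector of $r_G$. Traversing the cycle (2)$\Rightarrow$(3)$\Rightarrow$(4)$\Rightarrow$(1)$\Rightarrow$(2) proves the four conditions equivalent, and since each is shown to coincide with being the unique positive eigenvector of $r_G$ (equivalently of $\ell_G$), the distinguished vector $R$ is strictly positive and unique up to a positive scalar. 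The one genuine subtlety to keep in view is that $r_x$ need \emph{not} commute with $r_G$ --- that would require $\Sigma$ to be central, which fails for a general hypergroup --- so the argument must cross to the opposite side: it is the commutativity of left with right multiplications that lets the eigenvector of $r_G$ be recognized simultaneously by all the $\ell_x$, and conversely. Establishing this crossing cleanly, together with the strict positivity of $r_G$ and $\ell_G$ from Proposition~\ref{propx}, is the heart of the proof; the remaining Perron--Frobenius bookkeeping is routine.
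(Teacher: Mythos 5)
Your proof is correct and follows essentially the same route as the paper's: both rely on the strict positivity of $r_G$ and $\ell_G$ (via Prop.~\ref{propx}) to get uniqueness and positivity of the Perron--Frobenius eigenvector, and both cross between left and right by observing that $xR$ is again an eigenvector of $r_G$, closing the same cycle of implications. Your write-up merely makes explicit what the paper leaves implicit, namely that this crossing rests on associativity ($\ell_x$ commutes with $r_G$, not $r_x$ with $r_G$), and that $xR\neq 0$ by Prop.~\ref{prop:a8}.
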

\begin{proof}
We prove this statement by showing that 1) implies 2), 2) implies 3), 3) implies 4), and 4) implies 1).
That $R$ is unique up to a positive scalar multiple and that $R$ has all positive entries will be proved along the way.

1) to 2): Immediate.

2) to 3): The linear map $r_G$ as a matrix has strictly positive entries, thanks to Prop.~\ref{propx}.
Therefore, $R$ is unique up to scalar multiplication and its entries are all positive.
Now we note that 
$x R$ is another eigenvector of $r_G$, whose entries are also all positive.
From the uniqueness of the Perron-Frobenius eigenvector of $r_G$ up to scalar multiplication,
we see that $xR \propto R$, meaning that $R$ is a Perron-Frobenius eigenvector of $\ell_x$.

3) to 4): Immediate.

4) to 1): Replace the role of the left and the right in the proof of 2) to 3).
\end{proof}

\begin{defn}
For $x\in G$, we define $d_x$ to be the Perron-Frobenius eigenvalue of the left multiplication $\ell_x$.
We extend $d$ from $G$ to $\bR G$ by linearity. 
We call $d_a$ the quantum dimension of $a$.
\end{defn}

\begin{prop}
\label{qd}
$d$ is a ring homomorphism $\bR G \to \bR$, i.e.~$d_{xy}=d_x d_y$. 
Equivalently, $ d_x d_y = \sum_{z\in G}N_{xy}^z d_z$ .
\end{prop}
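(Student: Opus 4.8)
The plan is to leverage the common Perron--Frobenius eigenvector $R$ supplied by Proposition~\ref{prop:R}. By part~3 of that proposition, $R$ is a common Perron--Frobenius eigenvector for every left multiplication $\ell_x$, and it has all positive entries. First I would observe that this forces $\ell_x R = d_x R$ for each $x\in G$: since $R$ is an eigenvector of $\ell_x$ with strictly positive entries, the third bullet of the Perron--Frobenius theorem identifies its eigenvalue as the maximal one $\lambda(\ell_x)$, which is exactly $d_x$ by the definition of the quantum dimension.

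The second ingredient is that left multiplication is an algebra representation: because the product on $\bR G$ is associative, $\ell_{xy}(a)=(xy)a = x(ya)=\ell_x(\ell_y(a))$, so $\ell_{xy}=\ell_x\ell_y$ as operators on $\bR G$. Applying both sides to the eigenvector $R$ and using the previous step twice gives
\begin{equation}
\ell_{xy}R = \ell_x(\ell_y R) = \ell_x(d_y R) = d_x d_y R.
\end{equation}
On the other hand, expanding $xy=\sum_{z} N_{xy}^z z$ and using linearity of the assignment $u\mapsto \ell_u$ yields $\ell_{xy}R = \sum_z N_{xy}^z \ell_z R = \bigl(\sum_z N_{xy}^z d_z\bigr) R$. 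Since $R\neq 0$, comparing the two expressions and cancelling $R$ gives the desired identity $d_x d_y = \sum_z N_{xy}^z d_z$, which is precisely $d_{xy}=d_x d_y$ once $d$ is extended to $\bR G$ by linearity.

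Finally I would promote this to the full homomorphism statement: additivity is built into the definition of the linear extension, bilinearity of the product reduces $d_{ab}=d_a d_b$ for general $a,b\in\bR G$ to the basis case just proved, and $d_e=1$ because $\ell_e$ is the identity operator, whose Perron--Frobenius eigenvalue is $1$.

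I do not expect a genuine obstacle here, since the substantive work was already carried out in establishing the existence and positivity of the common eigenvector $R$ in Proposition~\ref{prop:R}. The only point requiring care is the identification $\ell_x R = d_x R$, i.e.\ matching the eigenvalue attached to the all-positive common eigenvector with the maximal eigenvalue that defines $d_x$; everything else is a formal consequence of associativity and bilinearity.
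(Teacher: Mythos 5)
Your argument is correct and is essentially the paper's own proof: both apply the element $xy$ to the common Perron--Frobenius eigenvector $R$ from Proposition~\ref{prop:R} and compare $d_{xy}R = xyR = d_x d_y R$. You simply spell out a few more of the routine details (the identification $\ell_x R = d_x R$, linearity, and $d_e=1$) that the paper leaves implicit.
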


\begin{proof}
By definition $x R = d_x R$. 
From this it follows that $d_{xy} R= xy R = d_x d_y R$.
\end{proof}

\begin{prop}
\label{qdd}
$d_x$ is also the Perron-Frobenius eigenvalue of the right multiplication $ a\mapsto ax$ on $\bR G$. 
\end{prop}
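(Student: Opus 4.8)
The plan is to split the statement into two sub-claims: first, that the Perron--Frobenius eigenvalue of the right multiplication $r_x$ equals $d_{\bar x}$, and second, that $d_{\bar x}=d_x$. Combining these immediately gives the proposition.

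For the first sub-claim I would exploit the involution. Extend $a\mapsto\bar a$ to a linear map $\iota$ on $\bR G$; it is its own inverse, and the rule $\overline{xa}=\bar a\,\bar x$ from Definition~\ref{maindef} translates into the operator identity $\iota\circ\ell_x=r_{\bar x}\circ\iota$, i.e.\ $r_{\bar x}=\iota\,\ell_x\,\iota^{-1}$. Thus $r_{\bar x}$ and $\ell_x$ are similar as matrices, so they share the same characteristic polynomial, hence the same spectrum and in particular the same Perron--Frobenius eigenvalue. Since the latter is $d_x$ by definition, the Perron--Frobenius eigenvalue of $r_{\bar x}$ is $d_x$; replacing $\bar x$ by $x$ shows that the Perron--Frobenius eigenvalue of $r_x$ is $d_{\bar x}$. (Equivalently, one may run the argument through Prop.~\ref{prop:R}: the common eigenvector $R$ turns out to be involution-invariant, whence $Rx=d_{\bar x}R$.)

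For the second sub-claim I would show that $x\mapsto d_{\bar x}$ is again a ring homomorphism and then invoke uniqueness. Define $\phi=d\circ\iota$, so $\phi(x)=d_{\bar x}$; using Prop.~\ref{qd} together with $\overline{xy}=\bar y\bar x$ and the commutativity of $\bR$ yields $\phi(xy)=d_{\bar y\bar x}=d_{\bar y}d_{\bar x}=\phi(x)\phi(y)$, so $\phi$ is a ring homomorphism $\bR G\to\bR$ that is strictly positive on $G$ (each $d_y>0$, since $yR=d_yR$ with $R>0$ and $yR\neq0$ by Prop.~\ref{prop:a8}). The remaining step, which I expect to be the main obstacle because no uniqueness statement is available in the text, is to prove that any ring homomorphism $\psi\colon\bR G\to\bR$ with $\psi(y)>0$ for all $y\in G$ must coincide with $d$. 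For this I would observe that the relation $\psi(x)\psi(y)=\sum_z N^z_{xy}\psi(z)$ says precisely that the strictly positive vector $(\psi(z))_z$ is an eigenvector of the transpose $\ell_x^{\mathsf T}$ with eigenvalue $\psi(x)$; by the Perron--Frobenius theorem a strictly positive eigenvector forces $\psi(x)$ to be the Perron--Frobenius eigenvalue of $\ell_x^{\mathsf T}$, which equals that of $\ell_x$, namely $d_x$. Applying this with $\psi=\phi$ gives $d_{\bar x}=d_x$, and combining with the first sub-claim shows that the Perron--Frobenius eigenvalue of $r_x$ is $d_{\bar x}=d_x$, as claimed.
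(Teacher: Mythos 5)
Your proof is correct, but it takes a genuinely different and longer route than the paper's. The paper simply notes that Prop.~\ref{prop:R} already gives $Rx=e_xR$ for some scalar $e_x$ (namely the Perron--Frobenius eigenvalue of $r_x$), and then applies the ring homomorphism $d$ of Prop.~\ref{qd} to both sides: $d_Rd_x=e_xd_R$ with $d_R>0$, hence $e_x=d_x$ --- a two-line argument. You instead (i) conjugate $\ell_x$ by the involution $\iota$ to identify $\lambda(r_x)$ with $d_{\bar x}$, and (ii) prove $d_{\bar x}=d_x$ by showing that $d\circ\iota$ is another ring homomorphism positive on $G$ and that any such homomorphism must coincide with $d$, via the transposed-eigenvector Perron--Frobenius argument. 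Both steps are sound: the similarity in (i) preserves the maximal eigenvalue, which is exactly how the paper defines $\lambda(M)$, and the strict positivity $d_y>0$ that you need does follow from $yR\neq 0$ via Prop.~\ref{prop:a8} as you say. What your route buys is a reusable general fact --- $d$ is the \emph{unique} character of $\bR G$ that is positive on $G$ --- and it delivers the subsequent proposition $d_x=d_{\bar x}$ for free, which the paper proves separately afterwards. What it costs is length: the paper's argument bypasses the involution and the uniqueness lemma entirely by exploiting the already-established common right eigenvector $R$ together with the multiplicativity of $d$.
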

\begin{proof}
We have already shown that $Rx = e_x R$ for some $e_x\in \bR$.
Taking the quantum dimension of both sides, we see that $d_R d_x = e_x d_R$. Therefore $e_x=d_x$.
\end{proof}

\begin{prop}
$d_x = d_{\bar x}$~.
\end{prop}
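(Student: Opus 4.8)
The plan is to exploit the fact that the involution $x\mapsto\overline{x}$ is an algebra anti-automorphism of $\bR G$, together with the characterization of $d_x$ as the Perron--Frobenius eigenvalue of both left multiplication (its definition) and right multiplication (Proposition~\ref{qdd}). The idea is that these two facts combine to identify $d_x$ with $d_{\overline{x}}$ via a similarity of matrices, with no new positivity or uniqueness input required.

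First I would extend the involution to a linear map $\iota\colon \bR G\to \bR G$, $\iota(a)=\overline{a}$, which by property (ii) of Definition~\ref{maindef} satisfies $\iota(ab)=\iota(b)\iota(a)$ and $\iota^2=\mathrm{id}$. The key computation is to track what $\iota$ does to the multiplication operators. For any $a\in\bR G$ we have
\begin{equation}
\iota\bigl(\ell_x(\iota(a))\bigr)=\overline{x\,\overline{a}}=\overline{\overline{a}}\;\overline{x}=a\,\overline{x}=r_{\overline{x}}(a),
\end{equation}
so that $\iota\circ \ell_x\circ \iota = r_{\overline{x}}$. Since $\iota=\iota^{-1}$, this says that left multiplication by $x$ and right multiplication by $\overline{x}$ are conjugate by the invertible linear map $\iota$ (which simply permutes the basis $G$), and hence are similar as matrices.

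Because similar matrices have identical characteristic polynomials, $\ell_x$ and $r_{\overline{x}}$ share the same spectrum, and in particular the same maximal (Perron--Frobenius) eigenvalue. By definition this maximal eigenvalue of $\ell_x$ is $d_x$, while by Proposition~\ref{qdd} the maximal eigenvalue of the right multiplication $r_{\overline{x}}$ is $d_{\overline{x}}$. Therefore $d_x=d_{\overline{x}}$ for every $x\in G$, and the identity extends to all of $\bR G$ by linearity of $d$.

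I expect the only delicate point to be the bookkeeping in the conjugation identity $\iota\circ\ell_x\circ\iota=r_{\overline{x}}$, specifically applying the anti-automorphism property $\iota(ab)=\iota(b)\iota(a)$ in the correct order and keeping track of $\iota^2=\mathrm{id}$. Once that identity is in place, the conclusion is immediate from the Perron--Frobenius statements already established, and in particular one never needs to compare the two (a priori distinct) positive ring homomorphisms $x\mapsto d_x$ and $x\mapsto d_{\overline{x}}$ directly.
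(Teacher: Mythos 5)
Your proof is correct and is essentially the paper's own argument: the paper's one-line justification (``the matrix of left multiplication by $x$ is equal to the matrix of right multiplication by $\bar x$'') is precisely your conjugation identity $\iota\circ\ell_x\circ\iota=r_{\overline{x}}$, read as equality of matrices after relabeling the basis by the involution. You simply make the similarity explicit before invoking Proposition~\ref{qdd}, which is a welcome precision but not a different route.
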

\begin{proof}
This follows from the fact that the matrix of left multiplication by $x$ is equal to the matrix of right multiplication by $\bar x$.
\end{proof}

We can actually write the explicit form of the important element $R$ in Prop.~\ref{prop:R}:
\begin{defn}
For a hypergroup $G$, we define its Haar element by \begin{equation}
R_G := \frac{\sum_{x\in G}  d_x x/ w_x}{\sum_{x\in G} d_x^2 /w_x}~. 
\end{equation}
Here the denominator is included to normalize $d_{R_G}=1$.
\end{defn}

\begin{thm}
We have $R_G y= y R_G = d_y R_G$. In particular, $R_G^2=R_G$.
\end{thm}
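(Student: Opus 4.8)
The plan is to show that the explicitly defined $R_G$ is, up to a positive scalar, the common Perron--Frobenius eigenvector $R$ characterized abstractly in Proposition~\ref{prop:R}; once this identification is in place, the eigenvector property does essentially all the work.

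First I would pin down the coefficients of the abstract eigenvector $R=\sum_{x\in G}R_x\,x$ by computing $u(R\bar y)$ in two different ways. Expanding $R\bar y=\sum_x R_x(x\bar y)$ and extracting the coefficient of $e$, property (iii) of Definition~\ref{maindef} forces $x=y$, so that $u(R\bar y)=R_y N_{y\bar y}^e = R_y w_y$. On the other hand, $R$ is a right Perron--Frobenius eigenvector of $r_{\bar y}$, whose eigenvalue is $d_{\bar y}$ by Proposition~\ref{qdd}; hence $R\bar y = d_{\bar y}R = d_y R$, using $d_{\bar y}=d_y$, and therefore $u(R\bar y)=d_y R_e$. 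Comparing the two expressions gives $R_y = R_e\, d_y/w_y$, that is, $R = R_e\sum_{x}(d_x/w_x)\,x$. Since $R$ has strictly positive entries we have $R_e>0$, so this is precisely the numerator of $R_G$ up to the positive scalar $R_e$, and $R_G$ is a positive multiple of $R$.

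Next, because $R_G\propto R$ and $R$ is simultaneously a left eigenvector of every $\ell_y$ with eigenvalue $d_y$ (the definition of $d_y$) and a right eigenvector of every $r_y$ with eigenvalue $d_y$ (Proposition~\ref{prop:R} together with Proposition~\ref{qdd}), we obtain at once $y R_G = d_y R_G$ and $R_G y = d_y R_G$ for every $y\in G$, which is the first assertion. Extending by linearity in the first factor gives $a R_G = d_a R_G$ for all $a\in\bR G$, where $d$ is the quantum-dimension ring homomorphism of Proposition~\ref{qd}. Setting $a=R_G$ yields $R_G^2 = d_{R_G}R_G$, and the normalization built into the definition of $R_G$ gives $d_{R_G}=\bigl(\sum_x (d_x/w_x)d_x\bigr)\big/\bigl(\sum_x d_x^2/w_x\bigr)=1$, so $R_G^2=R_G$.

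The only genuine content lies in the first step, the identification $R_y=R_e\,d_y/w_y$; everything afterward is bookkeeping. The mild subtlety there is remembering that the coefficient of $e$ in $x\bar y$ is nonzero precisely when $x=y$, and invoking $d_x=d_{\bar x}$ so that the relevant left and right eigenvalues coincide. I do not expect a real obstacle, since Proposition~\ref{prop:R} already supplies existence and uniqueness of $R$ up to positive scalar; the task is simply to recognize the explicit formula for $R_G$ as that eigenvector and then read off both identities.
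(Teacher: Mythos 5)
Your proof is correct, but it runs in the opposite direction from the paper's. The paper verifies directly that the explicit element $R_G$ satisfies $R_G y = d_y R_G$ by a one-line computation with structure constants, using the identity $N_{xy}^z/w_x = N_{y\bar z}^{\bar x}/w_z$ (a consequence of $N_{xy}^z = u(xy\bar z)/w_z$ and $N_{y\bar z}^{\bar x} = u(xy\bar z)/w_x$) together with $d_x = d_{\bar x}$, and then invokes Proposition~\ref{prop:R} to upgrade the right-eigenvector property to the left one. You instead start from the abstract Perron--Frobenius eigenvector $R$ whose existence and uniqueness Proposition~\ref{prop:R} guarantees, and \emph{derive} its coefficients by evaluating $u(R\bar y)$ two ways --- once via axiom (iii) of Definition~\ref{maindef} to get $R_y w_y$, once via the eigenvalue equation and Proposition~\ref{qdd} to get $d_y R_e$ --- thereby identifying $R_G$ as a positive multiple of $R$ and reading off both eigenvector identities at once. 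The two arguments rest on the same underlying duality of structure constants ($u(R\bar y)=R_yw_y$ is just the $z=e$ instance of the identity the paper uses), so neither is deeper than the other; what yours buys is an explanation of \emph{where} the formula for the Haar element comes from, rather than a verification of a formula pulled out of a hat, at the modest cost of needing the strict positivity of $R$'s entries (supplied by Proposition~\ref{prop:R}) to conclude $R_e>0$. The normalization step $d_{R_G}=1$ giving $R_G^2=R_G$ is identical in both.
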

\begin{proof}
We have \begin{equation}
(\sum_{x\in G}  d_x x/w_x) y = \sum_{x,z\in G} d_x N_{xy}^z/w_x z
=\sum_{x,z\in G} d_x N_{y\bar z}^{\bar x}/w_z  z
= d_y (\sum_{z\in G} d_z z/w_z)~,
\end{equation} and therefore $R_G y = d_y R_G$.
This means that $R_G$ satisfies the condition i) of Prop.~\ref{prop:R},
from which $y R_G=d_y R_G$ also follows.
\end{proof}

\begin{prop}
\label{ww}
$w_x = w_{\bar x}$~.
\end{prop}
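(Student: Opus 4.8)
The plan is to extract the coefficient of the identity from the eigenvalue equation $y R_G = d_y R_G$ furnished by the preceding theorem on the Haar element. Writing $R_G = \sum_{x\in G} \rho_x x$ with $\rho_x = (d_x/w_x)/D$ and $D = \sum_x d_x^2/w_x$ as in the definition of $R_G$, I would apply the functional $u$ (extraction of the coefficient of $e$) to both sides of $y R_G = d_y R_G$.

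On the left, $u(y R_G) = \sum_x \rho_x N_{yx}^e$. By condition iii) of Definition \ref{maindef}, $N_{yx}^e$ vanishes unless $x = \bar y$, in which case it equals $N_{y\bar y}^e = w_y$; hence $u(y R_G) = \rho_{\bar y}\, w_y$. On the right, $u(d_y R_G) = d_y\, \rho_e$, and since $d_e = 1$ and $w_e = N_{ee}^e = 1$ we have $\rho_e = 1/D$, giving $u(d_y R_G) = d_y/D$. Equating the two sides yields $\rho_{\bar y}\, w_y = d_y/D$.

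Substituting $\rho_{\bar y} = (d_{\bar y}/w_{\bar y})/D$ and cancelling $D$, this reads $d_{\bar y}\, w_y / w_{\bar y} = d_y$. Using the previously established equality $d_{\bar y} = d_y$ together with $d_y > 0$ (which follows from $d_y^2 = d_y d_{\bar y} \ge N_{y\bar y}^e d_e = w_y > 0$), I can cancel $d_y$ to conclude $w_y = w_{\bar y}$, as desired. The computation is entirely routine once the eigenvalue equation is in hand; the only points requiring care are the normalizations $d_e = w_e = 1$ and the appeals to $d_{\bar y} = d_y$ and $d_y > 0$, all of which are available from the earlier results.

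An alternative, more structural route would be to show directly that $\overline{R_G} = R_G$: applying the involution to $R_G y = d_y R_G$ shows that $z\,\overline{R_G} = d_z\,\overline{R_G}$ for every $z$, so $\overline{R_G}$ is again a strictly positive common eigenvector of all left multiplications, whence by the uniqueness in Prop.~\ref{prop:R} it is a positive multiple of $R_G$. Comparing quantum dimensions (using $d_{\bar x} = d_x$, so that $d_{\overline{R_G}} = d_{R_G} = 1$) forces the multiple to be $1$, and matching the coefficient of $x$ then gives $d_x/w_x = d_{\bar x}/w_{\bar x}$, hence the claim. I expect the first, coefficient-extraction argument to be the shorter of the two, and do not anticipate any serious obstacle beyond the bookkeeping above.
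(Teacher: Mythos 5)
Your main, coefficient-extraction argument is correct but takes a genuinely different route from the paper. The paper's proof introduces a second element $R_G' \propto \sum_x d_x x/w_{\bar x}$, verifies by a direct computation with the structure constants that it is also a common eigenvector of all left multiplications, and then invokes the uniqueness clause of Prop.~\ref{prop:R} to force $R_G = R_G'$, from which $w_x = w_{\bar x}$ follows by comparing coefficients. Your primary argument instead applies the functional $u$ to the already-established identity $yR_G = d_y R_G$ and needs only condition iii) of Definition~\ref{maindef}, the normalizations $d_e = w_e = 1$, and the earlier facts $d_{\bar y}=d_y$ and $d_y>0$; it avoids any further appeal to Perron--Frobenius uniqueness and is, as you suspect, the shorter and more elementary of the two. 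All the steps check out: $u(yR_G)=\rho_{\bar y}w_y$ by condition iii), $u(R_G)=1/D$, and the positivity $d_y^2 = d_{y\bar y}\ge w_y>0$ is a valid justification for the cancellation (one could equally cite the positivity of the entries of $R$ from Prop.~\ref{prop:R}). Your ``alternative, more structural route'' is in fact the paper's proof in light disguise: since $d_{\bar x}=d_x$, one has $\overline{R_G}=R_G'$ exactly, so showing $\overline{R_G}$ is a positive common eigenvector and invoking uniqueness is the same argument the paper gives, just packaged via the involution rather than by writing out the candidate element explicitly. What the paper's route buys is that it never needs to single out the coefficient of $e$; what yours buys is brevity and independence from the uniqueness statement.
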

\begin{proof}
We consider the element \begin{equation}
R_G' := \frac{\sum_{x\in G}  d_x x/ w_{\bar x}}{\sum_{x\in G} d_x^2 /w_{\bar x}}~. 
\end{equation}
We can show that $yR_G' = d_y R_G'$, since \begin{equation}
y(\sum_{x\in G}  d_x x/w_{\bar x})  = \sum_{x,z\in G} d_x N_{yx}^z/w_{\bar x} z
=\sum_{x,z\in G} d_x N_{\bar z y}^{\bar x}/w_{\bar z}  z
= d_y (\sum_{z\in G} d_z z/w_{\bar z})~.
\end{equation}
From Prop.~\ref{prop:R}, we see $R_G=R_G'$, from which we see that $w_x=w_{\bar x}$.
\end{proof}

\begin{prop}
\label{rescaling1}
We can rescale a given hypergroup $G$ so that $\sum_{z} N_{ x y}^{ z}=1$ for all $x$ and $y$.
This is the standard convention in the hypergroup literature. 
\end{prop}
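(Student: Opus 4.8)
We can rescale a given hypergroup $G$ so that $\sum_{z} N_{xy}^{z}=1$ for all $x$ and $y$.

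The plan is to find positive rescaling factors $c_x$ such that the rescaled structure constants $\tilde N^{\tilde z}_{\tilde x \tilde y} = (c_z/(c_x c_y)) N^z_{xy}$ satisfy $\sum_{\tilde z} \tilde N^{\tilde z}_{\tilde x \tilde y} = 1$. My first move is to recognize that the quantum dimension $d$ is exactly the tool that trivializes row/column sums. Indeed, by Prop.~\ref{qd} we have $d_x d_y = \sum_z N^z_{xy} d_z$, so $d$ is the natural candidate for the Perron--Frobenius data that makes sums collapse. The natural guess is therefore to take $c_x = d_x$, since $d_x>0$ by the Perron--Frobenius theorem applied to the non-negative matrix of $\ell_x$ (its entries are non-negative, and $d_x$ is the maximal eigenvalue, which is positive because $\ell_x$ has a strictly positive eigenvector $R$ from Prop.~\ref{prop:R}).

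With $c_x = d_x$, the computation is essentially forced. I would write out
\begin{equation}
\sum_{\tilde z} \tilde N^{\tilde z}_{\tilde x \tilde y} = \sum_z \frac{d_z}{d_x d_y} N^z_{xy} = \frac{1}{d_x d_y}\sum_z N^z_{xy} d_z = \frac{d_x d_y}{d_x d_y} = 1,
\end{equation}
where the third equality is exactly Prop.~\ref{qd}. So the ``standard normalization'' convention is achieved precisely by dividing each basis element by its quantum dimension. The only thing to verify carefully is that this rescaling is legitimate in the sense of Definition~\ref{rescaling}, i.e.\ that the $c_x = d_x$ are non-negative (in fact positive) numbers. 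This is where the Perron--Frobenius theorem is invoked: positivity of $d_x$ is needed both to make the formal definition $\tilde x = x/c_x$ meaningful and to preserve the hypergroup axioms (the rescaled $\tilde N$ remain non-negative, and one should note the unit rescales consistently, with $d_e=1$ so that $\tilde e = e$).

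I expect no serious obstacle here; the main (minor) point to get right is simply \emph{which} normalization $c_x$ to choose, and confirming that $d_x>0$ strictly so that division is allowed. One subtlety worth flagging is that Definition~\ref{rescaling} defines rescaling only for non-negative $c_x$, and a priori one might worry that some $d_x$ could vanish; but since $R$ has all positive entries and $xR = d_x R$ with $xR$ also having non-negative entries that are not all zero (using $xy\neq 0$ from Prop.~\ref{prop:a8}), one concludes $d_x > 0$. Once positivity is secured, the identity above finishes the proof immediately, and the analogous later Proposition~\ref{rescaling2} (for the convention $\tilde N^{\tilde e}_{\tilde x \overline{\tilde x}}=1$) should follow from an entirely parallel choice $c_x = w_x^{1/2}$ using Prop.~\ref{ww}.
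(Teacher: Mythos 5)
Your proof is correct and follows essentially the same route as the paper: both take $c_x = d_x$ and use the multiplicativity of the quantum dimension (Prop.~\ref{qd}) to collapse the sum $\sum_z N_{xy}^z d_z = d_x d_y$. Your extra remark verifying $d_x>0$ via the strictly positive Perron--Frobenius eigenvector is a reasonable precaution that the paper leaves implicit.
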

\begin{proof}
We use $c_x=d_x$ in Definition~\ref{rescaling} and set $\tilde x=x/c_x$. Then $d_{\tilde x}=1$.
Evaluating the quantum dimension of both sides of $\tilde x \tilde y = \sum_{\tilde z} \tilde N_{\tilde x \tilde y}^{\tilde z} \tilde z$, one obtains $ \sum_{\tilde z} \tilde N_{\tilde x \tilde y}^{\tilde z}=1$.
\end{proof}

\begin{prop}
\label{rescaling2}
We can rescale a given hypergroup $G$  so that it is strict, i.e.~$ N_{ x\bar x}^{e}=1$.
This is the standard convention in the fusion category literature. 
\end{prop}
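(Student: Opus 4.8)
The plan is to exhibit an explicit rescaling in the sense of Definition~\ref{rescaling} that makes $G$ strict, namely the choice $c_x := \sqrt{w_x}$ with $w_x = N_{x\bar x}^e$. The motivation is transparent: under this rescaling the structure constant controlling strictness becomes
\begin{equation}
\tilde N_{\tilde x\,\overline{\tilde x}}^{\,\tilde e} = \frac{c_e}{c_x c_{\bar x}}\, N_{x\bar x}^e = \frac{c_e}{c_x c_{\bar x}}\, w_x~,
\end{equation}
so that the condition $\tilde N_{\tilde x\,\overline{\tilde x}}^{\,\tilde e}=1$ for all $x$ is equivalent to $c_x c_{\bar x} = w_x$ once we know $c_e=1$. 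The symmetric ansatz $c_x=\sqrt{w_x}$ is designed precisely to solve this equation.

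Before carrying this out I would dispatch three consistency checks. First, the rescaling factors must be genuine positive numbers: by axiom iii) of Definition~\ref{maindef} we have $e\prec x\bar x$, i.e.~$w_x = N_{x\bar x}^e \neq 0$, and since all structure constants are non-negative this gives $w_x>0$, so $c_x=\sqrt{w_x}>0$ is well defined. Second, the rescaled unit must still be the unit; since $e=\bar e$ and $e\cdot e=e$ forces $N_{ee}^e=1$, we have $w_e=1$ and hence $c_e=\sqrt{w_e}=1$, which is exactly the normalization keeping $\tilde e=e/c_e=e$ a two-sided identity. Third, the involution must be compatible with the rescaling: here I would invoke Proposition~\ref{ww}, which gives $w_x=w_{\bar x}$ and therefore $c_x=c_{\bar x}$, so that $\overline{\tilde x}=\tilde{\bar x}$ is unambiguous.

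With these in hand the main computation is immediate. Using $c_e=1$ and $c_x c_{\bar x}=\sqrt{w_x}\sqrt{w_{\bar x}}=\sqrt{w_x w_{\bar x}}=w_x$ (again via Proposition~\ref{ww}), the displayed formula collapses to $\tilde N_{\tilde x\,\overline{\tilde x}}^{\,\tilde e}=w_x/w_x=1$ for every $x$, which is the strictness condition. I do not expect any deep obstacle: the substantive input is already contained in Proposition~\ref{ww}, and the only points requiring care are the positivity $w_x>0$ (so that the square root exists) and the normalization $w_e=1$ (so that the unit is preserved). The remaining verifications that $\tilde G$ is again a hypergroup are supplied by Definition~\ref{rescaling} itself.
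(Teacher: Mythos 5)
Your proposal is correct and follows essentially the same route as the paper: choose $c_x=\sqrt{w_x}$ in Definition~\ref{rescaling} and use Proposition~\ref{ww} ($w_x=w_{\bar x}$) to get $\tilde N_{\tilde x\overline{\tilde x}}^{\tilde e}=N_{x\bar x}^e/(\sqrt{w_x}\sqrt{w_{\bar x}})=1$. The extra checks you spell out (positivity of $w_x$, $w_e=1$ so $c_e=1$, and compatibility of the involution with the rescaling) are left implicit in the paper's one-line proof but are correct and harmless.
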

\begin{proof}
Use $c_x=\sqrt{w_x}$ in Definition~\ref{rescaling} and set $\tilde x = x/c_x$.
Then 
\begin{equation}
\tilde N_{\tilde x\overline{\tilde x}}^{\tilde e} = \frac{1}{\sqrt{w_x}\sqrt{w_{\bar x}}} N_{x\bar x}^e = 1~,
\end{equation}
where we have used Prop.~\ref{ww}.
\end{proof}

We now move on to a discussion of subhypergroups and quotients.
For simplicity of presentation, we assume that all hypergroups are rescaled to be strict.

\begin{defn}
A subset $H$ of a hypergroup $G$ is a subhypergroup if $H$ is closed under the involution and $\bR H$ is closed under the product.
\end{defn}

\def\vev#1{\langle #1\rangle}
\begin{defn}
For a subhypergroup $H$ of $G$, we define `the double coset $\vev{x}$ containing $x$'  via \begin{equation}
\vev{x}= \{y \in G \mid  y \prec h x h' \ \ \text{for some}\ h,h'\in H\}~.
\end{equation}
\end{defn}

%That the notation is sensible follows from the following proposition:
\begin{prop}
The relation $y \sim_H x$ defined by $y \in \vev x$ is an equivalence relation.
\end{prop}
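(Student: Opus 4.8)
The plan is to check the three defining properties of an equivalence relation for $y\sim_H x\Leftrightarrow y\in\langle x\rangle$. The three facts I would lean on are: (a) for a basis element $y$ one has $a_y=u(a\bar y)/w_y$, so that $y\prec a$ is equivalent to $u(a\bar y)\neq0$; (b) $u(\bar a)=u(a)$, since $\bar e=e$; and (c) the cyclic identity $u(a_1\cdots a_k)=u(a_2\cdots a_k a_1)$, which I would obtain from the two-factor relation $u(ab)=\sum_x a_x b_{\bar x}w_x$ (a consequence of $N^e_{xy}=w_x$ for $y=\bar x$ and $0$ otherwise) together with $w_x=w_{\bar x}$ from Prop.~\ref{ww}, upgraded to arbitrary length by associativity. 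Reflexivity is then immediate: $H$ is a nonempty subhypergroup, so any $h\in H$ gives $\bar h\in H$ and $h\bar h\in\bR H$; as $e\prec h\bar h$ by strictness, $e\in H$, and $x\prec e x e=x$ shows $x\in\langle x\rangle$.

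The symmetric step is the heart of the argument, and I would isolate it as the claim that, for basis elements $h,h'\in H$, $y\prec hxh'$ implies $x\prec\bar h\,y\,\overline{h'}$. By fact (a) the hypothesis reads $u(hxh'\bar y)\neq0$ and the conclusion reads $u(\bar h\,y\,\overline{h'}\,\bar x)\neq0$. Applying (b) and then $\overline{\bar h\,y\,\overline{h'}\,\bar x}=x\,h'\,\bar y\,h$, followed by one cyclic rotation from (c), I would compute $u(\bar h\,y\,\overline{h'}\,\bar x)=u(x\,h'\,\bar y\,h)=u(hxh'\bar y)$, so the two quantities vanish together. Since $\bar h,\overline{h'}\in H$, this delivers $x\in\langle y\rangle$, i.e.\ $x\sim_H y$.

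For transitivity I would argue by nonnegativity of the structure constants. Suppose $z\prec h_1 y h_1'$ and $y\prec h_2 x h_2'$ with all factors in $H$, and set $P:=h_2 x h_2'=\sum_v P_v v$, so that $P_y>0$. Since every $N^z_{xy}\ge0$, expanding $h_1 P h_1'=\sum_v P_v\,(h_1 v h_1')$ gives the coefficientwise bound $(h_1 P h_1')_z\ge P_y\,(h_1 y h_1')_z>0$, hence $z\prec(h_1h_2)x(h_2'h_1')$. Closure of $\bR H$ under products lets me write $h_1 h_2=\sum_{k\in H}m_k k$ and $h_2'h_1'=\sum_{k'\in H}n_{k'}k'$ with $m_k,n_{k'}\ge0$, so $(h_1h_2)x(h_2'h_1')=\sum_{k,k'\in H}m_k n_{k'}\,kxk'$. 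Because this nonnegative sum has positive $z$-coefficient, some term with $k,k'\in H$ must satisfy $z\prec kxk'$, giving $z\sim_H x$.

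The main obstacle is the cyclic trace identity (c) underlying symmetry; once it is established the symmetric case reduces to a one-line manipulation, whereas reflexivity and transitivity require only $e\in H$ and the nonnegativity of the structure constants.
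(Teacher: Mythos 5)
Your proof is correct and follows essentially the same route as the paper's: reflexivity is immediate, symmetry comes from the trace-like identities $u(\bar a)=u(a)$ and $u(ab)=u(ba)$, and transitivity from composing the two witnessing products. You are somewhat more careful than the paper in the transitivity step (extracting basis elements $k,k'\in H$ from $h_1h_2,\,h_2'h_1'\in\bR H$ via nonnegativity of the structure constants), and your symmetry conclusion $x\prec\bar h\,y\,\overline{h'}$ is the correctly conjugated version of the paper's $x\prec\bar h\,y\,h'$ (harmless either way, since $H$ is closed under the involution).
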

\begin{proof}
Reflexivity is obvious. 
For transitivity, say $y \prec h x h'$ and $z \prec \tilde h y \tilde h'$ for some $h,h',\tilde h,\tilde h'\in H$. We then have $z \prec (\tilde h h) x (h' \tilde h')$.
For symmetry, say $y \prec h x h'$. This means that $u(\bar y h x h')\neq 0$.
This implies $u(  \bar h' \bar y h x  )\neq 0$,
which then implies $u(\bar x \bar h y h') \neq 0$, meaning that $x \prec \bar h y h'$. 
\end{proof}

\begin{defn}
We define $G//H := \{\vev x\}$, the partition of $G$ given by the equivalence relation $\sim_H$.
\end{defn}

We now introduce the structure of a hypergroup on $G//H$. 
Consider the linear map \begin{equation}
e_H: a\mapsto R_H a R_H
\end{equation} defined on $\bR G$. 
As $R_H^2=R_H$, $e_H(\bR G)$ clearly forms an $\bR$-algebra.
Let us find an explicit basis.
For this, note \begin{equation}
\bR G = \bigoplus_{\vev x \in G//H} V_{\vev x}~,\qquad
V_{\vev x}=\bigoplus_{y\in \vev x} \bR y~.
\end{equation}
Now consider the linear map \begin{equation}
e_{H}|_{V_{\vev x}} : a \mapsto R_H a R_H ~.
\end{equation} 
By the definition of the double coset $\vev x$, $e_H|_{V_{\vev x}}$ as a matrix has strictly positive entries.
Therefore its  Perron-Frobenius  eigenvector is unique. As $(e_{H})^2=e_{H}$,
the Perron-Frobenius eigenvalue is $1$.
Now, $R_H R_G R_H = R_G$. 
Therefore the Perron-Frobenius eigenvector of $e_{H}|_{V_{\vev x}}$ is $\propto \sum_{y\in \vev x} d_y y$.

\begin{defn}
For $\vev x \in G//H$, we define \begin{equation}
R_{\vev x} := \frac{\sum_{y\in \vev x} d_y y} { (\sum_{z \in H} d_z^2)^{1/2} (\sum_{y\in \vev x} d_y^2)^{1/2} } ~.
\end{equation}
This reduces to $R_H$ when $x=e$.
\end{defn}

\begin{prop}
The set $\{R_{\vev x}\}$ for $\vev x\in G//H$ spans a subalgebra of $\bR G$.
\end{prop}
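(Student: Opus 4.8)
The plan is to identify $\mathrm{span}\{R_{\vev x}\}$ with the image of the idempotent $e_H$, and then observe that this image is automatically a subalgebra. The computational route---expanding $R_{\vev x} R_{\vev y}$ directly and checking that only $R_{\vev z}$'s appear---looks considerably messier, so I would avoid it in favor of this structural argument.

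First I would record the purely algebraic fact that the image $e_H(\bR G) = R_H\, \bR G\, R_H$ is a subalgebra. Indeed, since $R_H^2 = R_H$, for any $a,b\in\bR G$ we have $e_H(a)\,e_H(b) = R_H a R_H R_H b R_H = R_H a R_H b R_H = e_H(a R_H b)$, so the image is closed under multiplication, and it has the unit $R_H = R_{\vev e}$. Hence it suffices to prove that $e_H(\bR G) = \mathrm{span}\{R_{\vev x}\}$.

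Next I would exploit the decomposition $\bR G = \bigoplus_{\vev x} V_{\vev x}$ already introduced. The key structural observation is that $e_H$ respects this decomposition: for a basis element $y\in\vev x$ and any $h,h'\in H$, the product $h y h'$ is supported on $\vev x$ by the very definition of the double coset, so $R_H y R_H \in V_{\vev x}$ and therefore $e_H(V_{\vev x})\subseteq V_{\vev x}$. Thus $e_H$ is block diagonal. On each block, I would invoke the Perron--Frobenius analysis already carried out in the text: $e_H|_{V_{\vev x}}$ is an idempotent whose matrix has strictly positive entries, so its spectral radius is the simple eigenvalue $1$ while all other eigenvalues vanish; consequently $e_H|_{V_{\vev x}}$ is the rank-one projection onto the line $\bR\, R_{\vev x}$. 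Summing over double cosets gives $e_H(\bR G) = \bigoplus_{\vev x} \bR\, R_{\vev x}$. Because the $R_{\vev x}$ have disjoint supports they are linearly independent, so they form a basis of $e_H(\bR G)$, which we already know to be a subalgebra.

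I expect the only point needing genuine care to be the block-diagonality claim, namely that $e_H$ does not mix distinct double cosets; this is precisely the step that uses the defining property of $\vev x$. Everything else---closure of the image under multiplication, and the rank-one structure of $e_H$ on each block---follows formally from the idempotency $R_H^2 = R_H$ and the Perron--Frobenius theorem.
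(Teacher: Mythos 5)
Your proof is correct and follows essentially the same route as the paper, which simply notes that the set spans $R_H(\bR G)R_H$ (a subalgebra since $R_H^2=R_H$) by the preceding Perron--Frobenius discussion. You have merely spelled out in more detail the block-diagonality and rank-one-projection facts that the paper's terser argument relies on.
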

\begin{proof}
This is immediate  since this set spans $R_H (\bR G) R_H$ by the discussions above.
\end{proof}

\begin{thm}
Given a strict hypergroup $G$ and a subhypergroup $H$,
we define $N_{\vev x \vev y}^{\vev z}$ via \begin{equation}
R_{\vev x} R_{\vev y} = \sum_{\vev z\in G//H} N_{\vev x \vev y}^{\vev z} R_{\vev z} ~.
\end{equation}
This makes $G//H$ into a strict hypergroup, with the multiplication given by \begin{equation}
\vev x \vev y = \sum_{\vev z\in G//H} N_{\vev x \vev y}^{\vev z} \vev z ~.
\end{equation}
\end{thm}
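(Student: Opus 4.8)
The plan is to transport every hypergroup axiom from the ambient algebra $\bR G$ through the linear isomorphism $\vev x \mapsto R_{\vev x}$, which by the preceding proposition identifies $\bR(G//H)$ with the subalgebra $R_H(\bR G)R_H \subset \bR G$. Associativity of the product on $G//H$ is then immediate, being inherited from the associativity of $\bR G$. Moreover $\vev e = H$ gives $R_{\vev e}=R_H$, and since $R_H^2=R_H$ this element acts as the identity on $R_H(\bR G)R_H$, so the class $\vev e$ serves as the unit, settling axiom (i). The remaining work is to check non-negativity of the structure constants, the existence of a compatible involution, and axiom (iii) together with strictness.

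I would dispose of positivity of $N_{\vev x\vev y}^{\vev z}$ first. Because distinct double cosets partition $G$, the vectors $R_{\vev z}$ have pairwise disjoint support in the basis $G$, and each has strictly positive coefficients (all $d_y>0$). Hence for any basis element $y_0\in\vev z$ the coefficient of $y_0$ in $R_{\vev x}R_{\vev y}=\sum_{\vev z}N_{\vev x\vev y}^{\vev z}R_{\vev z}$ equals $N_{\vev x\vev y}^{\vev z}$ times the positive number $(R_{\vev z})_{y_0}$; as the left-hand product has non-negative coefficients (all structure constants of $G$ are non-negative), this forces $N_{\vev x\vev y}^{\vev z}\ge 0$. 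For the involution I would set $\overline{\vev x}:=\vev{\bar x}$, which is well defined since $y\prec hxh'$ implies $\bar y\prec \bar h' \bar x \bar h$, so conjugation permutes double cosets. Using $d_{\bar y}=d_y$ and the bijection $y\mapsto\bar y$ from $\vev x$ onto $\vev{\bar x}$, one checks that the ambient involution satisfies $\overline{R_{\vev x}}=R_{\overline{\vev x}}$; applying it to the defining relation then yields $N_{\overline{\vev y}\,\overline{\vev x}}^{\overline{\vev z}}=N_{\vev x\vev y}^{\vev z}$, which is axiom (ii).

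The hard part will be axiom (iii) together with strictness, namely showing $N_{\vev x\vev y}^{\vev e}=1$ when $\vev y=\overline{\vev x}$ and $0$ otherwise. Here I would apply the functional $u$ (extraction of the coefficient of $e$) to both sides of the defining relation. Since $\vev e=H$ contains $e$ and the cosets are disjoint, only the $R_{\vev e}=R_H$ term contributes, giving $u(R_{\vev x}R_{\vev y})=N_{\vev x\vev y}^{\vev e}\,(R_H)_e=N_{\vev x\vev y}^{\vev e}\big/\sum_{z\in H}d_z^2$. On the other hand, expanding both factors and using strictness of $G$ in the form $u(ab)=N_{ab}^e=\delta_{a,\bar b}$, the double sum collapses to a sum over $b\in\vev y$ with $\bar b\in\vev x$, i.e.\ over $\vev y\cap\overline{\vev x}$. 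By disjointness this set is empty unless $\vev y=\overline{\vev x}$, giving $N_{\vev x\vev y}^{\vev e}=0$ in that case; when $\vev y=\overline{\vev x}$ the surviving sum is $\sum_{b\in\vev y}d_b^2$, and with the identities $C_{\vev x}=C_{\overline{\vev x}}=C_{\vev y}$ the normalization constants $C_H=(\sum_{z\in H}d_z^2)^{1/2}$ and $C_{\vev x}=(\sum_{y\in\vev x}d_y^2)^{1/2}$ cancel to leave exactly $N_{\vev x\vev y}^{\vev e}=1$. This simultaneously verifies that $\overline{\vev x}$ is the unique inverse of $\vev x$ and that the quotient is strict. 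The only genuine obstacle is keeping the two families of normalization factors straight through the quantum-dimension bookkeeping; everything else follows formally from properties already established for $\bR G$.
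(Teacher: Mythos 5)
Your proposal is correct and follows essentially the same route as the paper: everything except axiom (iii) and strictness is inherited from the identification of $\bR(G//H)$ with the subalgebra $R_H(\bR G)R_H$ spanned by the $R_{\vev x}$, and the remaining conditions are extracted by applying the functional $u$ to $R_{\vev x}R_{\vev y}$ and using strictness of $G$ together with $d_{\bar b}=d_b$. The only cosmetic difference is that the paper establishes the vanishing of $N_{\vev x\vev y}^{\vev e}$ for $\vev y\neq\overline{\vev x}$ by a separate element-chasing argument ($h\prec xh'y$ implies $y\prec \bar h'\bar x h$, hence $y\in\vev{\bar x}$), whereas you read it off from the same $u$-computation via disjointness of double cosets; both work.
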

\begin{proof}
The only remaining step is to show that $N_{\vev x\vev{\bar x}}^{\vev e}=1$
and that $N_{\vev x\vev y}^{\vev e}\neq 0$ if and only if $y=\bar x$.
The former is a simple computation done by computing $u(R_{\vev x} R_{\vev{\bar x}})$.
To show the latter, assume that $N_{\vev x\vev y}^{\vev e}\neq 0$. 
Then $h\prec xh'y $ for some $h,h'\in H$, which means $u(\bar h xh'y)\neq 0$.
This is equivalent to $y \prec \bar h' \bar x h$ 
and therefore $y\in \vev{\bar x}$.
\end{proof}

\begin{rem}
$G//H$ is not necessarily a strict fusion algebra even when $G$ and $H$ are.
This is due to the fact that $N_{\vev x\vev y}^{\vev z}$ as determined above are not necessarily integers, even when $N_{xy}^z$ are.
\end{rem}

Note that the construction of the quotient hypergroup $G//H$ was done without any assumption of normality for $H$. 
We will now study the effect of two types of normalities on $H$.

\begin{defn}
A subhypergroup $H$ of $G$ is called normal when $Hx=xH$ for all $x\in G$, where \begin{equation}
Hx := \{y \mid  y\prec hx\ \ \text{for some}\ h\in H\}~,\quad
xH := \{y \mid  y\prec xh\ \ \text{for some}\ h\in H\}~.
\end{equation}
A subhypergroup $H$ of $G$ is called supernormal when $xH\bar x \subset H$ for all $x\in G$, where \begin{equation}
xH\bar x := \{y \mid  y\prec xh\bar x\ \ \text{for some}\ h\in H\}~.
\end{equation}
\end{defn}

\begin{rem}
For a subgroup $H$ of a group $G$, being supernormal is equivalent to being normal.
Also, $G/H$ when $H$ is normal agrees with $G//H$ as hypergroups.
\end{rem}

\begin{prop}
A supernormal subhypergroup $H$ of a hypergroup $G$ is normal.
\end{prop}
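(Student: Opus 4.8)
The plan is to establish the two set inclusions $xH\subseteq Hx$ and $Hx\subseteq xH$ for every $x\in G$, from which $Hx=xH$ and hence normality follow at once. The whole argument rests on a single ``coset dictionary,'' which I would obtain from the identity $N_{xy}^z=u(xy\bar z)/w_z$ together with the invariance $u(a)=u(\bar a)$: for basis elements $x,y,z$ one has
\begin{equation}
z\prec y\bar x\iff y\prec zx,
\end{equation}
since both sides assert that $u(y\bar x\bar z)=u(zx\bar y)$ is nonzero. Because $H$ is closed under the involution, this says precisely that $y\in Hx$ if and only if the product $y\bar x$ contains some basis element of $H$.

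With this dictionary in hand, the inclusion $xH\subseteq Hx$ is short. Take a basis element $y\prec xh$ with $h\in H$. Multiplying on the right by $\bar x$ and using that $\prec$ is preserved under multiplication (a consequence of $N_{xy}^z\ge0$), we get $y\bar x\prec xh\bar x$. Supernormality gives $xh\bar x\subset H$, so every basis element occurring in $xh\bar x$—and therefore every basis element $z\prec y\bar x$—lies in $H$. By Prop.~\ref{prop:a8} the product $y\bar x$ is nonzero, so at least one such $z$ exists; the dictionary then converts $z\prec y\bar x$ (with $z\in H$) into $y\prec zx$, i.e.\ $y\in Hx$.

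For the reverse inclusion I would avoid re-running the computation and instead exploit the involution. Applying the inclusion just proved to the element $\bar x$ yields $\bar x H\subseteq H\bar x$. Now if $y\prec hx$ with $h\in H$, then from $\overline{ab}=\bar b\bar a$ we have $\bar y\prec\bar x\bar h$, so $\bar y\in\bar x H\subseteq H\bar x$; writing $\bar y\prec h'\bar x$ with $h'\in H$ and applying the involution once more gives $y\prec x\overline{h'}$, i.e.\ $y\in xH$. This closes the loop and establishes $Hx=xH$.

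The only genuinely delicate point is bookkeeping in the dictionary: one must select the correct member of the family of Frobenius-type identities for $u$ so that the conjugate product $xh\bar x$ furnished by supernormality matches exactly the one-sided product $y\bar x$ demanded by the coset relation, and one must notice that deriving the second inclusion by symmetry (rather than directly) keeps the argument within the identities already available and sidesteps any appeal to cyclicity of $u$. Everything else reduces to monotonicity of $\prec$ and the nonvanishing of products from Prop.~\ref{prop:a8}.
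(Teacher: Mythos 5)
Your proof is correct, but it takes a different (essentially dual) route from the paper's. The paper's argument starts from $y\prec hx$, inserts the element $x\bar x$ on the left --- which lies in $\bR H$ by supernormality and contains $e$ with positive coefficient --- and re-associates: $y\prec (x\bar x)hx = x(\bar x h x)$ with $\bar x h x\in\bR_{\ge 0}H$, giving $Hx\subseteq xH$ in a single step with no appeal to Frobenius reciprocity or to Prop.~\ref{prop:a8}; the reverse inclusion is left implicit. You instead start from $y\prec xh$, multiply on the right by $\bar x$ to land inside $\bR_{\ge 0}H$ via supernormality, and then need both the reciprocity identity $z\prec y\bar x\iff y\prec zx$ (which does follow from $N_{xy}^z=u(xy\bar z)/w_z$ together with $u(a)=u(\bar a)$ and $\overline{y\bar x\bar z}=zx\bar y$, so your ``dictionary'' is sound) and the nonvanishing of $y\bar x$ from Prop.~\ref{prop:a8} to extract a witness $z\in H$. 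What your version buys is an explicit and clean treatment of the second inclusion via the involution, which the paper glosses over with ``it suffices to show''; what the paper's version buys is brevity, since the single regrouping $(x\bar x)hx=x(\bar x hx)$ does all the work at once. Both arguments use supernormality in the same two forms $xH\bar x\subseteq H$ and $\bar x Hx\subseteq H$, and both rest on the monotonicity of $\prec$ coming from $N_{xy}^z\ge 0$.
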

\begin{proof}
Suppose $y\prec hx$ for some $h\in H$. It suffices to show that $y\prec xh'$ for some $h'\in H$.
To do so, note that supernormality implies $x\bar x \in \bR H$, and therefore
 $y \prec (x\bar x) h x= x (\bar x h x)$. 
 From $\bar x H x \subset H$, we see that we can pick an $h'\in H$ such that $y\prec x h'$.
\end{proof}

\begin{prop}
For a normal subhypergroup $H$, we have $Hx=xH=HxH$.
\end{prop}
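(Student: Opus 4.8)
The plan is to reduce everything to two structural facts about the set-product $ST:=\{y\mid y\prec st\ \text{for some}\ s\in S,\ t\in T\}$ of subsets $S,T\subseteq G$, and to bring in normality only at the very end. The two facts I would isolate are: (i) this product is associative, $(ST)U=S(TU)$; and (ii) $HH=H$ for a subhypergroup $H$. Once these are in place, the statement follows from a one-line chain of equalities.

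For (i), I would argue as follows. Given $y\in(ST)U$, there is a basis element $w\prec st$ (with $s\in S$, $t\in T$) and $u\in U$ with $y\prec wu$. The earlier consequence of $N^z_{xy}\ge 0$ — that $a\prec b$ and $c\prec d$ imply $ac\prec bd$ — gives $wu\prec (st)u$ from $w\prec st$; since a basis element occurring with positive coefficient in $wu$ must also occur with positive coefficient in the larger element $(st)u=s(tu)$, I obtain $y\prec s(tu)$. Expanding $s(tu)=\sum_z N^z_{tu}\,sz$ then forces $y\prec sz$ for some $z\prec tu$, i.e. $y\in S(TU)$; the reverse inclusion is the mirror image, using $s(tu)=(st)u$. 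Thus $(ST)U=S(TU)$, and in particular $HxH=(Hx)H=H(xH)$ and $H(Hx)=(HH)x$.

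Fact (ii) is immediate: as $H$ is a subhypergroup, $\bR H$ is closed under multiplication, so every $z\prec hh'$ with $h,h'\in H$ lies in $H$, giving $HH\subseteq H$; taking one factor to be $e$ gives the reverse inclusion. With (i) and (ii) established, and recalling that normality is precisely the statement $Hx=xH$, the conclusion reads
\[
HxH=H(xH)=H(Hx)=(HH)x=Hx,
\]
where the outer equalities are associativity, the middle one is normality, and the last is $HH=H$. Combined with $Hx=xH$ this yields $Hx=xH=HxH$.

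I expect the only real friction to lie in step (i): because $\prec$ compares a basis element to a general element of $\bR G$, I must take care that ``$y\prec wu$ and $wu\prec s(tu)$'' legitimately yields ``$y\prec s(tu)$''. This transitivity is exactly where non-negativity of the structure constants is essential — passing to an element that differs by a sum with non-negative coefficients can only preserve, never cancel, the positive coefficient of $y$. Everything else is routine expansion of products in $\bR G$.
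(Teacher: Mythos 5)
Your proof is correct: the two facts you isolate (associativity of the set product, which hinges on non-negativity of the structure constants to make $\prec$ behave transitively under multiplication, and $HH=H$ for a subhypergroup containing $e$) combine with normality exactly as you describe. The paper simply declares this proposition ``Immediate,'' so your writeup supplies precisely the routine details being omitted, with no divergence in approach to report.
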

\begin{proof}
Immediate.
\end{proof}
\begin{rem}
This means that for a normal subhypergroup, one can introduce a hypergroup structure on the left coset as in the group case,  since it is just a special case of the quotient hypergroup structure on the double coset.
\end{rem}

The condition of normality appears naturally in relation to morphisms between hypergroups.

\begin{defn}
Given two hypergroups $L$ and $K$, a map $\phi:L\to K$ is a morphism if it preserves the involution and if it extends to an algebra homomorphism $\phi:\bR L \to \bR K$.
The image of a morphism $\phi:L\to K$ is simply the image as the map between two sets.
The kernel of a morphism $\phi:L\to K$ is the inverse image of $e\in K$.
\end{defn}

\begin{prop}
The image of a morphism is a hypergroup.
\end{prop}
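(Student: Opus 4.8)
The plan is to show that the image $M:=\phi(L)\subseteq K$ is a subhypergroup of $K$, and then to read off all axioms of Definition~\ref{maindef} from those already satisfied by $K$, with every structure constant of $M$ simply inherited from $K$. The crucial observation making this clean is that $\phi\colon L\to K$ sends basis elements to basis elements, so $M$ is a genuine subset of the distinguished basis of $K$; hence the whole task reduces to checking closure (under the involution and the product) together with the three defining conditions.

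First I would verify closure under the involution: for any $m=\phi(x)\in M$, involution-preservation gives $\overline m=\overline{\phi(x)}=\phi(\overline x)\in M$, so $M$ is stable under $\bar{\cdot}$. The preliminary step I would then establish is that $\phi(e_L)=e_K$, which is what places a unit inside the image. Setting $k_0:=\phi(e_L)$, a single basis element of $K$, one has $k_0^2=\phi(e_L^2)=k_0$ since $\phi$ is an algebra homomorphism, and $\overline{k_0}=\phi(\overline{e_L})=\phi(e_L)=k_0$ since $\overline{e_L}=e_L$ and $\phi$ preserves the involution. By condition iii) for $K$, $k_0=\overline{k_0}$ forces $N_{k_0k_0}^{e_K}\neq0$, i.e.\ $e_K\prec k_0^2=k_0$; as $k_0$ is a basis vector this is possible only if $k_0=e_K$.

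The central computation is closure of $\bR M$ under the product with non-negative structure constants. For $m=\phi(x)$ and $m'=\phi(y)$ in $M$, using that $\phi$ is an algebra homomorphism,
\begin{equation}
m\,m'=\phi(x)\phi(y)=\phi(xy)=\sum_{z\in L}N_{xy}^{z}\,\phi(z)\in\bR M .
\end{equation}
Comparing with the expansion $m m'=\sum_{k\in K}N_{mm'}^{k}k$ in $\bR K$ and using that $\{k\}$ is a basis, all components outside $M$ vanish, while the surviving coefficients $N_{mm'}^{m''}$ ($m''\in M$) are exactly the structure constants of $K$, hence non-negative. With $e_K\in M$ acting as the unit, associativity inherited from $\bR K$, and conditions ii) and iii) for $M$ obtained directly from the corresponding identities in $K$ (noting that $\overline m,\overline{m'}\in M$), every axiom of Definition~\ref{maindef} is verified, so $M$ is a hypergroup.

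The step I expect to require the most care is establishing $\phi(e_L)=e_K$: this is the one place where one must genuinely invoke hypergroup axiom iii) rather than general algebra-homomorphism formalities, and it is what guarantees the image actually contains a unit. Everything else is pure inheritance from $K$, the only minor subtlety being the routine check, carried out above, that products of elements of $M$ do not leak outside $\bR M$.
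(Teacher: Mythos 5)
Your proof is correct and is simply the detailed unpacking of what the paper dismisses as ``Immediate'': the image is a subset of the distinguished basis of $K$ closed under involution and product, with all axioms inherited. The one step you rightly single out as non-formal, namely $\phi(e_L)=e_K$ via axiom iii), is handled correctly.
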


\begin{proof}
Immediate.
\end{proof}

\begin{prop}
The kernel $H$ of a morphism $\phi:L\to K$ is a normal subhypergroup of $L$.
\end{prop}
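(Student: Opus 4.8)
The plan is to first verify that $H=\phi^{-1}(e)$ is a subhypergroup, and then to identify both $Hx$ and $xH$ with the full fiber $\phi^{-1}(\phi(x))$, from which $Hx=xH$ is immediate.

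For the subhypergroup property, closure under the involution is routine: if $\phi(x)=e$ then $\phi(\bar x)=\overline{\phi(x)}=\bar e=e$. Closure of $\bR H$ under the product is where I would use non-negativity of the structure constants. For $x,y\in H$ I would write $\phi(xy)=\phi(x)\phi(y)=e$ and expand $\phi(xy)=\sum_z N_{xy}^z\,\phi(z)=\sum_{k}\bigl(\sum_{z:\,\phi(z)=k}N_{xy}^z\bigr)k$. Comparing with $e$ and using $N_{xy}^z\ge 0$, every basis element $k\neq e$ must have vanishing total coefficient, forcing $N_{xy}^z=0$ whenever $\phi(z)\neq e$; hence every $z\prec xy$ lies in $H$. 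Since $\phi$ is unital, $e\in H$ as well.

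For normality I would show $Hx=\phi^{-1}(\phi(x))=xH$. The inclusion $Hx\subseteq\phi^{-1}(\phi(x))$ is easy: $y\prec hx$ with $h\in H$ gives $\phi(y)\prec\phi(h)\phi(x)=\phi(x)$, and since $\phi(x)$ is a single basis element this forces $\phi(y)=\phi(x)$; the same argument handles $xH$. The substantive inclusion is $\phi^{-1}(\phi(x))\subseteq Hx$. Given $\phi(y)=\phi(x)$, I would inspect the coefficient of $e$ in $\phi(y\bar x)=\phi(x)\overline{\phi(x)}$. By property (iii) this coefficient $w_{\phi(x)}$ is strictly positive, so expanding $\phi(y\bar x)=\sum_z N_{y\bar x}^z\phi(z)$ and using non-negativity again, there must exist $h\in H$ with $h\prec y\bar x$. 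A short manipulation with the duality $z\prec ab\iff\bar a\prec b\bar z$ (together with $N_{ab}^c=N_{\bar b\bar a}^{\bar c}$, which follows from property ii) then converts $h\prec y\bar x$ into $y\prec hx$, placing $y\in Hx$. Running the same argument with $\bar x y$ in place of $y\bar x$ yields $\phi^{-1}(\phi(x))\subseteq xH$, and the three identifications give $Hx=xH$.

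I expect the main obstacle to be the normality half, specifically producing a kernel element on the correct side: the naive attempt to invoke Prop.~\ref{propx} only yields some $z\in L$ with $y\prec xz$, with no control over $\phi(z)$, and a general hypergroup relation such as $\phi(x)\prec\phi(x)\phi(z)$ does \emph{not} force $\phi(z)=e$. The trick that makes the argument go through is to instead extract the kernel element directly from the guaranteed appearance of $e$ in $\phi(x)\overline{\phi(x)}$, and only afterwards move it across using the conjugation identities; keeping track of which duality move transports an element from the left factor to the right factor is the one place where care is needed.
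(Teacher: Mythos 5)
Your proof is correct and follows essentially the same route as the paper's: both establish the key identity $Hx=\phi^{-1}(\phi(x))=xH$, with the only difference being that you prove the harder inclusion directly (extracting a kernel element from the positive coefficient of $e$ in $\phi(x)\overline{\phi(x)}$) where the paper argues contrapositively. Your explicit verification that $H$ is a subhypergroup, using non-negativity of the $N_{xy}^z$ to rule out cancellation, is a worthwhile addition that the paper leaves implicit.
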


\begin{proof}
Let us show $xH=\phi^{-1}(\phi(x))$.
To show this, say $y\prec xh$ for some $h\in H$. Applying $\phi$, we have $\phi(y) \prec \phi(x)\phi(h)=\phi(x)$.
Conversely, say $y\prec xh$ for no $h\in H$. This means $h\prec \bar yx$ for no $h\in H$.
Therefore $\phi(\bar y)\phi(x)$ does not contain $e$, and $\phi(x)\neq \phi(y)$.
We conclude that $xH=\phi^{-1}(\phi(x))$.
We can analogously show that $Hx=\phi^{-1}(\phi(x))$. Therefore $xH=Hx$.
\end{proof}

\begin{rem}
Summarizing, associated to a normal subhypergroup $H\subset G$, there is a short exact sequence \begin{equation}
\{e\} \to H \to G \to G//H \to \{e\}
\end{equation} of hypergroup morphisms.
\end{rem}

\begin{prop}
The quotient hypergroup $G//H$ is a group if and only if $H$ is supernormal.
\end{prop}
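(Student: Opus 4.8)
The plan is to reduce the statement ``$G//H$ is a group'' to the pointwise condition that every basis element of $G//H$ be invertible, and then to translate that invertibility directly into supernormality by means of the generators $R_{\langle x\rangle}$ of the quotient.

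First I would prove a general lemma about strict hypergroups: \emph{a strict hypergroup $K$ is a group if and only if $x\bar x=e$ (as a single term) for every basis element $x$}. The forward direction is immediate, since in a group $\bar x$ is forced to equal $x^{-1}$ by strictness. For the converse, suppose $x\bar x=e$ for all $x$. Then $d_x^2=d_x d_{\bar x}=d_{x\bar x}=d_e=1$, using Prop.~\ref{qd} and $d_x=d_{\bar x}$, so $d_x=1$ for every $x$; feeding this into $d_x d_y=\sum_z N_{xy}^z d_z$ gives $\sum_z N_{xy}^z=1$. On the other hand $xy\,\overline{xy}=x(y\bar y)\bar x=x\bar x=e$, and extracting the coefficient of $e$ with the strictness relation $N^e_{w\bar w'}=\delta_{w,w'}$ yields $\sum_z (N_{xy}^z)^2=1$. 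For nonnegative numbers $p_z:=N_{xy}^z$ the two identities $\sum_z p_z=\sum_z p_z^2=1$ force $\sum_z p_z^2\le(\max_z p_z)\sum_z p_z\le 1$ with equality, hence a single $p_z$ equals $1$ and the rest vanish. Thus every product is a single basis element and $K$ is a group.

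Next I would identify the support of $\langle x\rangle\,\overline{\langle x\rangle}=\langle x\rangle\langle\bar x\rangle$ (recall the involution on $G//H$ sends $\langle x\rangle$ to $\langle\bar x\rangle$). Since $R_H^2=R_H$, the element $R_HxR_H$ is fixed by $e_H$ and lies in $V_{\langle x\rangle}$, so it is proportional to the Perron--Frobenius vector $R_{\langle x\rangle}$ (cf.\ Prop.~\ref{prop:R}); likewise $R_{\langle\bar x\rangle}\propto R_H\bar xR_H$. Hence $R_{\langle x\rangle}R_{\langle\bar x\rangle}\propto R_HxR_H\bar xR_H$, all of whose expansion coefficients are nonnegative. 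Because the $d_y$ are strictly positive and no cancellation can occur, a basis vector $z$ appears iff $z\prec h_1 x h_2 \bar x h_3$ for some $h_1,h_2,h_3\in H$, i.e.\ iff $z$ lies in the set product $HxH\bar xH$. Therefore the only double coset occurring is $\langle e\rangle$ precisely when $HxH\bar xH\subseteq H$, which—using $HH=H$ and $e\in H$—is equivalent to $xH\bar x\subseteq H$. By strictness of $G//H$ the coefficient of $\langle e\rangle$ is then $1$, so this is exactly the assertion $\langle x\rangle\langle\bar x\rangle=\langle e\rangle$.

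Assembling the pieces: by the lemma, $G//H$ is a group iff $\langle x\rangle\langle\bar x\rangle=\langle e\rangle$ for every $x$, and by the support analysis this holds iff $xH\bar x\subseteq H$ for every $x$, which is precisely the definition of supernormality. The main obstacle I anticipate is the support step—specifically pinning down $R_{\langle x\rangle}\propto R_HxR_H$ and verifying that passing to the set product $HxH\bar xH$ introduces no cancellations—since once the lemma and this identification are in place the rest is routine bookkeeping.
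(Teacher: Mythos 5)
Your proof is correct, and the ``if'' direction takes a genuinely different route from the paper's. Both arguments share the same skeleton for the easy direction (a group structure forces $\langle x\rangle\langle\bar x\rangle=\langle e\rangle$, hence the support of $R_H x R_H \bar x R_H$ lies in $H$, hence $xH\bar x\subset H$), and both rest on the identification $R_{\langle x\rangle}\propto R_H x R_H$, which the paper also uses. For the converse, however, the paper shows \emph{directly} that every product $R_{\langle x\rangle}R_{\langle y\rangle}$ is supported on a single double coset: if $z\prec xhy$ and $w\prec xh'y$, then $w\prec z(\bar y(\bar h h')y)$, and supernormality in the form $\bar y H y\subset H$ gives $w\sim_H z$; the coefficient is then pinned down by showing $d_{\langle x\rangle}=1$. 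You instead prove a general lemma --- a strict hypergroup is a group iff $x\bar x=e$ on the nose for every basis element --- via the clean observation that $d_x=1$ forces $\sum_z N_{xy}^z=1$ while $xy\,\overline{xy}=e$ forces $\sum_z (N_{xy}^z)^2=1$, and two nonnegative sequences with equal first and second power sums equal to $1$ must be a single unit entry. This reduces the whole problem to the support of $\langle x\rangle\langle\bar x\rangle$ alone, which ties to supernormality more immediately ($HxH\bar xH\subseteq H$ iff $xH\bar x\subseteq H$). Your lemma is a nice reusable criterion not stated in the paper, and it spares you the paper's computation that arbitrary products collapse to single cosets; the paper's argument, in exchange, exhibits the group law on $G//H$ explicitly rather than inferring its existence. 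The two points you flag as delicate --- that $R_H x R_H$ is the Perron--Frobenius vector of $e_H|_{V_{\langle x\rangle}}$, and that the iterated set product is well defined without cancellation --- are both legitimate and both follow from the strict positivity of the $d_y$ and the nonnegativity of the structure constants, exactly as you indicate.
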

\begin{proof}
If $G//H$ is a group, then $R_{\vev x} R_{\vev{\bar x}} = R_H$.
This requires that $x R_H \bar x \in \bR H$, meaning that $x H \bar x \subset H$.
So $H$ is supernormal.

Suppose conversely that $H$ is supernormal.
Let us first show that $R_{\vev x} R_{\vev y}\propto R_{\vev z}$ for some $z$.
For this, note that $R_{\vev x }\propto R_H x R_H$ and $R_{\vev y} \propto R_H y R_H$.
It then suffices to show that if $z\prec xhy$ and $w\prec xh'y$ for  $h,h'\in H$, 
we have $z \sim_H w$.
To see this, note that $z\prec xhy$ implies $x \prec z\bar y \bar h $,
meaning that $w\prec (z \bar y \bar h) h' y = z (\bar y (\bar h h') y)$.
Therefore $w\prec z h''$, where $h'' \in \bar y H y \subset H$.

Next, to determine the proportionality constant, note that $R_{\vev x} R_{\vev{\bar x}} = R_H$
thanks to our choice of normalizations.
This means that $d_{\vev x}=1$ for all $x$, which in turn implies that $R_{\vev x} R_{\vev y}=R_{\vev z}$ for some $z$.
Therefore we have found that $G//H$ is indeed a group.
\end{proof}

\begin{rem}
Note that in this case we have a hypergroup morphism $f:G\to X:=G//H$ where $G$ is a hypergroup and $X$ is a group.
If we expand the definition, this simply means that $N_{xy}^z\neq 0$  only if $f(x)f(y)=f(z)$.
Therefore $X$ gives a grading of the hypergroup multiplication law and/or the fusion rule coefficients $N_{xy}^z$.
It is natural to ask what the largest possible such grading $X$ is,
or equivalently, to ask what the smallest possible supernormal subgroup $H$ is.
\end{rem}

\begin{defn}
Let $\Com(G)=\{h \mid h \prec x\bar x  \ \ \text{for some} \ \ x\in G\}$.
We then define $\Com(G)^L$ to be ~\begin{equation}
\Com(G)^L = \{ y \mid  y \prec  x_1 x_2 \cdots x_L \ \ \text{for some}\ \ x_1, x_2,\ldots, x_L \in \Com(G)\}~.
\end{equation}
\end{defn}

\begin{prop}
We have $\Com(G)^L \subset \Com(G)^{L+1}$.
\end{prop}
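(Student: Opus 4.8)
The plan is to reduce everything to the single observation that the identity element $e$ belongs to $\Com(G)$. Once this is in hand, the inclusion follows by simply padding a length-$L$ product with one extra factor of $e$, which leaves the product unchanged because $e$ is the unit of the algebra $\bR G$.

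First I would verify that $e\in\Com(G)$. By the definition $\Com(G)=\{h\mid h\prec x\bar x\ \text{for some}\ x\in G\}$, so it suffices to exhibit one $x$ with $e\prec x\bar x$. This is precisely condition iii) of Definition~\ref{maindef}: taking $y=\bar x$ we have $x=\overline{y}$, hence $N_{x\bar x}^e\neq 0$, i.e.\ $e\prec x\bar x$. (Concretely one may even take $x=e$, using $\bar e=e$.) Either way, $e\in\Com(G)$.

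Next I would take an arbitrary $y\in\Com(G)^L$. By definition there exist $x_1,\ldots,x_L\in\Com(G)$ with $y\prec x_1 x_2\cdots x_L$. Since $e$ is the unit, $x_1 x_2\cdots x_L = x_1 x_2\cdots x_L\, e$, and therefore $y\prec x_1 x_2\cdots x_L\, e$. This last expression is a product of $L+1$ elements of $\Com(G)$, namely $x_1,\ldots,x_L,e$, so by the definition of $\Com(G)^{L+1}$ we conclude $y\in\Com(G)^{L+1}$. As $y$ was arbitrary, $\Com(G)^L\subset\Com(G)^{L+1}$.

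There is no real obstacle here; the only point requiring any care is confirming $e\in\Com(G)$, and that is an immediate consequence of the axioms rather than a genuine difficulty. The argument is purely formal and does not use commutativity, strictness, or any of the Perron--Frobenius machinery developed earlier.
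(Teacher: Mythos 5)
Your proof is correct and is exactly the paper's argument: the paper's proof consists of the single line ``This follows from the fact that $e\in\Com(G)$,'' and you have simply filled in the details of why $e\in\Com(G)$ (condition iii of Definition~\ref{maindef}) and why padding a product with the unit gives the inclusion. Nothing further is needed.
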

\begin{proof}
This follows from the fact that $e\in \Com(G)$.
\end{proof}

\begin{defn}
The ascending chain 
\begin{equation}
\Com(G) \subset \Com(G)^2 \subset \cdots \Com(G)^L \subset \Com(G)^{L+1} \subset  \cdots
\end{equation}
eventually stabilizes, as they are subsets of a finite set $G$.
We denote the limit by $\Com(G)^\infty$.
\end{defn}

\begin{prop}
$\Com(G)^\infty$ is a supernormal subgroup of $G$.
\end{prop}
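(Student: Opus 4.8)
The plan is to verify the two defining properties in turn: that $\Com(G)^\infty$ is a subhypergroup (closed under the involution and under the algebra product, and containing $e$), and that it is supernormal, i.e.\ $x\,\Com(G)^\infty\,\bar x \subset \Com(G)^\infty$ for every $x\in G$. Throughout I would work entirely at the level of the partial order $\prec$, using repeatedly the monotonicity property that $a\prec b$ and $c\prec d$ imply $ac\prec bd$, which is the crucial consequence of $N_{xy}^z\ge 0$.

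First, closure under the involution. The key observation is that each generator $x\bar x$ is self-conjugate: by property (ii) of Definition~\ref{maindef}, $\overline{x\bar x}=\overline{\bar x}\,\bar x=x\bar x$. Hence $h\prec x\bar x$ forces $\bar h\prec x\bar x$, so $\Com(G)$ is stable under $h\mapsto\bar h$; applying $\overline{z_1\cdots z_L}=\bar z_L\cdots\bar z_1$ propagates this to every $\Com(G)^L$ and to the limit. For closure under products I would use that the chain stabilizes: fix $N$ with $\Com(G)^\infty=\Com(G)^N=\Com(G)^{2N}$. Given $a,b\in\Com(G)^\infty$ and any $c\prec ab$, write $a\prec z_1\cdots z_N$ and $b\prec z_1'\cdots z_N'$ with all $z_i,z_j'\in\Com(G)$; monotonicity gives $c\prec(z_1\cdots z_N)(z_1'\cdots z_N')$, a product of $2N$ elements of $\Com(G)$, so $c\in\Com(G)^{2N}=\Com(G)^\infty$. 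Together with $e\in\Com(G)$, this makes $\Com(G)^\infty$ a subhypergroup.

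The heart of the argument, and the step I expect to be the main obstacle, is supernormality. The crucial lemma is that $\Com(G)$ is already stable under conjugation by \emph{arbitrary} elements: if $z\prec u\bar u$ then for any $x$ we have $xz\bar x\prec x u\bar u\bar x=(xu)\overline{(xu)}$, so every constituent of $xz\bar x$ lies in $\Com(G)$, i.e.\ $x\,\Com(G)\,\bar x\subset\Com(G)$. To upgrade this to $\Com(G)^N$ I would use that $e\prec\bar x x$ always holds (Definition~\ref{maindef}(iii)), so that identities may be inserted freely under $\prec$: for $h\prec z_1\cdots z_N$ with $z_i\in\Com(G)$,
\begin{equation}
xh\bar x\prec x z_1\cdots z_N\bar x\prec x z_1(\bar x x)z_2(\bar x x)\cdots(\bar x x)z_N\bar x=(xz_1\bar x)(xz_2\bar x)\cdots(xz_N\bar x)~.
\end{equation}
Each factor $xz_i\bar x$ is a non-negative combination of elements of $\Com(G)$ by the lemma, so reading off the expansion termwise, any $w\prec xh\bar x$ satisfies $w\prec c_1\cdots c_N$ for suitable $c_i\in\Com(G)$, whence $w\in\Com(G)^N=\Com(G)^\infty$. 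This yields $x\,\Com(G)^\infty\,\bar x\subset\Com(G)^\infty$.

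Once supernormality is established, the preceding proposition (that $G//H$ is a group precisely when $H$ is supernormal) shows that the quotient $G//\Com(G)^\infty$ is an honest group, namely the groupification $\Gr[G]$ of the main text, so that $\Com(G)^\infty$ is exactly the kernel playing the role of the commutator subgroup. The point demanding care is the insertion-and-regrouping step above: since everything is done with $\prec$ rather than equalities, I would be explicit that the regrouping is justified by associativity together with monotonicity of $\prec$ under multiplication, and that the passage from $w\prec\prod_i(xz_i\bar x)$ to a single product $c_1\cdots c_N$ relies on all structure constants being non-negative so that a positive coefficient for $w$ in the total forces a positive coefficient in at least one summand.
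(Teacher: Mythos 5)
Your proof is correct, but it takes a genuinely different route from the paper's. The paper's argument is a one-liner: for $h\in\Com(G)^L$ it writes $xh\bar x = x\bar x h$, observes that every constituent of $x\bar x$ lies in $\Com(G)$, concludes $xh\bar x\in\Com(G)^{L+1}$, and then lets $L\to\infty$ using stabilization of the chain. That step silently invokes commutativity of the hypergroup (to move $h$ past $\bar x$), which is a standing assumption in the main text but is not part of Definition~\ref{maindef} in the appendix. Your argument instead establishes the sharper and commutativity-free fact $x\,\Com(G)\,\bar x\subset\Com(G)$ via $xz\bar x\prec (xu)\overline{(xu)}$, and then inserts $\bar x x\succ e$ between consecutive factors to show $x\,\Com(G)^L\,\bar x\subset\Com(G)^L$ at the \emph{same} level $L$; the final appeal to non-negativity of the structure constants to extract a single product $c_1\cdots c_N$ from the sum is exactly the care this step needs. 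You also explicitly verify the subhypergroup axioms (closure under the involution via self-conjugacy of $x\bar x$, and closure under products via $\Com(G)^N=\Com(G)^{2N}$), which the paper's proof of this proposition leaves implicit. The trade-off: the paper's proof is shorter but tied to the commutative setting and defers the subhypergroup check; yours is longer but self-contained, valid for general (not necessarily commutative) hypergroups, and gives the slightly stronger level-preserving conjugation statement.
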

\begin{proof}
For finite $L$, whenever $h \in \Com(G)^L$, we have $x h \bar{x}$ = $x \bar{x} h \in \Com(G)^{L+1}$, so $x H \bar{x} \subset \Com(G)^{L+1}$. Taking $L \rightarrow \infty$, we have $ x \Com(G)^\infty \bar{x} \subset \Com(G)^\infty $.
\end{proof}

\begin{prop}
Any supernormal subgroup $H$ of $G$ contains $\Com(G)^\infty$.
\end{prop}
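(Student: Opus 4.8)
The plan is to establish the single containment $\Com(G)\subset H$ first, and then bootstrap it to all powers. The key observation is that the defining inclusion of supernormality, $xH\bar x\subset H$, already captures the commutator set as soon as we feed in the unit. First I would recall that every subhypergroup contains the unit $e$, since $e$ is the unit of the subalgebra $\bR H$. Taking $h=e$ in the definition $xH\bar x=\{y\mid y\prec xh\bar x\ \text{for some}\ h\in H\}$ and using $xe\bar x=x\bar x$, I obtain $\{y\mid y\prec x\bar x\}\subset xH\bar x\subset H$ for every $x\in G$. Ranging over all $x\in G$, this is precisely the statement $\Com(G)\subset H$, by the definition $\Com(G)=\{h\mid h\prec x\bar x\ \text{for some}\ x\in G\}$.

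Next I would promote this to $\Com(G)^L\subset H$ for every finite $L$. Any basis element $w\in\Com(G)^L$ satisfies $w\prec z_1 z_2\cdots z_L$ for some $z_1,\ldots,z_L\in\Com(G)\subset H$. Since $H$ is a subhypergroup, $\bR H$ is closed under multiplication, so $z_1\cdots z_L\in\bR H$; its expansion in the basis $G$ therefore involves only elements of $H$, and in particular $w\in H$. Because the ascending chain $\Com(G)\subset\Com(G)^2\subset\cdots$ stabilizes to $\Com(G)^\infty$ (it consists of subsets of the finite set $G$), we conclude $\Com(G)^\infty\subset H$, which is what we wanted.

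I do not expect a genuine obstacle here: the entire content is the observation that supernormality tested on $h=e$ manufactures exactly $\Com(G)$, combined with the elementary closure of a subhypergroup under products. The one point deserving care is purely bookkeeping—confirming that the condition ``$y\prec x\bar x$'' is literally the membership condition defining $\Com(G)$, and that the support of any element of the subalgebra $\bR H$ lies in $H$. An alternative, slightly heavier route would invoke the quotient morphism $f:G\to G//H$, which is a group because $H$ is supernormal; then for $h\prec x\bar x$ one has $f(h)\prec f(x)\overline{f(x)}=e$ in the group $G//H$, forcing $f(h)=e$ and hence $h\in\ker f=H$. This reproduces $\Com(G)\subset H$ but relies on the earlier identification of $H$ with the kernel of the quotient map, so I would relegate it to a remark rather than use it as the main line of argument.
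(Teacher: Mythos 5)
Your proof is correct and follows essentially the same route as the paper's: specializing supernormality to $h=e$ gives $\Com(G)\subset H$, and closure of the subhypergroup $\bR H$ under products then yields $\Com(G)^\infty\subset H$. You have simply spelled out in more detail the two steps that the paper states in two sentences.
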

\begin{proof}
As $e\in H$ and $x H \bar x \subset H$, we have $\Com(G) \subset H$.
As $H$ is a subhypergroup, it follows that $\Com(G)^\infty \subset H$.
\end{proof}

\begin{defn}
We define the ``groupification'' $\Gr[G]$ of $G$ by $\Gr[G]:= G // \Com(G)^\infty$.
\end{defn}

\begin{rem}
From the discussions above, the groupification $\Gr[G]$ gives the finest grading $G\to \Gr[G]$ of a given hypergroup $G$.
Possible gradings $G\to X$ of a strict fusion algebra $G$ by a group $X$ were discussed without introducing general hypergroups and their quotients in \cite[Sec.~3.5]{EGNO}.
Although the conclusion there is important and useful, the discussion seems rather \textit{ad hoc}.
The authors recalled the basic theory of hypergroups here since they thought that hypergroups might provide slightly more context for the problem at hand.
\end{rem}

\section{Tree- and loop-level selection rules of non-invertible symmetries of WZW models}
\label{app:WZW}

In this appendix we study the tree- and loop-level selection rules
coming from non-invertible symmetries of  WZW models 
based on $\mathfrak{g}_k$ affine Lie algebras.
Before proceeding, we note that the WZW models themselves have the full $\mathfrak{g}_k$ affine symmetry,
which includes the ordinary Lie algebra symmetry for $\mathfrak{g}$.
The selection rules arising from these Lie algebras would be stronger 
than the selection rules coming solely from the fusion algebras associated to $\mathfrak{g}_k$.
The aim of this appendix is simply to use these fusion algebras as concrete examples
for which the techniques introduced in the main text can be applied.

Non-invertible symmetries of the diagonal $\mathfrak{g}_k$ WZW model are given by a modular tensor category,
whose simple objects are irreducible integrable representations of $\mathfrak{g}_k$.
As such, its closed string Hilbert space is organized into sectors labeled by $A(\mathfrak{g}_k) = P^k_+$,
the positive weights of $\mathfrak{g}$ whose heights are less than $k$.
$A(\mathfrak{g}_k)$ forms a commutative fusion algebra, to which we can apply our methods.
In particular, $A(\mathfrak{g}_k)$ itself gives the tree-level selection rules,
which reduces to a symmetry labeled by a finite Abelian group $\Gr[A(\mathfrak{g}_k)]$ at infinite loop level.

Our main interest will be in studying the ``conjugate pair length'' $\cl(A(\mathfrak{g}_k)) \in \mathbb{Z}_+$, which specifies 
the loop level at which the selection rules reduce to those of $\Gr[A(\mathfrak{g}_k)]$. 
As was explained in Sec.~\ref{subsec:fa}, conjugate pair length generalizes the notion of commutator length for a finite group. 
Concretely, we will find that $A({\mathfrak{su}}(2)_k)$, the fusion algebra for the diagonal $\mathfrak{su}(2)_k$ WZW model,  has conjugate pair length equal to 1.
A similar statement holds for the fusion algebras of non-diagonal $\mathfrak{su}(2)_k$ WZW models, when the chiral algebra extends.
On the other hand, we will see that the even part $A'$ of the fusion algebra $A({\mathfrak{su}}(2)_{2k})$ has $\cl(A')=2$. 
This even part is not a modular tensor category, and therefore cannot be directly used to label the closed-string sectors of a perturbative string theory, but nevertheless may be interesting in other contexts. 
Finally, using computational means we show that $\cl(A(\mathfrak{su}(N)_k))= 1$ for $N, k \leq 7$.

\subsection{Affine characters and fusion rules}

We begin by reviewing the basic WZW fusion rules; for a comprehensive review, we refer the reader to  \cite{DiFrancesco:1997nk}.
The primary fields in a $\mathfrak{g}_k$ WZW model are labeled by integral representations ${\lambda}$ of the ${\mathfrak{g}}_k$ affine Lie algebra at level $k$, which descends to a simple Lie algebra $\mathfrak{g}$. If we consider a process in which incoming integrable representations ${\lambda}$ and ${\mu}$ turn into an outgoing integrable representation ${\nu}$, then the fusion coefficients $\mathcal{N}^{(k) {\nu}}_{{\lambda} {\mu}}$ are defined by
\begin{equation}
    {\lambda} \hat\otimes {\mu} = \bigoplus_{{\sigma} \in P^k_+} \mathcal{N}^{(k) {\nu}}_{{\lambda} {\mu}} {\nu}~,
\end{equation}
where $P^{(k)}_+$ is the affine Weyl chamber of ${\mathfrak{g}}$ at level k.
For $\mathfrak{su}(N)_k$,
an element $x \in P^{(k)}_+$ is simply a sequence of $N$ non-negative integers $[x_1,\ldots,x_N]$ with $x_1+\cdots+x_N=k$.
In particular, for $N=2$ we have \begin{equation}
P^{(k)}_+ = \{
[k,0],\ 
[k-1,1],\ 
[k-2,2],\ 
\ldots,
[0,k]
\}~.
\end{equation}

The fusion coefficients  $\mathcal{N}^{(k) {\nu}}_{{\lambda} {\mu}}$ can be computed from the modular $\mathcal{S}$ matrix using the Verlinde formula \cite{VERLINDE1988360},
\begin{equation}
    \mathcal{N}^{(k) {\nu}}_{{\lambda} {\mu}} = \sum_{{\sigma} \in P^k_+} \frac{\mathcal{S}_{{\lambda} {\sigma}} \mathcal{S}_{{\mu} {\sigma}} \overline{\mathcal{S}}_{{\nu} {\sigma}}}{\mathcal{S}_{0 {\sigma}}}~.
\end{equation}
For $\mathfrak{su}(N)_k$, they can alternatively be computed via a combinatorial algorithm based on Young tableaux. Roughly speaking, this proceeds in two steps,
\begin{itemize}
    \item First, we determine the fusion coefficients $\mathcal{N}_{\lambda\mu}^\nu$ of the non-affine Lie algebra $\mathfrak{g}$. These can be obtained by e.g. the Littlewood-Richardson algorithm \cite{littlewood1934group}.
    \item Next we perform the affinization. Concretely, given a specific level $k$, we put the non-affine coefficients into the Kac-Walton formula \cite{kac1990infinite,walton1990fusion},    \begin{equation}
    \mathcal{N}^{(k) {\nu}}_{{\lambda} {\mu}} = \sum_{w \in \hat{W}, \ \ w \cdot \nu \in P_+} \mathcal{N}^{w \cdot \nu}_{\lambda\mu} \epsilon(w)~,
    \end{equation}
    where $\hat{W}$ is the affine Weyl group of $\mathfrak{su}(N)_k$. 
    %For $\mathfrak{su}(N)_k$, the Kac-Walton formula also has a combinatorial realization that is reviewed in \cite{DiFrancesco:1997nk} which builds upon the Littlewood-Richardson algorithm. 
   Later we will see that this step amounts to doing a truncation on the tensor product coefficients of $\mathfrak{su}(N)$ representations, and that such a truncation always becomes trivial under the large $k$ limit $\mathcal{N}_{{\lambda}{\mu}}^{(\infty ) {\nu}} = \mathcal{N}_{\lambda\mu}^\nu $.
\end{itemize}
\noindent 
In some cases it is possible to write a closed form expression for $ \mathcal{N}^{(k){\nu}}_{{\lambda} {\mu}} $, as is the case for example for $\mathfrak{su}(2)_k$ \cite{Gepner:1986wi,Zamolodchikov:1986bd}, 
\begin{equation}
    \mathcal{N}^{(k){\nu}}_{{\lambda} {\mu}} = \left\{
    \begin{array}{cl}
        1 & |\lambda - \mu| \leq \nu \leq \min\{\lambda + \mu, 2k - \lambda - \mu \} \text{ and } \lambda + \mu + \nu = 0 \mod 2 \\
        0 & \text{ otherwise } 
    \end{array}
    \right.
\end{equation}

For a general $\mathfrak{g}_k$ WZW model, the modular-invariant partition function can be written in terms of holomorphic and anti-holomorphic characters $\chi_{{\lambda}}$, $\bar{\chi}_{{\lambda}'}$, together with a pairing matrix $\mathcal{M}$,
\bea
Z = \sum_{{\lambda}, {\lambda}'} \mathcal{M}_{{\lambda} {\lambda}'} \, \chi_{{\lambda}} \bar{\chi}_{{\lambda}'} ~. 
\eea
When $\mathcal{M}$ is the identity matrix, the modular invariant is referred to as diagonal. Whenever we discuss a $\mathfrak{g}_k$ WZW model without specifying the partition function, we are implicitly working with the diagonal invariant. In such cases, each characters appear exactly once, and all fields are spinless, meaning the holomorphic and anti-holomorphic conformal dimensions are the same, i.e. $h = \bar{h}$.

\begin{table}
\[
    \begin{array}{rll}
    A_{k+1}:  & \sum_{n = 0}^k |\chi_n|^2 &(\forall\,\, k)\\
    D_{2\ell+2}: & \sum_{n = 0}^{\ell-1} |\chi_{2n} + \chi_{4\ell - 2n} |^2 + 2|\chi_{2\ell}|^2 &(k = 4\ell) \\
    D_{2\ell+1}: & \sum_{n = 0}^{2\ell-1} |\chi_{2n}|^2 + |\chi_{2\ell-1}|^2 + \sum_{n = 0}^{\ell-2} (\chi_{2n+1} \bar{\chi}_{4\ell - 2n - 3} + \text{c.c.})  & (k = 4\ell-2) \\
    E_6: & |\chi_0 + \chi_6|^2 + |\chi_3 + \chi_7|^2 + |\chi_4 + \chi_{10}|^2 & (k = 10) \\
    E_7: & |\chi_0 + \chi_{16}|^2 + |\chi_4 + \chi_{12}|^2 + |\chi_6 + \chi_{10}|^0 + |\chi_8|^2 + (\chi_8(\bar{\chi}_2 + \bar{\chi_{14}}) + \text{c.c.}) & (k = 16) \\
    E_8: & |\chi_0 + \chi_{10} + \chi_{18} + \chi_{28}|^2 + |\chi_6 + \chi_{12} + \chi_{16} + \chi_{22}|^2 & (k = 28)
    \end{array}
\]
\caption{Modular invariants of ${\mathfrak{su}}(2)_k$.
For more details, see  \cite{DiFrancesco:1997nk}. \label{tab:invariants}}
\end{table}

It is also possible to construct WZW theories with off-diagonal modular invariants. 
In the case of $\mathfrak{g}=\mathfrak{su}(2)$ there are two infinite families of non-diagonal  D-type invariants, as well as a series of exceptional non-diagonal invariants, with corresponding partition functions given in Table~\ref{tab:invariants}. As indicated there, the existence of these invariants depends on the particular value of the level $k$. 
For example, for the ${\mathfrak{su}}(2)_4$ theory, there is a single off-diagonal modular invariant, given by
\begin{equation}
    Z = |\chi_0|^2 + |\chi_4|^2  + 2|\chi_2|^2  + \chi_0 \bar{\chi}_4 + \chi_4 \bar{\chi}_0~.
\end{equation} 
This corresponds to the invariant of type $D_4$. This off-diagonal modular invariant can be re-written as $|\chi_0 + \chi_4|^2 + 2|\chi_2|^2$, with the following interpretation: under the $\mathbb{Z}_2$ outer-automorphism folding, $\chi_0$ and $\chi_4$ combines into a new state, while the invariant $\chi_2$ splits into two separate states.

This behavior persists for all $D_{2\ell+2}$ type theories for $k=4\ell$.
The fusion rules in these cases are given as follows \cite{Moore:1988ss}. 
First, the irreducible modules are given by 
$\langle n\rangle  := [2n,4\ell-2n] + [4\ell-2n,2n]$ 
 for $0 \leq n < \ell$,
 which are 
the invariant combinations of characters under automorphism,
together with 
 $\langle\ell\rangle_{\pm}$, which are the pair arising from the middle characters $[2\ell,2\ell]$ which are invariant under the automorphism.
The fusion rules are then,
\begin{equation}
 \begin{aligned}
    \langle n_1\rangle \langle n_2\rangle  &= \sum_{j = |n_1 - n_2|}^{\mathrm{max}(|n_1 + n_2|, 8\ell-|n_1 + n_2| ) } \langle j\rangle  ~, \\
    \langle n\rangle \langle l\rangle  &= \sum_{j = l-n, \ j \neq l}^{l + n} \langle j\rangle  + x_n \langle l\rangle _+ + (1 - x_n) \langle l\rangle _-~, \\
    \langle l\rangle _{\pm}  \langle l\rangle _{\pm} &= \sum_{n = 0}^{l-1} x_{n+l} \langle n\rangle  + \langle x_{\pm}\rangle~,  \\
    \langle l\rangle _{\pm}  \langle l\rangle _{\mp} &= \sum_{n = 0}^{l-1} (1 - x_{n+l}) \langle n\rangle~, 
\end{aligned}
    \label{eq:su2_Deven_fusion}
\end{equation}
where $x_n = 1$ for $n$ odd and $0$ for $n$ even, with $x_{n+l}$ being understood similarly.
We have also introduced $\langle 2l-n\rangle := \langle n\rangle $ and $\langle l\rangle:= \langle l\rangle_+ + \langle l\rangle_- $ for notational uniformity.

One way to generate diagonal invariants is to consider conformal embeddings  $\mathfrak{su}(2)_k \subset \mathfrak{g}_{k'}$, for which the diagonal invariants of $\mathfrak{g}_{k'}$ induce non-diagonal invariants of $\mathfrak{su}(2)_k$. For example, the case of $D_4$ mentioned above follows from the conformal embedding ${\mathfrak{su}}(2)_4 \subset {\mathfrak{su}}(3)_1$. A similar thing can be done using the conformal embeddings ${\mathfrak{su}}(2)_{10} \subset {\mathfrak{sp}}(2)_1$ and ${\mathfrak{su}}(2)_{28} \subset ({G}_2)_1$,
which give rise to the $E_6$ and $E_8$ invariants, respectively. 
The fusion rules of the $E_6$ and $E_8$ modular invariants can then be obtained by recalling that the fusion algebras of $({G}_2)_1$ and ${\mathfrak{sp}}(2)_1$ are Fibonacci and Ising, respectively.

\subsection{Symmetries at infinite loop order}

We next analyze the symmetries of the diagonal $\mathfrak{su}(N)_k$ WZW model at arbitrary loop order. As discussed in the main text,
this is given by the finite group $\Gr[A] = A// \Com(A)^\infty$. 
Let us first show that for $A= A(\mathfrak{g}_k)$, the group $\Gr[A]$ contains the center $Z(G)$ of $G$,
where $G$ is the simply-connected group with Lie algebra $\mathfrak{g}$.
To see this, note that the center $Z(G)$ acts on all possible representations of the affine Lie algebra, so a conjugate pair of representations has opposite center charges,
\begin{equation}
    c({\lambda}) = - c(\bar{\lambda}) \in Z(G)~,
\end{equation}
where $c(\lambda)$ is the center charge of the irreducible representation $\lambda\in P^k_+$.
Therefore,  the fusion product of $\lambda$ and $\bar\lambda$ has trivial center charge\footnote{More generally, the triple fusion coefficients are zero for those triplets whose center charge do not sum properly:
\begin{equation}
   c({\lambda}) + c({\mu}) \neq c({\nu}) \in Z(G)\ \   \Rightarrow \ \ N^{(k) {\nu}}_{{\lambda}{\mu}} = 0
\end{equation}} 
\begin{equation}
    c({\lambda} \cdot \bar{\lambda}) = 0 \in Z(G) 
\end{equation} and therefore
\begin{equation}
    x \prec {\lambda} \cdot \overline {\lambda} \ \ \ \Rightarrow \ \ \ c(x) = 0 \in Z(G)~.
\end{equation}
From this it follows that \begin{equation}
 \Com(A)^\infty \subset \{ x \mid c(x)=0\in Z(G) \}\label{comz}
\end{equation} which then means that \begin{equation}
\Gr[A] = A// \Com(A)^\infty \supset A//\{ x \mid c(x)=0\in Z(G) \} = Z(G)~.
\end{equation}
In particular, this  means that for ${\mathfrak{su}}(N)_k$ WZW models, the all-loop symmetry will at least contain $\mathbb{Z}_N$. 
Conversely, if we are able to show that every  irreducible representation with charge $0\in Z(G)$ appears in $\Com(A)^\infty$, then we can conclude that $\Gr[A]=Z(G)$. 
We will do this for $\mathfrak{su}(2)_k$ below. 

Incidentally, note that the conclusion  \eqref{comz} can also be understood from the statement \eqref{thminv} quoted in the main text. 
There, we noted that a mathematical theorem guarantees that $\Gr[A]$ equals the group $\Inv(A)$ of invertible objects of $A$.
For $\mathfrak{su}(N)_k$, it is easy to find a $\bZ_N$ subgroup of $\Inv(A)$.
Indeed, the irreducible representations $[k,0,\ldots,0]$, $[0,k,\ldots,0]$, \ldots, $[0,0,\ldots,k]$ of $\mathfrak{su}(N)_k$ can be checked to form a $\bZ_N$ subgroup, and it turns out that these elements give \textit{all} the invertible elements in the $\mathfrak{su}(N)_k$ fusion category, as shown in \cite{Fuchs:1990wb}. 

\subsection{Conjugate pair lengths}

We next study the conjugate pair length $\cl(A)$ of the fusion algebras associated to WZW models,
starting with the following result.

\begin{prop}
For any $k$, the fusion algebra $A=A({{\mathfrak{su}}(2)_k})$ for the diagonal $\mathfrak{su}(2)_k$ WZW model satisfies
\begin{equation}
\Gr[A] = \mathbb{Z}_2 \hspace{0.2 in} \mathrm{  and } \hspace{0.2 in} \mathrm{cl}(A) = 1~.
\end{equation}
\end{prop}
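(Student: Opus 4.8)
The plan is to compute $\Com(A)$ completely explicitly and to show that it already coincides with the set of even-labelled representations, from which both assertions follow at once. Recall that for $\mathfrak{su}(2)_k$ every representation is self-conjugate, $\overline{\lambda}=\lambda$, so that $\Com(A)=\{z \mid z \prec \lambda\lambda \text{ for some } \lambda\}$. First I would set $\mu=\lambda$ in the closed-form fusion rule quoted above: the parity condition $\lambda+\mu+\nu\equiv 0 \bmod 2$ forces $\nu$ to be even, while the inequality collapses to $0\le \nu \le \min\{2\lambda,\,2k-2\lambda\}$. Hence every element of $\Com(A)$ carries an even label. For the converse, given any even label $2m$ (necessarily $2m\le k$ to be a valid weight), one checks that $\mathcal{N}^{(k)\,2m}_{m,m}=1$: the bound $0\le 2m\le\min\{2m,\,2k-2m\}$ holds exactly because $2m\le k$, and the parity condition $4m\equiv 0$ is automatic. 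Thus $2m\prec m\cdot m$, and $\Com(A)$ is precisely the set of even-labelled elements.

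To obtain $\cl(A)=1$, I would next observe that this set is closed under fusion: if $\lambda$ and $\mu$ are both even then $\lambda+\mu$ is even, so the same parity selection rule forces every $\nu\prec\lambda\mu$ to be even as well. Since $e\in\Com(A)$ always yields $\Com(A)\subseteq\Com(A)^2$, closure gives the reverse inclusion $\Com(A)^2\subseteq\Com(A)$, hence $\Com(A)^2=\Com(A)$, and by induction $\Com(A)^\infty=\Com(A)$. The ascending chain therefore stabilises already at the first step, which is the statement $\cl(A)=1$.

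Finally, for $\Gr[A]=\bZ_2$, I would combine the computation above with the general inclusion $\Com(A)^\infty\subseteq\{x\mid c(x)=0\in Z(SU(2))\}$ established earlier in this appendix, where $c(\lambda)=\lambda\bmod 2$ is the center charge. Since the even-labelled elements are exactly those with $c=0$, the two inclusions pin down $\Com(A)^\infty=\{x\mid c(x)=0\}$, upgrading the earlier $\bZ_2\subseteq\Gr[A]$ to an equality. Concretely, $c$ is a surjective $\bZ_2$-grading of $A$ whose kernel is this set, so by the hypergroup machinery of Appendix~\ref{app:hypergroups} it factors through a surjection $\Gr[A]=A//\Com(A)^\infty\to\bZ_2$; this map is also injective, since the classes with $c=0$ are exactly those lying in the double coset of $e$. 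Hence $\Gr[A]\cong\bZ_2$, the two double cosets being the even- and odd-labelled elements.

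The computation is essentially routine, so I do not expect a genuine obstacle; the only points requiring care are the bookkeeping in the constructive direction—verifying that every admissible even weight $2m\le k$ really survives in $m\cdot m$ rather than being removed by the affine cutoff $\min\{2\lambda,\,2k-2\lambda\}$—and the clean identification of the parity grading with the center charge, so that the present explicit argument dovetails with the general $Z(G)\subseteq\Gr[A]$ result and turns it into an equality.
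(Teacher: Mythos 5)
Your proof is correct and follows essentially the same route as the paper's: both identify $\Com(A)$ with the set of even-labelled (center-charge-zero) weights by direct computation with the $\mathfrak{su}(2)_k$ fusion rules, and then use the center-charge grading to conclude $\Gr[A]=\mathbb{Z}_2$ and $\cl(A)=1$. The only cosmetic difference is that the paper produces all even weights at once from the square of the single self-conjugate middle element $\lfloor k/2\rfloor$ and cites the general $Z(G)$ bound for the reverse inclusion, whereas you produce each even weight $2m$ from $m\cdot m$ and re-derive the upper bound and the stabilization $\Com(A)^2=\Com(A)$ explicitly from the parity selection rule.
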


\begin{proof}
This is immediate upon noting that the full set of $\mathbb{Z}_2$-symmetric elements is
\begin{equation}
    [k-j, j]~, \qquad 0 \leq j \leq \left \lfloor\frac{k}{2} \right\rfloor~
\end{equation}
with $j$ taking integer values, together with the fact that the non-$\ZZ_2$-symmetric element with $j = {1\over 2} \lfloor \tfrac{k}{2}\rfloor$ satisfies
\begin{equation}
    [k-j, j]  \cdot [k-j, j] = \sum_{\ell=0}^{\lfloor {k \over 2} \rfloor}\,\, [k-\ell, \ell]~. 
\end{equation}
Because all of the  $\mathfrak{su}(2)_k$ affine weights with center charge $0 \in \mathbb{Z}_2$ appear in this fusion product, the results of the previous section imply that $\Gr[A] = \mathbb{Z}_2$. Since this occurs already at one loop, we conclude that $\mathrm{cl}(A) = 1$. \end{proof}

Similar statements hold for the diagonal invariants of $\mathfrak{su}(3)_k$ for arbitrary $k$.
The proof proceeds by using the explicit form of the fusion rules in \cite{Begin:1992rt}, though we do not describe it here.
In the more general case of $\mathfrak{su}(N)_k$, we have checked via explicit computation in \texttt{SAGEMATH} that the following holds,
    \begin{equation}
        \Gr[(A(\mathfrak{su}(N)_k)] = \mathbb{Z}_N \ \  \mathrm{ and }\ \  \cl(A(\mathfrak{su}(N)_k)) = 1 \quad \mathrm{\ for\ }N, k \leq 7~.
    \end{equation}
It would be interesting to understand whether this pattern persists more generally.

Next, let us take $k=4\ell$ consider the fusion algebra 
$D_{\mathfrak{su}_k}$ 
given in \eqref{eq:su2_Deven_fusion}. In this case we have the following result,
\begin{prop}
We have \begin{equation}
\Gr[D_{\mathfrak{su}_k}] = \ZZ_1 \hspace{0.2 in} \mathrm{  and } \hspace{0.2 in} \mathrm{cl}(D_{{\mathfrak{su}}(2)_k}) = 1~.
\end{equation}
\end{prop}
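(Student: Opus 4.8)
The plan is to reduce both assertions to the single claim that $\Com(A) = A$, where I abbreviate $A := D_{\mathfrak{su}_k}$. All simple objects of $A$ are self-conjugate, this being inherited from the self-duality of the integrable representations of $\mathfrak{su}(2)_k$; hence $\bar y = y$ and $\Com(A) = \{\, z \mid z \prec y^2 \ \text{for some}\ y \in A\,\}$ is just the union of the supports of squares. If this equals $A$, then $\Com(A) = \Com(A)^\infty = A$, so $\Gr[A] = A//\Com(A)^\infty = A//A = \ZZ_1$, and the chain $\Com(A)\subset\Com(A)^2\subset\cdots$ has already stabilized at $L=1$, giving $\cl(A)=1$. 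Thus one square computation settles both claims at once.

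The concrete step is to exhibit one element whose square already exhausts $A$. I would take the largest non-fixed-point field $y = \langle\ell-1\rangle$ (available whenever $\ell\ge2$) and work in the unextended $\mathfrak{su}(2)_k$ fusion ring via the lift $\langle n\rangle = [2n]+[4\ell-2n]$. Using the closed-form $\mathfrak{su}(2)_k$ fusion coefficients recorded in this appendix, the $[2\ell-2]\times[2\ell-2]$ part of $y^2$ already equals $[0]+[2]+[4]+\cdots+[4\ell-4]$. Projecting back through the $D$-type identifications $[j]\equiv[4\ell-j]$ and $\langle j'\rangle=\langle 2\ell-j'\rangle$, and using that all fusion coefficients are non-negative so that nothing can cancel, this set of even labels covers $\langle n\rangle$ for every $0\le n\le \ell-1$, together with the middle label $[2\ell]$. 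Hence every non-fixed-point simple object lies in $\Com(A)$.

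The delicate point, and the step I expect to be the main obstacle, is to show that both fixed-point fields $\langle\ell\rangle_+$ and $\langle\ell\rangle_-$, into which the middle label $[2\ell]$ resolves, appear in $\Com(A)$. Rather than chase the precise fusion coefficient through the affine folding and the fixed-point resolution, I would argue by symmetry: $\langle\ell-1\rangle$ is invariant under the $\ZZ_2$ automorphism exchanging $\langle\ell\rangle_+\leftrightarrow\langle\ell\rangle_-$, so $y^2$ is invariant as well, forcing the coefficients of $\langle\ell\rangle_+$ and $\langle\ell\rangle_-$ in $y^2$ to be equal. Since the lift of $y^2$ has strictly positive coefficient on $[2\ell]$ (as just computed, $2\ell\le 4\ell-4$ for $\ell\ge2$), that common coefficient is positive, and so $\langle\ell\rangle_+\prec y^2$ and $\langle\ell\rangle_-\prec y^2$. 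Combined with the previous paragraph, $\Com(A)=A$, which proves the proposition.

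As an independent check one may use \eqref{thminv}, $\Gr[A]=\widehat{\Inv(\mathcal{M})}$, and verify directly that the $D_{2\ell+2}$ modular tensor category has no nontrivial invertible objects when $\ell\ge2$: every non-unit $\langle n\rangle$ descends from $[2n]$ with quantum dimension $d_{2n}>1$, while each fixed-point field has quantum dimension $d_{2\ell}/2 = 1/\bigl(2\sin(\pi/(4\ell+2))\bigr)$, which exceeds $1$ precisely for $\ell\ge2$; hence $\Inv(\mathcal{M})=\{e\}$ and $\Gr[A]=\ZZ_1$. This also pinpoints the one degenerate value: at $\ell=1$ the fixed-point fields become invertible and $A$ collapses to the conformal-embedding algebra $\mathfrak{su}(2)_4\subset\mathfrak{su}(3)_1$ with $A\cong\ZZ_3$, so the statement is to be read for $\ell\ge2$ (the argument above indeed requires $\langle\ell-1\rangle$ to be a genuine non-fixed-point field, i.e.\ $\ell\ge2$).
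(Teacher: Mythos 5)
Your argument is essentially the paper's: both proofs establish $\Com(A)=A$ by squaring the single self-conjugate element $\langle\ell-1\rangle=[2\ell-2]$, the paper reading the result (including both fixed-point fields, via the convention $\langle\ell\rangle=\langle\ell\rangle_++\langle\ell\rangle_-$) directly off \eqref{eq:su2_Deven_fusion}, where you instead lift to the unextended $\mathfrak{su}(2)_{4\ell}$ fusion ring and invoke the $\ZZ_2$ automorphism exchanging the fixed points. Your caveat that the argument, and indeed the statement itself, degenerates at $\ell=1$ (where the algebra collapses to $\ZZ_3$) is a correct refinement that the paper's one-line proof does not address.
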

\begin{proof}
This follows immediately by considering the fusion of the self-conjugate weight $[2\ell-2]$ with itself, which by  \eqref{eq:su2_Deven_fusion} includes all basis elements.
\end{proof}

The cases of the $E_6$ and $E_8$ modular invariants are simpler than the cases discussed above, since the fusion algebras in those cases are given simply by the Ising and Fibonacci fusion algebras, respectively.

Finally, assume that $k$ is even, and consider the $\bZ_{2}$-even part 
$A'_{\mathfrak{su}_k} \subset A(\mathfrak{su}_k)$
of the $\mathfrak{su}(2)_k$ fusion algebra, given by \begin{equation}
A'_{\mathfrak{su}_k} = \{[k,0], \ 
[k-2,2], \ 
[k-4,4],\ , \ldots, [0,k]\}~.
\end{equation}
This is a subalgebra of the fusion algebra $A(\mathfrak{su}_k)$,
and in fact gives a fusion subcategory of the modular tensor category $\mathcal{A}_{\mathfrak{su}_k}$.
This subcategory is not itself modular, though.\footnote{%
It is, however, super-modular, in the sense that there is a single invertible order-2 element (a `fermion')
which has trivial braiding with everything else \cite{bruillard2020classification}.
In this case, the `fermion' is the irreducible representation $[0, 2k]$.}
We then have the following results,
\begin{prop}
   The $\bZ_{2}$-even part 
$A'_{\mathfrak{su}_k} \subset A(\mathfrak{su}_k)$
of the $\mathfrak{su}(2)_k$ fusion algebra satisfies,
    \begin{equation}
   \Gr[A'_{\mathfrak{su}_k}] = \mathbb{Z}_2 \hspace{0.2 in}\mathrm{ and} \hspace{0.2 in}\mathrm{cl}(A'_{\mathfrak{su}_k}) = \left\{
    \begin{array}{ccc}
        1 & &{k\over 2}  \mathrm{even}  \\
        2 & & {k\over 2} \mathrm{odd}
    \end{array}
    \right..
    \end{equation}
\end{prop}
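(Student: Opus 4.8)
The plan is to reduce everything to a single explicit fusion formula on the even part. Writing the level-$k$ weight $[k-j,j]$ with $j$ even as $p:=j/2$, so that $A'_{\mathfrak{su}_k}=\{0,1,\dots,K\}$ with $K:=k/2$, the $\mathfrak{su}(2)_k$ fusion rule quoted in the main text restricts to
\begin{equation}
p\cdot q=\bigoplus_{r=|p-q|}^{\min(p+q,\,2K-p-q)} r~,
\end{equation}
with the residual parity constraint becoming vacuous, and with every object self-conjugate, $\bar p=p$. The invertible ``fermion'' is $p=K$, i.e.\ $[0,k]$, since $K\cdot K=0$. First I would record these facts, as all later steps are elementary manipulations of this one formula.

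Next I would compute $\Com(A')$. Since $\bar p=p$ we have $\Com(A')=\{r : r\prec p\cdot p\ \text{for some}\ p\}$, and $p\cdot p=\bigoplus_{r=0}^{\min(2p,\,2K-2p)} r$, so $\Com(A')=\{0,1,\dots,M\}$ with $M=\max_p\min(2p,2K-2p)$. The one arithmetic point is that this maximum is attained at the middle weight: $M=K$ when $K$ is even (take $p=K/2$), whereas $M=K-1$ when $K$ is odd (the integer choices $p=(K\pm1)/2$ both give $\min(K+1,K-1)=K-1$). Hence for $K=k/2$ even we already have $\Com(A')=A'$, the chain stabilises at $L=1$, and $\cl(A')=1$.

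For $K=k/2$ odd, $\Com(A')=\{0,\dots,K-1\}$ misses only the fermion, so I would simply exhibit $K$ as a product of two of its elements: with $p=1$ and $q=K-1$ (so $K\ge2$) one finds $p\cdot q=\bigoplus_{r=K-2}^{K} r\ni K$, so $K\in\Com(A')^2$ and therefore $\Com(A')^2=A'$. As $\Com(A')\subsetneq\Com(A')^2$, the chain stabilises exactly at $L=2$ and $\cl(A')=2$. This settles the commutator-length half of the proposition in both parities.

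The half I expect to be the real obstacle is the identification $\Gr[A'_{\mathfrak{su}_k}]=\bZ_2$. By definition $\Gr[A']=A'//\Com(A')^\infty$, but the computation above gives $\Com(A')^\infty=A'$ in \emph{both} parities, since the fermion is reached at $L=1$ or $L=2$; unwinding the equivalence relation $\sim$ of Appendix~\ref{app:hypergroups} then appears to collapse every object into the unit, i.e.\ to yield the trivial group rather than $\bZ_2$. The genuine difficulty is therefore to pin down what the asserted $\bZ_2$ should refer to: the natural candidate is the invertible symmetry $\Inv(A')=\{e,[0,k]\}\cong\bZ_2$, which---because $A'$ is only super-modular and not modular---need not agree with the fusion-ring groupification, precisely the failure of \eqref{thminv} already flagged for $\mathrm{TY}(\ZZ_3)$. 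Establishing this distinction, rather than the commutator-length bookkeeping, is where the content of the statement lies, and I would devote the bulk of the argument to it.
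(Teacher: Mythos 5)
Your treatment of the conjugate pair length is correct and is essentially the paper's own argument: both reduce to the explicit $\mathfrak{su}(2)_k$ fusion formula restricted to the even labels, observe that $\Com(A')=\{0,\dots,M\}$ with $M=\max_p\min(2p,2K-2p)$ equal to $K$ or $K-1$ according to the parity of $K=k/2$, and, in the odd case, exhibit the missing top element $[0,k]$ inside a product of two elements of $\Com(A')$. The paper uses the two weights adjacent to the middle one for the two-loop step (and, confusingly, writes its proof in a level-$2k$ normalization that does not match the proposition's level-$k$ statement), whereas you use the labels $j=2$ and $j=k-2$; both choices work, and yours has the minor virtue of staying in the normalization of the statement. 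The only caveat is the degenerate case $K=1$ (i.e.\ $k=2$), where $A'$ is already the group $\bZ_2$ and $\cl(A')=1$ despite $K$ being odd; neither you nor the paper excludes it explicitly, though you at least flag the hypothesis $K\ge 2$.

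Your suspicion about the claim $\Gr[A'_{\mathfrak{su}_k}]=\bZ_2$ is justified, and it is not a gap you should try to close: the paper's proof never addresses this half, and under the paper's own definition $\Gr[A]=A//\Com(A)^\infty$ the computation you (and the paper) perform gives $\Com(A')^\infty=A'$ in both parities, hence $\Gr[A']=\ZZ_1$. This is easy to confirm for $k=4$, where $A'$ is the $\Rep(S_3)$ fusion algebra, whose universal grading group is trivial. Note also that the immediately preceding proposition on $D_{\mathfrak{su}_k}$ uses exactly the same computation --- a self-conjugate element whose square contains all basis elements --- to conclude $\Gr=\ZZ_1$; consistency forces the same conclusion here. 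The $\bZ_2$ in the statement can only be $\Inv(A')=\{[k,0],[0,k]\}$, and the fact that $\Inv(A')\neq\Gr[A']$ is precisely the failure of \eqref{thminv} once one leaves the modular setting, $A'$ being only super-modular. So the defect lies in the statement, which conflates $\Gr$ with $\Inv$, rather than in your argument; your diagnosis of where the actual content sits is the correct one.
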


\begin{proof}

For $k/2$ even, the affine weight $[k, k]$ is in $A'_{\mathfrak{su}_k}$. The proof then proceeds in the usual fashion: we note 
 that $[k,k]$ is self-conjugate, and a short computation shows that the product $[k,k] \cdot [k,k]$ contains
all generators. 

On the other hand, for $k/2$ odd, 
we see that a single conjugate pair can produce all weights ranging from $[2k, 0]$ to $[2, 2k-2]$, but not $[0, 2k]$. 
These cases must thus have conjugate pair length greater than one.
Proceeding to two loops, we note that
\begin{equation}
 [k+1, k-1]~,\,\,\, [k-1, k+1]  \,\,\prec\,\, [k+1, k-1]\cdot [k+1, k-1]~,
\end{equation} while \begin{equation}
[0, 2k] \prec [k+1, k-1]\cdot [k-1, k+1]~. 
\end{equation}
We have thus generated all basis elements at two loops, and hence $\cl(A'_{\mathfrak{su}_k}) =2$.
\end{proof}

\section{Low-rank fusion algebras with conjugate pair length greater than one}
\label{app:lowrank}

All of the explicit examples of hypergroups encountered in the main text have conjugate pair length $\cl(A)$ equal to one; in other words, $\Com(A) = \Com(A)^\infty$. As a result, the selection rules studied in the main text all reduce to their all-loop forms already at one-loop. It is interesting to ask for examples of hypergroups with $\cl(A) >1$. In this appendix, we restrict to fusion algebras of multiplicity 1 (i.e. those whose fusion coefficients $N_{x,y}^z$ are 0 or 1), whose classification can be found for rank $r\leq 9$ in e.g. \cite{vercleyen2022low}, and list all examples with $\cl(A) >1$ for $r \leq 7$. In fact, we will see that all such examples have $\cl(A) =2$. Achieving higher values of $\cl(A)$ seems to  require proceeding to higher rank or multiplicity. Let us also mention that none of the examples below can be realized by modular tensor categories \cite{vercleyen2022low}. In fact, most of them cannot be realized by standard fusion categories either \cite{liu2022classification}. We will indicate when they can be realized as such. A brief summary of our results can be found in Table \ref{eq:fusiontable}.

\begin{table}[tp]
\begin{center}
\begin{tabular}{|c||c|c|c|c|c|}
\hline
Rank &  Algebras & Non-Abelian & Categorifiable & $\cl(A) > 1$ & $\cl(A) > 1$ and categorifiable
\\\hline\hline
3 & 4 & 0 & 4 & 0 & 0 
\\ 
4 & 10  & 0 & 9 & 2 &  2
\\ 
5 &  16 & 0 & 10 & 1 & 0 
\\ 
6 & 39  & 2  & 21 & 11 & 2
\\ 
7 & 43  & 3 & 12 & 5 & 0
\\
\hline
\end{tabular}
\end{center}
\caption{At each rank, we list the number of distinct multiplicity-1 fusion algebras, the number which are non-Abelian, the number which are categorifiable, the number which have  $\cl(A) > 1$ (in all cases, $\cl(A) =2$), and the number which both have $\cl(A) > 1$  and are categorifiable.  There is only one case which is both non-Abelian and has $\cl(A) > 1$, namely the  Haagerup-Izumi HI$(\ZZ_3)$ fusion algebra at rank-6. This case is also categorifiable.  }
\label{eq:fusiontable}
\end{table}%

\subsection{Rank 4} 

We begin at rank 4, where there are two multiplicity-1 fusion algebras with conjugate pair length greater than 1 (in particular, 2). Their fusion rules are contained in the following tables, where the element $1$ represents the identity and $2, 3, 4$ represent the non-trivial objects,

\bea
\begin{array}{c|cccc}
   & 1 & 2 & 3 & 4 \\\hline
 1 & 1 & 2 & 3 & 4 \\
 2 & 2 & 2+3+4 & 1+2+3 & 3 \\
 3 & 3 & 1+2+3 & 2+3+4 & 2 \\
 4 & 4 & 3 & 2 & 1 \\
\end{array}
\hspace{0.7 in}
\begin{array}{c|cccc}
   & 1 & 2 & 3 & 4 \\\hline
 1 & 1 & 2 & 3 & 4 \\
 2 & 2 & 1 & 4 & 3 \\
 3 & 3 & 4 & 1+3+4 & 2+3+4 \\
 4 & 4 & 3 & 2+3+4 & 1+3+4 \\
\end{array}
\eea
In both cases $\Com(A)^2=\Com(A)^\infty = \{1,2,3,4\}$. The second of these fusion rules is recognized as the fusion ring for PSU$(2)_6$, discussed previously in App.~\ref{app:WZW}, or equivalently for the Haagerup-Izumi category HI$(\ZZ_2)$. These fusion rules can be realized by a fusion category, and also admit a braiding, but they cannot be realized by a modular tensor category. The total quantum dimensions of both cases are $\mathcal{D}^2 \approx 13.6569$.

\subsection{Rank 5}

At rank 5, there is only a single multiplicity-1 fusion algebra with conjugate pair length greater than 1 (in particular, 2). Its fusion rules are given in the following table, 
\bea
\begin{array}{c|ccccc}
   & 1 & 2 & 3 & 4 & 5 \\\hline
 1 & 1 & 2 & 3 & 4 & 5 \\
 2 & 2 & 1 & 5 & 4 & 3 \\
 3 & 3 & 5 & 1+4+5 & 3+5 & 2+3+4 \\
 4 & 4 & 4 & 3+5 & 1+2+4 & 3+5 \\
 5 & 5 & 3 & 2+3+4 & 3+5 & 1+4+5 \\
\end{array}
\eea
with $\Com(A)^2 =\Com(A)^\infty =  \{1,2,3,4,5\}$. These fusion rules cannot be realized by a fusion category. The total quantum dimension is $\mathcal{D}^2 \approx 16.6056$.

\subsection{Rank 6} 

At rank 6, there are eleven multiplicity-1 fusion algebras with conjugate pair length greater than 1 (in particular, 2). Of them, only two can be realized by fusion categories. These two sets of fusion rules are given by the following tables,

\bea
\begin{array}{c|cccccc}
   & 1 & 2 & 3 & 4 & 5 & 6 \\\hline
 1 & 1 & 2 & 3 & 4 & 5 & 6 \\
 2 & 2 & 1 & 6 & 5 & 4 & 3 \\
 3 & 3 & 6 & 1+5+6 & 4+5+6 & 3+4+5 & 2+3+4 \\
 4 & 4 & 5 & 4+5+6 & 1+3+4+5+6 & 2+3+4+5+6 & 3+4+5 \\
 5 & 5 & 4 & 3+4+5 & 2+3+4+5+6 & 1+3+4+5+6 & 4+5+6 \\
 6 & 6 & 3 & 2+3+4 & 3+4+5 & 4+5+6 & 1+5+6 \\
\end{array}
\eea

\bea
\begin{array}{c|cccccc}
   & 1 & 2 & 3 & 4 & 5 & 6 \\\hline
1 & 1 & 2 & 3 & 4 & 5 & 6 \\
 2 & 2 & 3 & 1 & 6 & 4 & 5 \\
 3 & 3 & 1 & 2 & 5 & 6 & 4 \\
 4 & 4 & 5 & 6 & 1+4+5+6 & 2+4+5+6 & 3+4+5+6 \\
 5 & 5 & 6 & 4 & 3+4+5+6 & 1+4+5+6 & 2+4+5+6 \\
 6 & 6 & 4 & 5 & 2+4+5+6 & 3+4+5+6 & 1+4+5+6 \\\end{array}
\eea
\noindent
The first of these is the PSU$(2)_{10}$ WZW model, discussed already in App.~\ref{app:WZW}, with total quantum dimension $\mathcal{D}^2 \approx$ 44.7846. This case admits a braiding, but is not modular. The second case corresponds to the Haagerup-Izumi HI$(\ZZ_3)$ fusion rules, which do not admit a braiding. Unlike the other cases described in this appendix, these fusion rules are non-Abelian. This puts it somewhat outside of the class of fusion rules studied in the main text, though such examples could still have some physical applications, c.f. footnote \ref{footnote:largeN}.  The total quantum dimension is $\mathcal{D}^2 \approx 35.725$. 

For good measure, we also include the nine cases which cannot be realized by fusion categories, leaving it to the reader to come up with useful applications for them,

\bea
\begin{array}{c|cccccc}
   & 1 & 2 & 3 & 4 & 5 & 6 \\\hline
 1 & 1 & 2 & 3 & 4 & 5 & 6 \\
 2 & 2 & 6 & 1 & 5 & 4 & 3 \\
 3 & 3 & 1 & 6 & 5 & 4 & 2 \\
 4 & 4 & 5 & 5 & 2+3+4+5 & 1+4+5+6 & 4 \\
 5 & 5 & 4 & 4 & 1+4+5+6 & 2+3+4+5 & 5 \\
 6 & 6 & 3 & 2 & 4 & 5 & 1 \\
\end{array}
\eea

\bea
\begin{array}{c|cccccc}
   & 1 & 2 & 3 & 4 & 5 & 6 \\\hline
 1 & 1 & 2 & 3 & 4 & 5 & 6 \\
 2 & 2 & 6 & 1 & 5 & 4 & 3 \\
 3 & 3 & 1 & 6 & 5 & 4 & 2 \\
 4 & 4 & 5 & 5 & 1+4+5+6 & 2+3+4+5 & 4 \\
 5 & 5 & 4 & 4 & 2+3+4+5 & 1+4+5+6 & 5 \\
 6 & 6 & 3 & 2 & 4 & 5 & 1 \\
\end{array}
\eea

\bea
\begin{array}{c|cccccc}
   & 1 & 2 & 3 & 4 & 5 & 6 \\\hline
 1 & 1 & 2 & 3 & 4 & 5 & 6 \\
 2 & 2 & 2+3+4+5 & 1+2+3+6 & 3 & 3 & 2 \\
 3 & 3 & 1+2+3+6 & 2+3+4+5 & 2 & 2 & 3 \\
 4 & 4 & 3 & 2 & 1 & 6 & 5 \\
 5 & 5 & 3 & 2 & 6 & 1 & 4 \\
 6 & 6 & 2 & 3 & 5 & 4 & 1 \\
\end{array}
\eea

\bea
\begin{array}{c|cccccc}
   & 1 & 2 & 3 & 4 & 5 & 6 \\\hline
1 & 1 & 2 & 3 & 4 & 5 & 6 \\
 2 & 2 & 1 & 6 & 5 & 4 & 3 \\
 3 & 3 & 6 & 1 & 5 & 4 & 2 \\
 4 & 4 & 5 & 5 & 1+4+5+6 & 2+3+4+5 & 4 \\
 5 & 5 & 4 & 4 & 2+3+4+5 & 1+4+5+6 & 5 \\
 6 & 6 & 3 & 2 & 4 & 5 & 1 \\
\end{array}
\eea

\bea
\begin{array}{c|cccccc}
   & 1 & 2 & 3 & 4 & 5 & 6 \\\hline
1 & 1 & 2 & 3 & 4 & 5 & 6 \\
 2 & 2 & 3 & 1 & 5 & 6 & 4 \\
 3 & 3 & 1 & 2 & 6 & 4 & 5 \\
 4 & 4 & 5 & 6 & 3+4+6 & 1+4+5 & 2+5+6 \\
 5 & 5 & 6 & 4 & 1+4+5 & 2+5+6 & 3+4+6 \\
 6 & 6 & 4 & 5 & 2+5+6 & 3+4+6 & 1+4+5 \\
\end{array}
\eea

\bea
\begin{array}{c|cccccc}
   & 1 & 2 & 3 & 4 & 5 & 6 \\\hline
1 & 1 & 2 & 3 & 4 & 5 & 6 \\
 2 & 2 & 1 & 6 & 4 & 5 & 3 \\
 3 & 3 & 6 & 1+4+5+6 & 3+6 & 3+6 & 2+3+4+5 \\
 4 & 4 & 4 & 3+6 & 1+2+5 & 4+5 & 3+6 \\
 5 & 5 & 5 & 3+6 & 4+5 & 1+2+4 & 3+6 \\
 6 & 6 & 3 & 2+3+4+5 & 3+6 & 3+6 & 1+4+5+6 \\
\end{array}
\eea

\bea
\begin{array}{c|cccccc}
   & 1 & 2 & 3 & 4 & 5 & 6 \\\hline
1 & 1 & 2 & 3 & 4 & 5 & 6 \\
 2 & 2 & 3 & 1 & 5 & 6 & 4 \\
 3 & 3 & 1 & 2 & 6 & 4 & 5 \\
 4 & 4 & 5 & 6 & 3+4+5+6 & 1+4+5+6 & 2+4+5+6 \\
 5 & 5 & 6 & 4 & 1+4+5+6 & 2+4+5+6 & 3+4+5+6 \\
 6 & 6 & 4 & 5 & 2+4+5+6 & 3+4+5+6 & 1+4+5+6 \\
 \end{array}
\eea

\bea
\begin{array}{c|cccccc}
   & 1 & 2 & 3 & 4 & 5 & 6 \\\hline
 1 & 1 & 2 & 3 & 4 & 5 & 6 \\
 2 & 2 & 2+3+4+5 & 1+2+3+6 & 3 & 3+5+6 & 2+5+6 \\
 3 & 3 & 1+2+3+6 & 2+3+4+5 & 2 & 2+5+6 & 3+5+6 \\
 4 & 4 & 3 & 2 & 1 & 6 & 5 \\
 5 & 5 & 3+5+6 & 2+5+6 & 6 & 1+2+3 & 2+3+4 \\
 6 & 6 & 2+5+6 & 3+5+6 & 5 & 2+3+4 & 1+2+3 \\
\end{array}
\eea

\bea
\begin{array}{c|cccccc}
   & 1 & 2 & 3 & 4 & 5 & 6 \\\hline
1 & 1 & 2 & 3 & 4 & 5 & 6 \\
 2 & 2 & 1 & 6 & 5 & 4 & 3 \\
 3 & 3 & 6 & 1+4+5 & 3+5+6 & 3+4+6 & 2+4+5 \\
 4 & 4 & 5 & 3+5+6 & 1+4+5+6 & 2+3+4+5 & 3+4+6 \\
 5 & 5 & 4 & 3+4+6 & 2+3+4+5 & 1+4+5+6 & 3+5+6 \\
 6 & 6 & 3 & 2+4+5 & 3+4+6 & 3+5+6 & 1+4+5 \\
\end{array}
\eea
\noindent 
Their total quantum dimensions are $\mathcal{D}^2 \approx$ 18.9282, 18.9282, 18.9282, 18.9282, 20.4853, 25.5826, 35.725, 36.7792, and 36.7792, respectively. In all cases $\Com(A)^2  =\Com(A)^\infty= \{1,2,3,4,5,6\}$.

\subsection{Rank 7 }

We finally discuss the case of rank 7, for which there are a total of five multiplicity-1 fusion algebras with conjugate pair length greater than 1 (in particular, 2). Their fusion rules are as follows,

\bea
\begin{array}{c|ccccccc}
   & 1 & 2 & 3 & 4 & 5 & 6 & 7 \\\hline
1 & 1 & 2 & 3 & 4 & 5 & 6 & 7 \\
 2 & 2 & 7 & 1 & 6 & 5 & 4 & 3 \\
 3 & 3 & 1 & 7 & 6 & 5 & 4 & 2 \\
 4 & 4 & 6 & 6 & 1+5+6+7 & 4+6 & 2+3+4+5 & 4 \\
 5 & 5 & 5 & 5 & 4+6 & 1+2+3+7 & 4+6 & 5 \\
 6 & 6 & 4 & 4 & 2+3+4+5 & 4+6 & 1+5+6+7 & 6 \\
 7 & 7 & 3 & 2 & 4 & 5 & 6 & 1 \\
\end{array}
\eea

\bea
\begin{array}{c|ccccccc}
   & 1 & 2 & 3 & 4 & 5 & 6 & 7 \\\hline
1 & 1 & 2 & 3 & 4 & 5 & 6 & 7 \\
 2 & 2 & 1 & 7 & 6 & 5 & 4 & 3 \\
 3 & 3 & 7 & 1 & 6 & 5 & 4 & 2 \\
 4 & 4 & 6 & 6 & 1+5+6+7 & 4+6 & 2+3+4+5 & 4 \\
 5 & 5 & 5 & 5 & 4+6 & 1+2+3+7 & 4+6 & 5 \\
 6 & 6 & 4 & 4 & 2+3+4+5 & 4+6 & 1+5+6+7 & 6 \\
 7 & 7 & 3 & 2 & 4 & 5 & 6 & 1 \\
\end{array}
\eea

\bea
\begin{array}{c|ccccccc}
   & 1 & 2 & 3 & 4 & 5 & 6 & 7 \\\hline
 1 & 1 & 2 & 3 & 4 & 5 & 6 & 7 \\
 2 & 2 & 1 & 7 & 6 & 5 & 4 & 3 \\
 3 & 3 & 7 & 1+6+7 & 5+7 & 4+5+6 & 3+5 & 2+3+4 \\
 4 & 4 & 6 & 5+7 & 1+4+6 & 3+5+7 & 2+4+6 & 3+5 \\
 5 & 5 & 5 & 4+5+6 & 3+5+7 & 1+2+3+4+6+7 & 3+5+7 & 4+5+6 \\
 6 & 6 & 4 & 3+5 & 2+4+6 & 3+5+7 & 1+4+6 & 5+7 \\
 7 & 7 & 3 & 2+3+4 & 3+5 & 4+5+6 & 5+7 & 1+6+7 \\
 \end{array}
\eea

\bea
\begin{array}{c|ccccccc}
   & 1 & 2 & 3 & 4 & 5 & 6 & 7 \\\hline
 1 & 1 & 2 & 3 & 4 & 5 & 6 & 7 \\
 2 & 2 & 1 & 7 & 6 & 5 & 4 & 3 \\
 3 & 3 & 7 & 1+5+6+7 & 4+6+7 & 3+7 & 3+4+6 & 2+3+4+5 \\
 4 & 4 & 6 & 4+6+7 & 1+3+5+7 & 4+6 & 2+3+5+7 & 3+4+6 \\
 5 & 5 & 5 & 3+7 & 4+6 & 1+2+5 & 4+6 & 3+7 \\
 6 & 6 & 4 & 3+4+6 & 2+3+5+7 & 4+6 & 1+3+5+7 & 4+6+7 \\
 7 & 7 & 3 & 2+3+4+5 & 3+4+6 & 3+7 & 4+6+7 & 1+5+6+7 \\
\end{array}
\eea

\bea
\begin{array}{c|ccccccc}
   & 1 & 2 & 3 & 4 & 5 & 6 & 7 \\\hline
 1 & 1 & 2 & 3 & 4 & 5 & 6 & 7 \\
 2 & 2 & 1 & 7 & 4 & 5 & 6 & 3 \\
 3 & 3 & 7 & 1+4+5+6+7 & 3+7 & 3+7 & 3+7 & 2+3+4+5+6 \\
 4 & 4 & 4 & 3+7 & 1+2+6 & 5+6 & 4+5 & 3+7 \\
 5 & 5 & 5 & 3+7 & 5+6 & 1+2+4 & 4+6 & 3+7 \\
 6 & 6 & 6 & 3+7 & 4+5 & 4+6 & 1+2+5 & 3+7 \\
 7 & 7 & 3 & 2+3+4+5+6 & 3+7 & 3+7 & 3+7 & 1+4+5+6+7 \\
\end{array}
\eea

\noindent
None of these can be realized by a fusion category. The total quantum dimensions are $\mathcal{D}^2 \approx$ 21.1231, 21.1231, 36.9706, 42, and 34.3852, respectively. In all cases $\Com(A)^2 =\Com(A)^\infty = \{1,2,3,4,5,6,7\}$.

\def\arxivfont{\rm}
\bibliographystyle{ytamsalpha}

\baselineskip=.97\baselineskip

\bibliography{ref}

\end{document}